\newtheorem{theorem}{Theorem}[section]
\newtheorem{definition}{Definition}[section]
\newtheorem{lemma}{Lemma}[section]
\newtheorem{prop}{Proposition}[section]
\newtheorem{corollary}{Corollary}[section]
\newtheorem{remark}{Remark}[section]
\newtheorem{open problem}{Open Problem}
\newtheorem*{proof}{Proof}
\numberwithin{equation}{section}
\title{A Lagrangian Approach to Optimal Randomization\thanks{An earlier version of this work, which this paper supersedes, circulated under the title ``A Lagrangian Approach to Optimal Lotteries in Non-Convex Economies''. We thank Matias Bayas-Erazo, Timo Boppart, Hal Cole, Albert Marcet, Ben Moll, Nicola Pavoni, Ned Prescott, Marek Pycia, Simon Scheidegger, Florian Scheuer, Liangjie Wu, and seminar participants at BSE Summer Forum, CRETA conference at Warwick, PKU, SED, SFI Research Days, and Zurich for helpful discussions. Kubler and Yang thank the Swiss National Science Foundation (\#10003091) for financial support. Shen and Zhou thank the support by the National Key R\&D Program of China with project number 2021YFA1001200, and the NSFC with grant number 12171013.}}
\author{Chengfeng Shen\footnote{Peking University. Email: \href{mailto:shencf1999@pku.edu.cn}{shencf1999@pku.edu.cn}} \and Felix K\"{u}bler\footnote{University of Zurich. Email: \href{mailto:fkubler@gmail.com}{fkubler@gmail.com}} \and Yucheng Yang\footnote{University of Zurich and SFI. Email: \href{mailto:yucheng.yang@uzh.ch}{yucheng.yang@uzh.ch}} \and Zhennan Zhou\footnote{Westlake University. Email: \href{mailto:zhouzhennan@westlake.edu.cn}{zhouzhennan@westlake.edu.cn}}}
\date{\vspace{0.5cm}
First version: April 22, 2025
\\ \vspace{0.25cm} This version: \today
}
\begin{document}
	%% ---- body text ---- %%
	\maketitle

 \onehalfspacing
 
\begin{abstract}
We develop an efficient method for solving non-convex constrained optimization problems that are pervasive in economics. The optimal solution to these problems often involves randomization. We employ a Lagrangian framework and prove that the value of the saddle point characterizing the optimal random solution equals the value of the deterministic dual problem. Our algorithm solves this dual via subgradient descent and recovers the optimal random solution directly from deterministic optima computed along the iterations. For many non-convex economic problems, our method is orders of magnitude faster than linear programming, making previously intractable lottery problems feasible. As an application, we solve for optimal Mirrleesian income taxation with multi-dimensional types. We show that heterogeneity in productivity and Frisch elasticity can make randomization welfare-improving over the optimal deterministic schedule.
\end{abstract}

\bigskip
        \noindent \textbf{Keywords:} Non-Convexities, Randomization, Lagrangian Iteration, Mirrleesian Optimal Taxation, Optimal Policy, Moral Hazard.
        
\noindent \textbf{JEL classification:} C61, C63, D61, D82.

\newpage

\section{Introduction}
Many interesting economic questions can be formulated as non-convex constrained optimization problems -- prominent examples include contracting under moral hazard \citep{mirrlees1999theory}, optimal taxation \citep{mirrlees1971exploration,cremer2001direct}, optimal fiscal policy \citep{aiyagari2002optimal}, or macro-prudential regulation \citep{davila2018pecuniary}. In these environments, deterministic solutions are both hard to compute -- first-order conditions are generally insufficient to characterize global optima -- and potentially suboptimal. A large literature has established that randomization can improve welfare when non-convexities are present, with applications ranging from stochastic tax schedules to indivisible labor supply and labor market programs (e.g., \cite{weiss1976desirability}; \cite{myerson1982optimal};  \cite{arnott1988randomization}; \cite{rogerson1988indivisible}; \cite{phelan1991computing}; \cite{brito1995randomization}; \cite{pavoni2007optimal}; \cite{garratt2023efficient}; \cite{citanna2024taxspots}).\footnote{Throughout the paper, we use the terms ``lottery solution'' and ``randomized solution'' interchangeably to refer to the optimal, possibly random, solution of a nonlinear optimization problem.}
Although explicit lottery contracts are uncommon, there are no economic arguments to exclude contracts involving a lottery. Many real-world institutions implement closely related devices, including stochastic audits and enforcement, randomized eligibility or assignment rules, and tie-breaking procedures. 
It is then a quantitative question to explore under which conditions randomized contracts can lead to large welfare gains.
  
Since \cite{townsend1987economic}, the standard approach to computing optimal lotteries has been linear programming (LP). 
However, LP methods suffer from a severe curse of dimensionality as the action and consumption space becomes large.
In this paper, we develop a new method to solve for optimal lottery solutions in a wide class of models with non-convexities. 
Our method is based on several key theoretical insights. First, we show that the lottery solution assigns positive probability only to the deterministic allocations that maximize the Lagrangian. Second, we show that this implies that the value of the saddle point of the Lagrangian of the lottery system must be the same as the value of the dual of the deterministic problem. This will be the same as the value of the primal problem if and only if no lottery solution yields a higher value than the deterministic solution. Using the Lagrangian dual to solve constrained optimization problems is a very common approach in applied mathematics (see, e.g., \cite{boyd2004convex}).   By reformulating the primal optimization problem into the Lagrangian dual problem, a difficult non-convex problem can be turned into a simpler problem that can be efficiently solved (see, e.g. \cite{lemarechal2001lagrangian}). We use sub-gradient descent to solve the dual problem iteratively. Our third theoretical insight is that the frequency of different optimal points along the iterations approximates the probability distribution of the optimal lottery solution. As in fictitious play for normal form games (see, e.g., \cite{fudenberg1998theory}), the (weighted) empirical distribution of past actions converges to the optimal distribution in the lottery solution. 
Computationally, this approach avoids optimizing over the high-dimensional probability simplex and instead repeatedly solves a deterministic Lagrangian subproblem that, in many economic environments, inherits decomposability and partial concavity, delivering large speed and memory gains over linear programming.

We first illustrate the method in a textbook principal agent problem with moral hazard as in \citet[Chapter 5]{salanie2005economics}. We use this simple example to give an intuitive explanation for our method: the Lagrangian maximization step selects a deterministic contract, the multipliers move to enforce incentive and participation constraints, and the algorithm cycles among a small set of contracts whose weighted frequencies converge to the optimal lottery. The example also clarifies the source of our computational advantage. The Lagrangian subproblem decomposes across outcome states and can exploit curvature in consumption, whereas the linear-programming formulation must optimize over an exponentially large space of lotteries.

Our main application studies optimal Mirrleesian taxation with multi-dimensional hidden types. 
Multidimensional heterogeneity is central to optimal taxation, yet it poses fundamental challenges: the first-order approach in one-dimensional screening, which relies on single-crossing and monotonicity to reduce global incentive constraints to local ones, does not extend to higher dimensions. All incentive constraints must be enforced globally, and the computation of optimal mechanisms becomes markedly harder \citep{judd2017optimal}.
We consider a setting with standard preferences over consumption and leisure, with heterogeneity in both productivity and labor supply elasticity. We show that nondegenerate optimal randomization arises naturally in this environment. Our Lagrangian iteration approach makes the problem computationally tractable and yields new economic predictions: optimal randomization reduces bunching, smooths consumption toward the full-information benchmark, and can reduce the welfare cost of private information substantially.
Since the Lagrangian decomposes across types, the complexity of our algorithm increases only polynomially in the number of types. This is in stark contrast to global constrained optimization methods for non-convex problems that are generally known to be NP-hard (e.g. \cite{sahni1974computationally}). The lottery relaxation together with a {\it decomposability} property makes the problem feasible.
Using the calibration from \cite{judd2017optimal} with 25 types and 600 incentive constraints, we illustrate the efficiency of our method and clarify the roles that lotteries play in the optimal solution.

\paragraph{Related Literature.} The seminal work by \cite{myerson1982optimal} established that allowing lotteries in principal-agent problems can enhance welfare and improve tractability.
Prescott and Townsend's (\cite{prescott1984general}, \cite{prescott1984pareto}) key methodological insight 
was to treat each individual’s choice set as the set of lotteries over allocations, which makes prices linear in lotteries.
This restores convexity and allows standard competitive equilibrium analysis to be applied to economies with moral hazard and adverse selection problems.
 
To investigate the potential quantitative importance of randomized contracts, one needs to be able to compute optimal contracts with lotteries. 
Most of the existing approaches for solving lottery problems in models with private information focus on improving linear programming techniques. For example, \citet{prescott2004computing} exploits the block-triangular structure of the constraint matrix and applies Dantzig-Wolfe decomposition to reduce memory and computational cost, while \citet{doepke2006dynamic} proposes additional simplifications tailored to specific environments. An alternative approach by \citet{su2007computation}, based on mathematical programming with equilibrium constraints (MPEC), improved the numerical accuracy in small-scale moral hazard problems. Our method advances the literature by introducing a novel solution method that transcends the linear programming framework, enabling the analysis of more complex and computationally demanding models that were previously infeasible.  Our Lagrangian iteration method offers significant computational advantages compared to traditional linear programming approaches. 

Our main application contributes to the literature on optimal tax design under private information. Previous work has shown that lotteries can arise in optimal tax schedules with one-dimensional heterogeneity (e.g. \cite{weiss1976desirability}, \cite{brito1995randomization}, \cite{gauthier2014value}). 
With one-dimensional heterogeneity in preferences over consumption, in the plausible case of decreasing absolute risk aversion, equilibrium lotteries could also be degenerate (see, e.g. \cite{kehoe2002lotteries} or \cite{hellwig2007undesirability}). More recent studies, such as \cite{judd2017optimal}, \cite{moser2019optimal}, and \cite{boerma2022bunching}, analyze optimal deterministic tax systems with multidimensional types. Our framework allows for a general analysis of optimal taxation with multidimensional heterogeneity and reveals that lotteries naturally arise for a range of parameter values.

It is established in convex analysis that the Lagrangian dual problem corresponds to the closed convexification of the perturbed primal problem (see Chapter XII in \cite{urruty1993convex}, and Theorem 2.2 and Lemma 2.1 in \cite{bi2020analysis}). In this paper, we demonstrate that the lottery relaxation provides a means of achieving this convexification and thereby provide economic foundations for convexification.
Lastly, our method builds on the subgradient descent literature for nondifferentiable convex optimization \citep{bertsekas2009convex,shor2012minimization,nedic2001convergence}, and combines it with a key theoretical insight: the value of the saddle point characterizing the optimal lottery equals the value of the dual of the deterministic problem. 
This connection allows us to construct the optimal lottery from the sequence of deterministic solutions generated along the Lagrangian iterations. The approach is related in spirit to algorithms used to compute mixed-strategy equilibria in mean-field games \citep[e.g.,][]{shen2023fictitious}, but is tailored to the structure of constrained planning problems.

\paragraph{Roadmap.} In  Section \ref{sec:method}, we describe the general setup, introduce the Lagrangian iteration method, and state our theorems. In Section \ref{sec:comper} we discuss the computational performance of our method and compare it with existing methods. In Section \ref{sec:pri-ag}, we illustrate the method using a textbook principal-agent problem. 
In Section \ref{sec:Mirrleesian}, we apply the method to an optimal tax problem with multi-dimensional heterogeneity. Section \ref{sec:concl} concludes. The proofs are collected in the Appendix \ref{app:proofs}. In Appendix \ref{app:complex}, we provide worst-case complexity bounds for our method and compare them to bounds from linear programming.

\section{Lagrangian Iteration}\label{sec:method}
This section introduces a general class of non-convex planning problems and presents the Lagrangian iteration method for computing the optimal lottery. We begin by stating the maximization problem in Section~\ref{sec3:general}. We then establish a key connection between the lottery problem and the deterministic Lagrangian in Section~\ref{sec3:Lagrange}, which motivates our approach and provides the basis for the algorithm. Next, Section~\ref{sec3:algo} presents the algorithm and Section~\ref{sec:theorem} proves its convergence properties.  Readers primarily interested in the method and its implementation may focus on Sections~\ref{sec3:general} and~\ref{sec3:algo}.
\subsection{General Setup}\label{sec3:general}
In this section, we present a general constrained optimization problem that subsumes many planning problems considered in economics. Let $A$ denote a finite set of actions, and let $C$ be a $n$-dimensional box of consumption vectors in the Euclidean space $\mathbb{R}^n$. 
The payoff function $f:A\times C \rightarrow \mathbb{R}$ is continuous. 
In the deterministic problem, the social planner allocates action and consumption to solve the following optimization problem:
\begin{equation}\label{math_pure_mh}
\begin{aligned}
&\max_{a \in A,\, c \in C} f(a, c),\\
\textbf{s.t. } &g_i(a, c) \le 0 \quad i \in \{1, \dots, m\},\\
&h_j(a, c) \le 0 \quad j \in \{1, \dots, \ell\},
\end{aligned}
\end{equation}
where the functions $ g_i(.), i=1,\ldots,m $, and $ h_j (.), j=1,\ldots,\ell $ are assumed to be continuous in $c$.\footnote{The reason why we distinguish between $g$-constraints and $ h$-constraints and why $A$ is assumed to be finite will become clear when we describe the lottery problem.}
In the rest of the paper, we refer to \eqref{math_pure_mh} as the \textit{deterministic problem}. A central challenge in such problems arises from the fact that the set of allocations that satisfy the constraints is generally non-convex. This non-convexity makes it computationally challenging to solve the problem since first order conditions can generally not be used. In examples where the set $C$ can also be taken to be approximated by a finite set, an optimal solution can be found by searching over all elements in $ A \times C $, but the computational costs of this method increase exponentially in the dimension of $ A \times C $.

Motivated by these difficulties, economists have explored the concept of  \textit{lottery solutions}, where the planner chooses a probability distribution over actions and consumption rather than a deterministic allocation \citep{prescott1984general, myerson1982optimal,arnott1988randomization}. This approach offers a way to relax the non-convexities and can improve the value of the objective function.
In this setting, instead of selecting a deterministic action-consumption pair from \(A \times C\) as in \eqref{math_pure_mh}, the social planner chooses a probability distribution over the set of possible outcomes, \(A \times C\). The economic model then dictates which constraints can be relaxed to having to hold only in expectations and which constraints have to hold for each realization of the lottery. In our abstract formulation, assume that the $g-$constraints can be relaxed while the $h-$constraints have to hold for each realization of $ a \in A$.

Formally, let \( \mathcal{P}(A \times C) \) denote the set of Borel probability measures over \(A \times C\). 
An element of \( \mathcal{P}(A \times C) \) is denoted by \(x(a, dc)\), which specifies the probability of choosing an action \(a \in A\) and allocating consumption \(c \in C\). Here, we assume that
the variable $a$ is discrete and the variable $c$ is continuous. 
The objective of the social planner in the lottery problem is to maximize the expected payoff, which can be written as:
$$
\sum_{a\in A}\int_{c\in C}f(a,c)x(a,dc).
$$
The $g$-constraints from the original maximization problem are now assumed to hold in expectations, i.e.
\[
\sum_{a \in A} \int_{c \in C} g_i(a, c) x(a, dc) \leq 0 \quad \forall i \in \{1, \dots, m\}.
\]
In economic models, one typical justification for this assumption is that there is a large number of identical agents and the $g-$constraints describe the resource constraint.
 In contrast, we assume that the $h$-constraints have to be satisfied for each action $a$ that has positive probability, i.e. 
for each action, $a$, in the support of $x$, the $h$-constraints must be satisfied in expectation over consumption.
\[
\int_{c \in C} h_j(a, c) x(a, dc) \leq 0 \quad \forall j \in \{1, \dots, \ell\}.
\]
Note that allowing for continuous actions $ a \in A $ would lead to measure-theoretic problems since classic measure theory defines marginal integration and addresses problems in terms of almost every $a\in A$ when $A$ is continuous, rather than considering every $a\in A$, which is required in our problem.

Thus, the lottery optimization problem for the social planner can be written as follows.
\begin{equation}\label{math_lot_mh}
\begin{aligned}
    &\max_{x \in \mathcal{P}(A \times C)} \sum_{a \in A} \int_{c \in C} f(a, c) x(a, dc), \\
    \text{s.t. } &\sum_{a \in A} \int_{c \in C} g_i(a, c) x(a, dc) \leq 0 \quad \forall i \in \{1, \dots, m\}, \\
    &\int_{c \in C} h_j(a, c) x(a, dc) \leq 0 \quad \forall a \in A, \, j \in \{1, \dots, \ell\}.
\end{aligned}
\end{equation}

Note that in this formulation, we do not require the marginal distributions of $ x \in {\mathcal P}(A \times C)$ across the components $A$ and $C$ to be  independent. This requirement is, of course, an important
part of the definition of a mixed strategy Nash equilibrium, where allowing for correlation significantly alters the set of equilibria; see \cite{aumann1974subjectivity}. 
In a social planning framework, it seems that not allowing for correlation would artificially restrict the set of possible contracts (see also \cite{myerson1982optimal} for a discussion in the context of the principal-agent problem). In the examples below, correlation is never used in the optimal solution, but this is a result and not an assumption.

\subsection{A Lagrange Approach}\label{sec3:Lagrange}

The essence of our method is to establish a connection between the deterministic system \eqref{math_pure_mh} and the lottery system \eqref{math_lot_mh}.
Given Lagrange parameters $\lambda,\gamma$, we define the Lagrangian function in the probability space $\mathcal{P}(A\times C)$ as
    \begin{equation}\label{defL}
    \begin{aligned}
    &L(x;\lambda,\gamma):=\\
    &\sum_{a\in A}\int_{c\in C}f(a,c)x(a,dc)-\sum_{i=1}^{m}\lambda_i\sum_{a\in A}\int_{c\in C}g_i(a,c)x(a,dc)-\sum_{j=1}^{\ell}\sum_{a\in A}\gamma_{j,a}\int_{c\in C}h_{j}(a,c)x(a,dc),
    \end{aligned}
    \end{equation}
We define the Lagrangian function in the pure strategy space $A\times C$ as
    \begin{equation}\label{defcalL}
\mathcal{L}(a,c;\lambda,\gamma):=f(a,c)-\sum_{i=1}^{m}\lambda_i g_i(a,c)-\sum_{j=1}^{\ell}\gamma_{j,a}h_j(a,c).
\end{equation}
The main result of this subsection is that under the assumption that Slater's condition holds, we have the following equivalence
\begin{equation}
    \label{dual-equivalence}
\max_{x \in {\cal P}(A\times C)} \min_{(\lambda,\gamma)\in\mathbb{R}_+^{m}\times \mathbb{R}_+^{\ell|A|}} L(x;\lambda,\gamma)=\min_{(\lambda,\gamma)\in\mathbb{R}_+^{m}\times \mathbb{R}_+^{\ell|A|}} \max_{a\in A,\,c\in C}\mathcal{L}(a,c;\lambda,\gamma),
    \end{equation}
bridging the gap between the lottery problem (the left-hand side) and the dual of the deterministic problem (the right-hand side).

\subsubsection{Prerequisites for a Lagrange approach}
One well-known obstacle to solving the deterministic problem \eqref{math_pure_mh} is that standard constraint qualifications, which are necessary for applying optimization methods from numerical analysis, often fail to hold at the solution (see, e.g., \cite{su2007computation} or \cite{judd2017optimal}). 
We therefore first demonstrate the validity of the Lagrange multiplier method for the lottery system \eqref{math_lot_mh}. This is a linear programming problem in $\mathcal{P}(A\times C)$ and hence a convex optimization problem in $\mathcal{P}(A\times C)$.
According to the Karush-Kuhn-Tucker Theorem, a convex optimization problem that satisfies Slater's conditions can be solved by the Lagrange multiplier method (see Theorem 1 in Section 8.3 in \cite{luenberger1997optimization} for details).\footnote{Because the decision variable is a probability measure, \eqref{math_lot_mh} is an infinite-dimensional linear program. In finite dimensions, strong duality for linear programs can be established without Slater-type conditions. In infinite dimensions, a constraint qualification is typically invoked to ensure existence of multipliers.}
The definition of the condition is as follows.
\begin{definition}\label{def:slater}{(Slater's condition)}
    We say that Slater's condition holds for the lottery system \eqref{math_lot_mh}, if the feasible set to \eqref{math_lot_mh} includes one inner point, i.e.,  there exists $x\in \mathcal{P}(A\times C)$ such that 
    $$
    \sum_{a\in A}\int_{c\in C}g_i(a,c)x(a,dc)<0,\quad \forall i\in\{1,\cdots,m\},
    $$
    and
    $$
    \int_{c\in C}h_{j}(a,c)x(a,dc)< 0, \quad \forall j\in\{1,\cdots, \ell\},\,a\in A.
    $$
    Note here both inequalities must hold strictly.
\end{definition}

 It is standard to show (see, e.g., Theorem 1 in Section 8.3 in \cite{luenberger1997optimization}\footnote{We need to verify the existence of the maximizer of the Lagrangian in order to apply the Theorem in \cite{luenberger1997optimization}. This follows directly from the fact that $\mathcal{P}(A\times C)$ is compact with respect to the weak* topology. The existence of the maximizer can then be deduced by the standard compactness argument. }) that Slater's condition implies that
 the solution of problem \eqref{math_lot_mh} $x^*\in \mathcal{P}(A\times C)$ exists, and that there exist Lagrangian multipliers $\lambda^*_i, \gamma^*_{j,a}\ge 0$ ($i\in\{1,\cdots,m\},\, j\in\{1,\cdots,l\},\,a\in A$)  such that
 $x^*$ is the maximizer of 
        $L(x;\lambda^*,\gamma^*)$.
Generally, it is difficult to verify whether Slater's condition holds in \eqref{math_lot_mh}. However, if we relax the constraints in \eqref{math_lot_mh} to 
\begin{equation}
\label{math_lot_mh_relax1}
\sum_{a\in A}\int_{c\in C}g_i(a,c)x(a,dc)\le \epsilon,\quad \forall i\in\{1,\cdots,m\},
\end{equation}
and
\begin{equation}
\label{math_lot_mh_relax2}
\int_{c\in C}h_{j}(a,c)x(a,dc)\le  \epsilon, \quad \forall j\in\{1,\cdots, \ell\},\,a\in A,
\end{equation}
for some $\epsilon>0$, then any feasible measure $x\in\mathcal{P}(A\times C)$ for the original lottery system \eqref{math_lot_mh} will strictly satisfy these two relaxed constraints. 
We therefore consider the following relaxed system.
\begin{equation}\label{math_lot_mh_relax}
    \max_{x \in \mathcal{P}(A \times C)} \sum_{a \in A} \int_{c \in C} f(a, c) x(a, dc) \mbox{ s.t. }
(\ref{math_lot_mh_relax1})  \mbox{ and }
    (\ref{math_lot_mh_relax2})
    \end{equation}

The Slater's condition for \eqref{math_lot_mh_relax} is then satisfied when there is at least one feasible point for the original lottery system \eqref{math_lot_mh}. We denote by $x^\epsilon$ the optimal solution to \eqref{math_lot_mh_relax}. The following theorem shows that $x^\epsilon$ will converge to $x^*$, which is the optimal solution to the original lottery system \eqref{math_lot_mh}.
\begin{theorem}\label{thm:relax}
    Assume that there exists at least one feasible point for the original lottery problem \eqref{math_lot_mh}, hence for any $\epsilon>0$, the solution to the relaxed lottery problem \eqref{math_lot_mh_relax} exists, denoted as $x^\epsilon$. Then we can choose a sequence $\{\epsilon_n\}\rightarrow 0$, s.t. $x^{\epsilon_n}$ converges in the weak* topology in the finite Borel measures space on $A\times C$, denoted as $\mathcal{M}(A\times C)$, to some $x^*\in \mathcal{P}(A\times C)\subset\mathcal{M}(A\times C)$ as $n\rightarrow \infty$,  i.e. for any $\varphi\in C^0(A\times C)$, we have
    $$
    \sum_{a\in A}\int_{c\in C}\varphi(a,c)x^{\epsilon_n}(a,dc)\rightarrow  \sum_{a\in A}\int_{c\in C}\varphi(a,c)x^*(a,dc), \text{\quad as }n\rightarrow
    \infty.
    $$
    Furthermore, $x^*$ is an optimal solution to \eqref{math_lot_mh}.
\end{theorem}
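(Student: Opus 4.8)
The plan is to combine a compactness argument with the *-weak continuity of the objective and constraint functionals. Since $A$ is finite and $C$ is compact, the product $A\times C$ is a compact metric space, so $C^0(A\times C)$ is separable and, by the Riesz representation theorem together with Banach--Alaoglu, the closed unit ball of its dual $\mathcal{M}(A\times C)$ is sequentially compact in the *-weak topology; in particular $\mathcal{P}(A\times C)$ is *-weak sequentially compact. All of the linear functionals appearing in \eqref{math_lot_mh_relax}, e.g. $x\mapsto\sum_{a\in A}\int_{c\in C}g_i(a,c)x(a,dc)$, are *-weak continuous because $g_i,h_j,f\in C^0(A\times C)$.

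First I would establish existence of $x^\epsilon$. The feasible set of \eqref{math_lot_mh_relax} is nonempty, since any feasible point of \eqref{math_lot_mh} satisfies the relaxed constraints because $0\le\epsilon$. As each relaxed constraint functional is *-weak continuous, the feasible set is a *-weak closed subset of the *-weak compact set $\mathcal{P}(A\times C)$, hence itself compact; the objective is *-weak continuous, so it attains its maximum, yielding $x^\epsilon$.

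Next, I would take any sequence $\epsilon_n\to 0$ with maximizers $x^{\epsilon_n}\in\mathcal{P}(A\times C)$ and, by sequential *-weak compactness, pass to a subsequence (which I relabel as $\epsilon_n$) converging *-weakly to some $x^*\in\mathcal{M}(A\times C)$. I then check $x^*\in\mathcal{P}(A\times C)$: nonnegativity follows by testing against nonnegative $\varphi\in C^0(A\times C)$, and the total mass is preserved by testing against the constant function $\varphi\equiv 1$, which lies in $C^0(A\times C)$ precisely because $A\times C$ is compact, giving $x^*(A\times C)=\lim_n x^{\epsilon_n}(A\times C)=1$. Passing to the limit in each relaxed constraint then gives $\sum_{a\in A}\int_{c\in C}g_i(a,c)x^*(a,dc)\le\lim_n\epsilon_n=0$ and $\int_{c\in C}h_j(a,c)x^*(a,dc)\le 0$ for every $a$, so $x^*$ is feasible for \eqref{math_lot_mh}.

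Finally, for optimality, let $v^*$ and $v^{\epsilon_n}$ denote the optimal values of \eqref{math_lot_mh} and \eqref{math_lot_mh_relax}. Since the relaxed feasible set contains the original one, $v^{\epsilon_n}\ge v^*$ for all $n$. Along the chosen subsequence, *-weak continuity of the objective gives that its value at $x^*$ equals $\lim_n v^{\epsilon_n}\ge v^*$; but $x^*$ is feasible for \eqref{math_lot_mh}, so its objective value is also $\le v^*$. Hence the objective at $x^*$ equals $v^*$ and $x^*$ solves \eqref{math_lot_mh}. I expect the main obstacle to be the mass-preservation step, i.e. verifying that the *-weak limit remains a probability measure rather than a strict sub-probability measure; this is exactly where compactness of $C$ is indispensable, as it rules out any escape of mass and makes the constant test function admissible. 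The remaining arguments are routine passages to the limit that rely only on continuity of $f$, $g_i$, and $h_j$.
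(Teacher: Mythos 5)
Your proposal is correct and follows essentially the same route as the paper's proof: Banach--Alaoglu sequential compactness of $\mathcal{P}(A\times C)$ in the *-weak topology, extraction of a convergent subsequence, verification that the limit is a feasible probability measure by testing against $\varphi\equiv 1$, $g_i$, and the functions equal to $h_j(a',\cdot)$ on the slice $a=a'$ and zero elsewhere, and then optimality via the nesting of the original feasible set inside the relaxed one. The only cosmetic differences are that you argue optimality directly through $v^{\epsilon_n}\ge v^*$ while the paper argues by contradiction, and that you supply the (routine) compactness proof of existence of $x^\epsilon$, which the paper delegates to Slater's condition and its Theorem~\ref{thm:slat_lag}.
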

\begin{proof}
    See Appendix \ref{app:thm:relax}.
\end{proof}

Theorem \ref{thm:relax} shows that one can approximate the solution to system \eqref{math_lot_mh} by solving the relaxed problem \eqref{math_lot_mh_relax} with a sufficiently small $\epsilon>0$. Importantly, the existence of a feasible point for the original system \eqref{math_lot_mh} — a very mild assumption — implies that the relaxed problem satisfies Slater's condition, which in turn guarantees the existence of Lagrangian multipliers. Hence, the Lagrange multiplier method applies to the relaxed problem. Consequently, in the remainder of this paper, we directly assume that Lagrange multipliers exist for the original system \eqref{math_lot_mh}. This assumption is not restrictive: if it fails, we can instead rely on the existence of a feasible point and compute an approximate solution via the $\epsilon$-relaxed problem, as guaranteed by Theorem \ref{thm:relax}. Unfortunately, as is typically the case in computational economics, we cannot establish a direct relationship between the $\epsilon$ solutions and the exact solution.

 \subsubsection{The Relation between the Lottery Problem and the Deterministic Problem} 

To understand why (\ref{dual-equivalence}) holds, we first establish that for given $ \lambda,\gamma$, the optimal solutions to the maximization of $L(.;\lambda,\gamma) $ and $ {\cal L}(.;\lambda,\gamma)$ have the same value in the following theorem.

\begin{theorem}\label{thm:simp_L} Given $\lambda\ge 0,\,\gamma\ge 0$. Let $L$ and $\mathcal{L}$ be defined as in \eqref{defL} and \eqref{defcalL} respectively. Then we have
\begin{equation}\label{eq:thm:sim_L_1}
\max_{x\in \mathcal{P}(A\times C)}L(x;\lambda,\gamma)=\max_{a\in A,c\in C}\mathcal{L}(a,c;\lambda,\gamma).
\end{equation}
Furthermore, if we define $Z=\arg\max_{a\in A,c\in C}\mathcal{L}(a,c;\lambda,\gamma)$, then  $$x^*\in \arg\max_{x\in\mathcal{P}(A\times C)}L(x;\lambda,\gamma)$$ if and only if the measure of $Z^c$ with respect to $x^*$ is zero, i.e.
\begin{equation}\label{eq:thm:sim_L_2}
x^*(Z^c)=0.
\end{equation}
\end{theorem}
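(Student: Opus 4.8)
The plan is to reduce everything to a single algebraic identity: the probability-space Lagrangian $L$ is exactly the expectation of the pure-strategy Lagrangian $\mathcal{L}$ under $x$. First I would regroup the terms in the definition \eqref{defL}. Since the multiplier $\lambda_i$ does not depend on $a$, and the multiplier $\gamma_{j,a}$ is attached to the very same action $a$ over which the inner integral is taken, all three sums collapse into a single integral against $x(a,dc)$, giving
\begin{equation*}
L(x;\lambda,\gamma)=\sum_{a\in A}\int_{c\in C}\mathcal{L}(a,c;\lambda,\gamma)\,x(a,dc).
\end{equation*}
This is the crucial step; the rest is essentially bookkeeping. I would also note that $\mathcal{L}(\cdot;\lambda,\gamma)$ is continuous on $A\times C$ (because $A$ is finite, $C$ is compact, and $f,g_i,h_j$ are continuous), so the maximum $M:=\max_{a\in A,\,c\in C}\mathcal{L}(a,c;\lambda,\gamma)$ is attained, and the set $Z=\mathcal{L}^{-1}(\{M\})$ is nonempty and closed, hence Borel.

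For the value identity \eqref{eq:thm:sim_L_1} I would establish two inequalities. For the upper bound, since $\mathcal{L}(a,c;\lambda,\gamma)\le M$ pointwise and $x$ is a probability measure, the integral representation yields $L(x;\lambda,\gamma)\le M\sum_{a\in A}\int_{c\in C}x(a,dc)=M$ for every $x\in\mathcal{P}(A\times C)$. For the lower bound, I would pick any maximizer $(a^\star,c^\star)\in Z$ and test with the Dirac mass $x=\delta_{(a^\star,c^\star)}$, for which $L(\delta_{(a^\star,c^\star)};\lambda,\gamma)=\mathcal{L}(a^\star,c^\star;\lambda,\gamma)=M$. Combining the two gives $\max_{x}L(x;\lambda,\gamma)=M$.

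For the characterization \eqref{eq:thm:sim_L_2}, I would subtract the identity above from $M$ to obtain, for any $x^*\in\mathcal{P}(A\times C)$,
\begin{equation*}
M-L(x^*;\lambda,\gamma)=\sum_{a\in A}\int_{c\in C}\bigl(M-\mathcal{L}(a,c;\lambda,\gamma)\bigr)\,x^*(a,dc),
\end{equation*}
where the integrand is nonnegative everywhere and strictly positive exactly on $Z^c$. Thus $x^*$ is a maximizer iff this integral vanishes, and since a nonnegative Borel function integrates to zero against a measure iff it equals zero almost everywhere, this is equivalent to $x^*$ assigning zero mass to the set where the integrand is positive, i.e. $x^*(Z^c)=0$. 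I do not expect a genuine obstacle here; the only point requiring mild care is the measurability of $Z$ and the standard ``zero integral iff zero almost everywhere'' fact, both of which are routine once $Z$ is identified as the closed preimage $\mathcal{L}^{-1}(\{M\})$. The entire substance of the argument lies in the opening rewriting of $L$ as an expectation of $\mathcal{L}$.
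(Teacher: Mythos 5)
Your proposal is correct and follows essentially the same route as the paper's proof: the key identity $L(x;\lambda,\gamma)=\sum_{a\in A}\int_{c\in C}\mathcal{L}(a,c;\lambda,\gamma)\,x(a,dc)$, the pointwise bound plus a Dirac measure at a maximizer for \eqref{eq:thm:sim_L_1}, and the ``nonnegative integrand with zero integral vanishes $x^*$-a.s.'' argument for \eqref{eq:thm:sim_L_2}. Your added remarks on attainment of the maximum and Borel measurability of $Z$ are minor refinements the paper leaves implicit.
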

\begin{proof}
    See Appendix \ref{app:thm:simp_L}.
\end{proof}

Similar to a mixed-strategy Nash equilibrium, here an optimal solution only puts weights on different actions that yield the same value of the objective function.
This implies directly that
$$ \min_{\lambda,\gamma} \max_{x\in \mathcal{P}(A\times C)}L(x;\lambda,\gamma)= \min_{\lambda,\gamma}\max_{a\in A,c\in C}\mathcal{L}(a,c;\lambda,\gamma). $$
Since the lottery problem is convex and strong duality holds, this implies that the lottery problem and the dual problem of the deterministic problem are essentially the same problem. To formalize this, it is useful to explicitly define the solution to the dual problem. 
\begin{definition}\label{def_dual_problem}
    We define the dual problem of the deterministic problem \eqref{math_pure_mh} as
    \begin{equation}\label{dual_problem}
        \inf_{(\lambda,\gamma)\in\mathbb{R}_+^{m}\times \mathbb{R}_+^{\ell |A|}} \max_{a\in A,\,c\in C}\mathcal{L}(a,c;\lambda,\gamma),
    \end{equation}
    where $\mathcal{L}$ is defined in \eqref{defcalL}. If the infimum is attained at $(\lambda^*,\gamma^*)$, then we call $(\lambda^*,\gamma^*)$ a solution to the dual problem \eqref{dual_problem}.
\end{definition}
The following theorem shows that these two problems share the same Lagrangian multipliers and the same optimal value.

\begin{theorem}\label{thm:dual=lot} We assume that $x^*$ is the solution to system \eqref{math_lot_mh}, and the Lagrangian multipliers corresponding to $x^*$ exist, denoted as $(\lambda^*,\gamma^*)\in \mathbb{R}^{m}\times\mathbb{R}^{\ell |A|}$. Then $(\lambda^*,\gamma^*)$ is the solution to the dual problem of the deterministic problem, i.e.
    \begin{equation}\label{dual}
       \max_{a\in A,\,c\in C} \mathcal{L}(a,c;\lambda^*,\gamma^*)=\inf_{(\lambda,\gamma)\in\mathbb{R}_+^{m}\times \mathbb{R}_+^{\ell |A|}} \max_{a\in A,\,c\in C}\mathcal{L}(a,c;\lambda,\gamma),
    \end{equation}
where $\mathcal{L}$ is defined in \eqref{defcalL}. Furthermore, the optimal objective value of the dual problem \eqref{dual_problem} is the same as the optimal objective value of the lottery problem \eqref{math_lot_mh}, i.e.
\begin{equation}\label{dual2}
 \inf_{(\lambda,\gamma)\in\mathbb{R}_+^{m}\times \mathbb{R}_+^{\ell |A|}} \max_{a\in A,\,c\in C}\mathcal{L}(a,c;\lambda,\gamma)=\sum_{a\in A}\int_{c\in C}f(a,c)x^*(a,dc).
\end{equation}
\begin{proof}
    See Appendix \ref{app:dual=lot}.
\end{proof}
\end{theorem}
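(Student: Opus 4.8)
The plan is to combine the KKT characterization from Theorem~\ref{thm:slat_lag} with the value identity of Theorem~\ref{thm:simp_L}, after which the statement reduces to a standard weak/strong duality argument. Write $V^* := \sum_{a\in A}\int_{c\in C} f(a,c)\,x^*(a,dc)$ for the optimal value of the lottery problem.

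First I would evaluate the lottery Lagrangian $L$ at the saddle point. Since Theorem~\ref{thm:slat_lag} guarantees that $x^*$ maximizes $L(\cdot\,;\lambda^*,\gamma^*)$ and that the complementary slackness conditions hold, the two multiplier terms $\sum_{i}\lambda_i^*\sum_{a}\int g_i\,x^*$ and $\sum_{j,a}\gamma_{j,a}^*\int h_j\,x^*$ both vanish, so $L(x^*;\lambda^*,\gamma^*)=V^*$. Invoking the value identity \eqref{eq:thm:sim_L_1} of Theorem~\ref{thm:simp_L} at $(\lambda^*,\gamma^*)$ then gives $\max_{a\in A,c\in C}\mathcal{L}(a,c;\lambda^*,\gamma^*)=\max_{x}L(x;\lambda^*,\gamma^*)=V^*$. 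This already yields one side of the dual relation, namely $\inf_{(\lambda,\gamma)}\max_{a,c}\mathcal{L}\le V^*$, since the infimum cannot exceed the value attained at $(\lambda^*,\gamma^*)$.

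Next I would establish the reverse inequality by weak duality. Fix any $(\lambda,\gamma)\ge 0$ and evaluate $L(x^*;\lambda,\gamma)$. Because $x^*$ is feasible for \eqref{math_lot_mh}, we have $\sum_{a}\int g_i\,x^*\le 0$ for every $i$ and $\int h_j\,x^*\le 0$ for every $j$ and $a$; since $\lambda,\gamma\ge0$, the subtracted terms are nonnegative, so $L(x^*;\lambda,\gamma)\ge V^*$. Applying \eqref{eq:thm:sim_L_1} once more gives $\max_{a,c}\mathcal{L}(a,c;\lambda,\gamma)=\max_{x}L(x;\lambda,\gamma)\ge L(x^*;\lambda,\gamma)\ge V^*$, and taking the infimum over $(\lambda,\gamma)\ge0$ yields $\inf_{(\lambda,\gamma)}\max_{a,c}\mathcal{L}\ge V^*$. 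Combining the two bounds gives $\inf_{(\lambda,\gamma)}\max_{a,c}\mathcal{L}(a,c;\lambda,\gamma)=V^*$, which is exactly \eqref{dual2}; and because this value is attained at $(\lambda^*,\gamma^*)$ by the first step, $(\lambda^*,\gamma^*)$ solves the dual, establishing \eqref{dual}.

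I do not expect a serious obstacle, since the essential work is carried by Theorem~\ref{thm:simp_L}, which collapses the maximization over probability measures to a maximization over pure strategies; the remaining steps are the routine weak-duality estimate together with complementary slackness. The only points requiring minor care are the bookkeeping of sign conventions in $L$ and $\mathcal{L}$, and the observation that it is \emph{feasibility} of $x^*$ (rather than its optimality) that drives the weak-duality inequality in the third paragraph, while optimality of $x^*$ and the existence of the multipliers are what make the bound tight in the second.
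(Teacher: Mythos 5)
Your proposal is correct and follows essentially the same route as the paper's proof: both reduce $\mathcal{L}$ to $L$ via Theorem~\ref{thm:simp_L}, obtain the lower bound $\inf\max \ge V^*$ from feasibility of $x^*$ (weak duality), and obtain the upper bound from the KKT maximizer property of $x^*$ together with complementary slackness at $(\lambda^*,\gamma^*)$. The only cosmetic difference is that you bound $\max_x L(x;\lambda,\gamma)\ge L(x^*;\lambda,\gamma)\ge V^*$ pointwise in $(\lambda,\gamma)$ and then take the infimum, whereas the paper routes the same estimate through the general inf-sup inequality; the content is identical.
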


The theorems imply the following corollary that gives general conditions for nondegenerate lotteries to be optimal. 
\begin{corollary}\label{cor:nondelot}
    The Lagrangian function in $\mathcal{P}(A\times C)$, $L$, admits a nondegenerate lottery maximizer and no deterministic maximizer if and only if one of the following holds.
    \begin{enumerate}
    \item The Lagrangian function in $A\times C$, $\mathcal{L}$, has at least two different maximal points at the optimal $ \lambda^*,\gamma^* $, and none of these points are feasible.
    \item The deterministic saddle point problem does not satisfy strong duality, i.e.
    
$$   \sup_{a\in A,\,c\in C}  \inf_{(\lambda,\gamma)\in\mathbb{R}_+^{m}\times \mathbb{R}_+^{\ell |A|}} \mathcal{L}(a,c;\lambda,\gamma) \ne   \inf_{(\lambda,\gamma)\in\mathbb{R}_+^{m}\times \mathbb{R}_+^{\ell |A|}} \sup_{a\in A,\,c\in C}\mathcal{L}(a,c;\lambda,\gamma). $$
    \end{enumerate}
\end{corollary}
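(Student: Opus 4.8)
The plan is to read the corollary entirely through Theorem~\ref{thm:simp_L}, which pins down the maximizers of $L(\cdot;\lambda^*,\gamma^*)$ exactly. Writing $Z=\arg\max_{a\in A,c\in C}\mathcal{L}(a,c;\lambda^*,\gamma^*)$, a measure maximizes $L(\cdot;\lambda^*,\gamma^*)$ if and only if it is supported on $Z$ (i.e. $x(Z^c)=0$). The first thing I would record is the clean equivalence that a \emph{nondegenerate} maximizer of $L$ exists if and only if $Z$ contains at least two distinct points, which is precisely condition~1. Indeed, if $|Z|\ge 2$ I would take two distinct points of $Z$ and place mass $1/2$ on each: this measure is supported on $Z$, hence maximizes $L(\cdot;\lambda^*,\gamma^*)$ by Theorem~\ref{thm:simp_L}, and is nondegenerate. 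Conversely, any nondegenerate maximizer is supported on $Z$, and a nondegenerate measure has at least two points in its support, forcing $|Z|\ge 2$.

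With this equivalence in hand, the only remaining work is to show that condition~2 is subsumed by condition~1, so that the disjunction ``1 or 2'' collapses to condition~1 and hence to the existence of a nondegenerate maximizer. First I would identify the two sides of condition~2 with familiar quantities. Since $\inf_{(\lambda,\gamma)\ge 0}\mathcal{L}(a,c;\lambda,\gamma)$ equals $f(a,c)$ when $(a,c)$ is feasible for \eqref{math_pure_mh} and $-\infty$ otherwise, the left-hand side $\sup_{a,c}\inf_{\lambda,\gamma}\mathcal{L}$ is the optimal value $V_{\det}$ of the deterministic problem \eqref{math_pure_mh}, while the right-hand side is the dual value, which by Theorem~\ref{thm:dual=lot} equals the optimal lottery value $\sum_{a\in A}\int_{c\in C} f(a,c)x^*(a,dc)$. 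Thus condition~2 asserts a strict duality gap, $V_{\det}<\sum_{a\in A}\int_{c\in C} f(a,c)x^*(a,dc)$.

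To close the argument I would proceed by contradiction: suppose condition~2 holds but $|Z|=1$, say $Z=\{(a_0,c_0)\}$. The optimal lottery $x^*$ maximizes $L(\cdot;\lambda^*,\gamma^*)$ by Theorem~\ref{thm:slat_lag}, so by Theorem~\ref{thm:simp_L} it is supported on $Z$ and must equal the Dirac mass $\delta_{(a_0,c_0)}$. But feasibility of $x^*$ for the lottery problem \eqref{math_lot_mh} then reduces to $g_i(a_0,c_0)\le 0$ and $h_j(a_0,c_0)\le 0$, i.e. $(a_0,c_0)$ is feasible for the deterministic problem, whence $V_{\det}\ge f(a_0,c_0)=\sum_{a\in A}\int_{c\in C} f(a,c)x^*(a,dc)$. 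This contradicts the strict gap, so $|Z|\ge 2$ and condition~1 holds. Assembling the pieces yields the biconditional: a nondegenerate maximizer exists $\Leftrightarrow$ condition~1 $\Leftrightarrow$ (condition~1 or condition~2).

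I expect the main obstacle to be the correct reading of ``nondegenerate lottery maximizer'' together with the bookkeeping in the contradiction step, where one must use precisely that a degenerate feasible lottery is the same object as a feasible point of the deterministic problem, so that a duality gap is incompatible with a single Lagrangian maximizer. The identification of the saddle values via Theorem~\ref{thm:dual=lot} is exactly what makes condition~2 fold into condition~1; everything else is a direct consequence of the support characterization in Theorem~\ref{thm:simp_L}.
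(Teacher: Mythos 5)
Your proof is correct and follows essentially the route the paper intends: the corollary is stated there without a separate proof, as a direct consequence of Theorem \ref{thm:simp_L} (maximizers of $L(\cdot;\lambda^*,\gamma^*)$ are exactly the probability measures putting no mass on $Z^c$, where $Z=\arg\max_{a\in A,\,c\in C}\mathcal{L}(a,c;\lambda^*,\gamma^*)$) and Theorem \ref{thm:dual=lot} (the dual value equals the optimal lottery value), which are precisely the two ingredients you combine, including the identification of the left-hand side of condition 2 with the deterministic optimal value. Your additional observation that condition 2 implies condition 1 under the paper's standing assumption that the lottery multipliers exist---so that the disjunction collapses to condition 1---is sound, and it is in fact what justifies the paper's later practice in Section \ref{sec:Mirrleesian} of ruling out lotteries by checking only that the maximizer of $\mathcal{L}$ at the optimal multipliers is unique (i.e., by invoking ``Result 1'' alone).
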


\subsection{The Lagrangian iteration algorithm}
\label{sec3:algo}
 We utilize the results above to construct an algorithm for solving the lottery problem. Theorem \ref{thm:simp_L} implies that for given $ (\lambda,\gamma) $ the Lagrangian optimization problem in the probability space $\mathcal{P}(A\times C)$,
$
\max_{x\in \mathcal{P}(A\times C)}L(x;\lambda,\gamma)
$
can be simplified to the Lagrangian optimization problem in the pure strategy space $A\times C$ and
Theorem \ref{thm:dual=lot} shows how the optimal $ \lambda^*, \gamma^* $ can be obtained from the deterministic problem:
  We define
    \begin{equation}\label{def:dualfunc}
    V(\lambda,\gamma):=\max_{a\in A,\,c\in C}\mathcal{L}(a,c;\lambda,\gamma),
    \end{equation}
and minimize $ V(\cdot) $ via  sub-gradient descent. 
\begin{definition}[Sub-gradient]\label{def:subgradient}
    We consider a convex function $F:\mathbb{R}^{m}_{+}\times \mathbb{R}_{+}^{\ell |A|}\rightarrow\mathbb{R}$. For any $(\lambda,\gamma)\in\mathbb{R}_{+}^{m}\times \mathbb{R}_{+}^{\ell |A|}$, the sub-gradient of $F$ at the point $(\lambda,\gamma)$ is defined as
    $$
    \partial F(\lambda,\gamma)=\{d\in \mathbb{R}^m\times \mathbb{R}^{\ell|A|}|\,F(\lambda',\gamma')\ge F(\lambda,\gamma)+\left(\lambda'-\lambda,\gamma'-\gamma\right)\cdot d,\,\forall (\lambda',\gamma')\in \mathbb{R}_+^m\times \mathbb{R}_+^{\ell|A|}\}.
    $$
\end{definition}
\begin{lemma}\label{lem:subgradient}
  The dual function $V$ defined in \eqref{def:dualfunc} is a convex function. Furthermore, for any $(\lambda,\gamma)\in\mathbb{R}_{+}^{m}\times \mathbb{R}_{+}^{\ell |A|}$ and for any $(a_\lambda,c_{\lambda})$ that maximize $\mathcal{L}(a,c;\lambda,\gamma)$, i.e.
    $$
    (a_{\lambda},c_{\lambda})\in\arg\max_{a\in A,\, c\in C}\mathcal{L}(a,c;\lambda,\gamma),
    $$
    the  following  is a negative sub-gradient of $V$ at $(\lambda,\gamma)$. 
    \begin{equation}\label{def_updating_direction}
    (\Delta \lambda,\,\Delta \gamma):=\begin{cases}
        \Delta \lambda_i= g_i(a_{\lambda},c_{\lambda}),&i\in\{1,\cdots,m\};\\
        \Delta \gamma_{j,a}=h_j(a_{\lambda},c_{\lambda}), &j\in\{1,\cdots,\ell\},\,a=a_{\lambda};\\
        \Delta\gamma_{j,a}=0,&j\in\{1,\cdots,\ell\},\,a\ne a_{\lambda.}
    \end{cases}
    \end{equation}

\end{lemma}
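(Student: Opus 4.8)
The plan is to establish the two claims separately, using the single structural fact that $\mathcal{L}(a,c;\lambda,\gamma)$ in \eqref{defcalL} is \emph{affine} in the multipliers $(\lambda,\gamma)$ for every fixed pure strategy $(a,c)$. Convexity of $V$ will then follow because a pointwise supremum of affine functions is convex, and the sub-gradient claim will follow by exhibiting the affine function attached to a maximizer as an explicit supporting minorant of $V$.

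First, for convexity, I would fix an arbitrary $(a,c)\in A\times C$ and note that $(\lambda,\gamma)\mapsto \mathcal{L}(a,c;\lambda,\gamma)$ is affine: the term $f(a,c)$ is a constant in $(\lambda,\gamma)$, while each $-\lambda_i g_i(a,c)$ and each $-\gamma_{j,a}h_j(a,c)$ is linear in the multipliers. Every such affine map is convex, and $V$ in \eqref{def:dualfunc} is their pointwise supremum over $(a,c)$, so $V$ is convex. I would also remark that the supremum is genuinely attained (so that ``$\max$'' is justified and a maximizer $(a_\lambda,c_\lambda)$ exists): $A$ is finite, $C$ is compact, and $\mathcal{L}(a,\cdot;\lambda,\gamma)$ is continuous on $C$ for each $a$ by the assumed continuity of $f,g_i,h_j$ in $c$.

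Second, for the sub-gradient claim, I would fix $(\lambda,\gamma)$ together with a maximizer $(a_\lambda,c_\lambda)$, so that $V(\lambda,\gamma)=\mathcal{L}(a_\lambda,c_\lambda;\lambda,\gamma)$. For any competing $(\lambda',\gamma')$ the defining maximum gives $V(\lambda',\gamma')\ge \mathcal{L}(a_\lambda,c_\lambda;\lambda',\gamma')$. Subtracting $V(\lambda,\gamma)=\mathcal{L}(a_\lambda,c_\lambda;\lambda,\gamma)$ and using affineness, the difference $\mathcal{L}(a_\lambda,c_\lambda;\lambda',\gamma')-\mathcal{L}(a_\lambda,c_\lambda;\lambda,\gamma)$ equals $-\sum_i(\lambda_i'-\lambda_i)g_i(a_\lambda,c_\lambda)-\sum_j(\gamma'_{j,a_\lambda}-\gamma_{j,a_\lambda})h_j(a_\lambda,c_\lambda)$, which, reading off $(\Delta\lambda,\Delta\gamma)$ from \eqref{def_updating_direction}, is exactly $-(\lambda'-\lambda,\gamma'-\gamma)\cdot(\Delta\lambda,\Delta\gamma)$. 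This delivers $V(\lambda',\gamma')\ge V(\lambda,\gamma)-(\lambda'-\lambda,\gamma'-\gamma)\cdot(\Delta\lambda,\Delta\gamma)$ for all $(\lambda',\gamma')$, which by Definition \ref{def:subgradient} is precisely the statement $-(\Delta\lambda,\Delta\gamma)\in\partial V(\lambda,\gamma)$, i.e.\ $(\Delta\lambda,\Delta\gamma)$ is a negative sub-gradient.

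The one step I would be most careful about — the only genuine subtlety here — is the bookkeeping of the $\gamma$-indices. Since $\gamma_{j,a}$ enters $\mathcal{L}(a,c;\cdot)$ only through the \emph{single} action $a$ actually being evaluated, the $\gamma$-part of the difference above involves only the coordinates $\gamma_{\cdot,a_\lambda}$ tied to the maximizing action. This is exactly why the prescribed direction sets $\Delta\gamma_{j,a}=0$ for all $a\ne a_\lambda$: doing so lets me rewrite $\sum_j(\gamma'_{j,a_\lambda}-\gamma_{j,a_\lambda})\Delta\gamma_{j,a_\lambda}$ as the full inner product $(\gamma'-\gamma)\cdot\Delta\gamma$ over all $(j,a)$ pairs without introducing spurious terms. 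Everything else is routine linear algebra, and no regularity beyond the stated continuity assumptions is required.
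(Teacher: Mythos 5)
Your proposal is correct, and its core computation is the same as the paper's: both arguments hinge on the inequality $V(\lambda',\gamma')\ge \mathcal{L}(a_\lambda,c_\lambda;\lambda',\gamma')$ together with the affineness of $\mathcal{L}(a_\lambda,c_\lambda;\cdot,\cdot)$ in the multipliers, and both handle the $\gamma$-indices by exploiting that only the coordinates $\gamma_{\cdot,a_\lambda}$ enter, which is exactly what the zero entries in \eqref{def_updating_direction} encode. The one structural difference is the treatment of convexity: you obtain it upfront from the standard fact that a pointwise supremum of affine functions is convex, whereas the paper first establishes the supporting-minorant inequality \eqref{sub_grad_prop} and then deduces convexity from it by a direct convex-combination argument (a function admitting an affine minorant that touches it at every point is convex). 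The two routes are logically equivalent here; yours is the more standard textbook order and has the small additional merit of explicitly justifying that the maximum in \eqref{def:dualfunc} is attained (finiteness of $A$, compactness of $C$, continuity of $f$, $g_i$, $h_j$), a point the paper's proof of this lemma leaves implicit.
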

\begin{proof}
    See Appendix \ref{app:lem:sub}.
\end{proof}

Assuming that the Lagrangian multipliers at the $k$-th iteration are denoted as $\lambda^k$ and $\gamma^k$, we first solve the unconstrained optimization problem in $A\times C$
$$(a^k,c^k) \in \arg\max_{a\in A,c\in C}\mathcal{L}(a,c;\lambda,\gamma).$$ 
We then update the multipliers using the projected sub-gradient descent method:
$$
\lambda_i^{k+1}=\max\{\lambda_i^k+\mu^kg_i(a^k,c^k),0\},\,\forall i\in\{1,\cdots,m\};
$$
$$
\gamma_{j,a^k}^{k+1}=\max\{\gamma_{j,a^k}^k+\mu^kh_{j}(a^k,c^k),0\},\,\forall j\in\{1,\cdots,\ell\};
$$
and
$$
\gamma_{j,a}^{k+1}=\gamma_{j,a}^k,\,\forall j\in\{1,\cdots,\ell\},\,a\ne a^k.
$$
These updates follow the negative subgradient of the dual function, with a projection onto the nonnegative orthant to enforce the non-negativity of multipliers as required by the KKT conditions. Intuitively, when a constraint is violated, its associated multiplier increases, tightening the penalty; when satisfied, the multiplier relaxes.

Finally, we collect all these pure strategies $(a^k,c^k)$ during the iterations to construct an approximate lottery solution as
\begin{equation}\label{eq:lottery_weight}
    x^N:=\frac{1}{\sum_{k=1}^{N}\mu^k}\sum_{k=1}^{N}\mu^k\delta_{(a^k,c^k)}.
\end{equation}
It seems surprising at first that the empirical frequencies of different actions along the iterations converge to the optimal probabilities.
The key to understanding this construction lies in the proof of Proposition  \ref{math_thm_heu_mh}, where it is shown that for every constraint $i$ and iteration $n$ 
$$
\lambda_i^{n+1}\ge \lambda_i^1+\sum_{k=1}^{n}\mu_kg_i(a^k,c^k)= \lambda_i^{1}+\left(\sum_{k=1}^{n}\mu^k\right)\sum_{a\in A}\int_{c\in C}g_i(a,c)x^n(a,dc),
$$
which directly implies that
$$
\sum_{a\in A}\int_{c\in C}g_i(a,c)x^n(a,dc)\le\frac{\lambda_i^{n+1}-\lambda_i^1}{\sum_{k=1}^{n}\mu^k}.
$$
The updating rule for the Lagrangian multipliers can be regarded as estimating the contribution to the constraints for accumulating $(a^k,c^k)$ as a new point in the support of the optimal lottery with the weight $\mu^k$. 
We formally present the full algorithm in Algorithm \ref{math_alg_mh}. \\

\begin{algorithm}[ht!]
\caption{Lagrangian Iteration Algorithm}
\label{math_alg_mh}
Given $\lambda_i^1(i\in \{1,\cdots,m\}),\,\gamma_{j,a}^1(a\in A,\,j\in\{1,\cdots,\ell\}),\,\mu^1,\mu^2,...\in\mathbb{R}_+$, $N\in \mathbb{N}_+$. 

\textbf{For $k=1:N$}

\quad\textbf{Step 1. Solve the Lagrangian problem.}
$$(a^k,c^k)\in\arg \max_{a\in A,c\in C} \mathcal{L}(a,c;\lambda^k,\gamma^k). $$

\quad\textbf{Step 2. Update the Lagrangian multipliers.}
$$
\lambda_i^{k+1}=\max\{\lambda_i^k+\mu^kg_i(a^k,c^k),0\},\,\forall i\in\{1,\cdots,m\}.
$$
$$
\gamma_{j,a^k}^{k+1}=\max\{\gamma_{j,a^k}^k+\mu^kh_{j}(a^k,c^k),0\},\,\forall j\in\{1,\cdots,\ell\}.
$$
$$
\gamma_{j,a}^{k+1}=\gamma_{j,a}^k,\,\forall j\in\{1,\cdots,\ell\}\,,a\ne a^k.
$$
\textbf{End} 

\textbf{Step 3. Construct the lottery solution.}
$$
x^N:=\frac{1}{\sum_{k=1}^{N}\mu^k}\sum_{k=1}^{N}\mu^k\delta_{(a^k,c^k)},
$$
where $\delta_{(a^k,c^k)}$ is the $\delta-$measure at the point $(a^k,c^k)$.
\end{algorithm}

It is important to note that there are $\ell|A|$ Lagrange multipliers associated with the constraints related to unobserved actions, as the system \eqref{math_lot_mh} requires the incentive constraints to be satisfied for all $a\in A$. However, in each iteration, we update only $\ell$ of these multipliers, specifically those corresponding to the form $\gamma_{\cdot, a^k}$. 

A key part of the algorithm consists of the maximization problem 
$$(a^k,c^k)\in\arg \max_{a\in A,c\in C} \mathcal{L}(a,c;\lambda^k,\gamma^k). $$
Since it is assumed that the set $A$ is finite, the feasibility of the algorithm is based on solving
$\max_{c\in C} \mathcal{L}(\bar a,c;\lambda^k,\gamma^k)$ for a given action $ \bar a \in A$. This is generally a nonconvex problem, and in the worst-case scenario, one needs to discretize the set $ C $ and search over all $ A \times C $. In this case, the algorithm's computational efficiency is comparable to linear programming (see Appendix \ref{app:complex}).
However, as we argue in Section \ref{sec:comper}, a central claim of this paper is that this maximization step is typically much more tractable in economic applications because standard economic structure can be exploited to solve it efficiently.

\subsection{Theoretical Properties of the Algorithm}\label{sec:theorem}

In this section, we establish that, under appropriate conditions on the learning rates $\mu^k$, the lottery solution $x^N$ generated by Algorithm \ref{math_alg_mh} provides an approximation to the optimal lottery solution, with a precise definition of ``approximation'' to follow. 
From our results above and general convergence results for subgradient algorithms (see, e.g. \cite{nedic2001convergence}), we can expect our algorithm to have the following convergence properties. The Lagrangian multiplier $(\lambda^k,\gamma^k)$ during iterations converges to the optimal Lagrange multiplier, $ (\lambda^*,\gamma^*)$; the value of the objective function for the constructed lottery converges to the optimal value of the objective function; and the constructed lotteries $ x^n $ converge to the optimal lottery. These results will be established formally in this section.

For any $\epsilon>0$, we first define the $\epsilon$-optimal solution to the system \eqref{math_lot_mh}, to describe a probability measure on $A\times C$, that approximately satisfies all the constraints, and approximately attains the maximal objective function value with error $\epsilon$. 
\begin{definition}{($\epsilon$-optimal solution)}\label{def:eps_opt_sol}
We denote by $x^*$ the optimal solution to system \eqref{math_lot_mh}. We call  $\tilde{x}^\epsilon\in\mathcal{P}(A\times C)$ an $\epsilon$-optimal solution to system \eqref{math_lot_mh}, if it satisfies the relaxed constraints (\ref{math_lot_mh_relax1}) and (\ref{math_lot_mh_relax2}) and if 
 \begin{equation}\label{eq:property3}
 \sum_{a\in A}\int_{c\in C}f(a,c)\tilde{x}^{\epsilon}(a,dc)-\sum_{a\in A}\int_{c\in C}f(a,c)x^*(a,dc)\ge -\epsilon.
 \end{equation}
\end{definition}

Note that $\tilde{x}^\epsilon$ is different from the solution to the relaxed system \eqref{math_lot_mh_relax}, since we do not require $\tilde{x}^{\epsilon}$ to exactly attain the maximal objective function value among all feasible probability measures for \eqref{math_lot_mh_relax}.  However, the property \eqref{eq:property3} in this definition actually implies that $\tilde{x}^{\epsilon}$ can be regarded as an approximate optimal solution to \eqref{math_lot_mh_relax} and we have the following analog to Theorem \ref{thm:relax}.
\begin{theorem}\label{thm:eps_optimal}
     We denote by $\tilde{x}^\epsilon$ an $\epsilon$-optimal solution to the system \eqref{math_lot_mh}. Then we can choose a sequence $\{\epsilon_n\}\rightarrow 0$, s.t. $\tilde{x}^{\epsilon_n}$ converges in the weak* topology in the finite Borel measures space on $A\times C$, denoted as $\mathcal{M}(A\times C)$, to some $\tilde{x}^*\in \mathcal{P}(A\times C)\subset \mathcal{M}(A\times C)$, as $n\rightarrow \infty$, i.e. for any $\varphi\in C^0(A\times C)$, we have
    $$
    \sum_{a\in A}\int_{c\in C}\varphi(a,c)\tilde{x}^{\epsilon_n}(a,dc)\rightarrow  \sum_{a\in A}\int_{c\in C}\varphi(a,c)\tilde{x}^*(a,dc),\,\text{as }n\rightarrow\infty.
    $$
    Furthermore, $\tilde{x}^*$ is an optimal solution to \eqref{math_lot_mh}.
\end{theorem}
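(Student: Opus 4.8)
The plan is to run the standard weak-$*$ compactness argument on the space of probability measures over the compact set $A\times C$, and then pass to the limit in each of the three inequalities defining an $\epsilon$-optimal solution. Since $A$ is finite and $C$ is compact, $A\times C$ is a compact metric space, so by the Riesz representation theorem the dual of $C^0(A\times C)$ is $\mathcal{M}(A\times C)$, and by Banach--Alaoglu the closed unit ball of $\mathcal{M}(A\times C)$ is weak-$*$ compact and (being a ball in a space with separable predual) sequentially weak-$*$ compact. I would take $\epsilon_n=1/n$ and consider $\{\tilde{x}^{\epsilon_n}\}_{n}$; since these are probability measures they lie in that ball, so a subsequence converges weak-$*$ to some limit $\tilde{x}^*\in\mathcal{M}(A\times C)$, which I relabel $\tilde{x}^{\epsilon_n}$.

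Next I would verify that the limit is a probability measure and is feasible for \eqref{math_lot_mh}. Testing against the constant $\varphi\equiv 1\in C^0(A\times C)$ gives $\tilde{x}^*(A\times C)=\lim_n \tilde{x}^{\epsilon_n}(A\times C)=1$, and testing against nonnegative $\varphi\ge 0$ gives $\int \varphi\,d\tilde{x}^*=\lim_n\int\varphi\,d\tilde{x}^{\epsilon_n}\ge 0$, so $\tilde{x}^*\ge 0$; hence $\tilde{x}^*\in\mathcal{P}(A\times C)$. Because each $g_i$ is continuous, property~1 in Definition~\ref{def:eps_opt_sol} passes to the limit: $\sum_{a}\int_c g_i(a,c)\,\tilde{x}^*(a,dc)=\lim_n \sum_{a}\int_c g_i(a,c)\,\tilde{x}^{\epsilon_n}(a,dc)\le \lim_n \epsilon_n=0$. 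For the per-action constraints I would use that, since $A$ carries the discrete topology, the map $(a',c)\mapsto \mathbf{1}_{\{a'=a\}}\,h_j(a',c)$ is continuous on $A\times C$ for each fixed $a,j$; applying weak-$*$ convergence to this test function and using property~2 yields $\int_c h_j(a,c)\,\tilde{x}^*(a,dc)\le 0$ for every $a\in A$ and $j$. Thus $\tilde{x}^*$ is feasible.

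Finally, optimality follows by the same limiting argument applied to the objective. Since $f$ is continuous, property~3 gives $\sum_{a}\int_c f(a,c)\,\tilde{x}^*(a,dc)=\lim_n \sum_{a}\int_c f(a,c)\,\tilde{x}^{\epsilon_n}(a,dc)\ge \sum_{a}\int_c f(a,c)\,x^*(a,dc)-\lim_n\epsilon_n$, so the value of $\tilde{x}^*$ is at least the optimal value of \eqref{math_lot_mh}. Since $\tilde{x}^*$ is feasible, its value cannot exceed the optimum, so the two coincide and $\tilde{x}^*$ is an optimal solution, completing the proof. This is the same structure as the proof of Theorem~\ref{thm:relax}.

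The argument is essentially routine once compactness is in place, so the only step requiring genuine care is the treatment of the $h_j$-constraints: a priori these are ``integrate $h_j$ only over the slice $\{a\}$,'' which is not manifestly of the form ``integrate a single continuous function on $A\times C$.'' The resolution---writing each per-action constraint as integration against the genuinely continuous test function $\mathbf{1}_{\{a'=a\}}h_j(a',c)$, which is legitimate precisely because $A$ is discrete---is where the finiteness of $A$ is used, and it dovetails with the measure-theoretic caveat in the text explaining why a continuum of actions would be problematic.
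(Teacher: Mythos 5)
Your proposal is correct and follows essentially the same route as the paper: weak-$*$ compactness of $\mathcal{P}(A\times C)$ via Alaoglu, passing the $g_i$- and $h_j$-constraints to the limit with continuous test functions (including the same indicator-times-$h_j$ trick that exploits finiteness of $A$), and then concluding optimality by taking limits in property~3 of Definition~\ref{def:eps_opt_sol}. The only differences are cosmetic: you spell out details the paper delegates to the proof of Theorem~\ref{thm:relax} (sequential compactness via separability, nonnegativity of the limit), which if anything makes the argument more self-contained.
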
 
\begin{proof}
    See Appendix \ref{app:eps_optimal}.
\end{proof}

We now examine the property of the lottery solution $x^N$ generated by Algorithm \ref{math_alg_mh} and have the following result.
\begin{theorem}\label{math_thm_final_mh}
    We assume that the sequence $(\mu^k)_{k=1}^\infty$ satisfies
$$
\sum_{k=1}^{\infty}\mu^k=\infty, \text{\quad and\quad}\sum_{k=1}^{\infty}(\mu^k)^2<\infty.
$$
Let $x^*$ be the solution to system \eqref{math_lot_mh}, and suppose the corresponding Lagrangian multipliers to $x^*$ exist. Then for any $\epsilon>0$, there exists $N\in \mathbb{N}_+$, such that when $n>N$, $x^n$ obtained in Algorithm \ref{math_alg_mh} is an $\epsilon-$optimal solution to system \eqref{math_lot_mh}.
\end{theorem}
\begin{proof}
    See Appendix \ref{app: math_thm_final_mh}.
\end{proof}
Theorem \ref{math_thm_final_mh} demonstrates that, under the appropriate assumptions on the learning rates $\mu^k$, the lottery we constructed by the algorithm is an approximation of the optimal lottery.

The final theoretical result derives a bound on the number of iterations needed to reach a given accuracy of the solution.
Let $\lambda=(\lambda_i)_{i\in\{1,\cdots,m\}}$ and $\gamma=(\gamma_{j,a})_{j\in\{1,\cdots,\ell\},\,a\in A}$. We define the function
$$
\Lambda(\lambda, \gamma) = \sum_{i=1}^{m} \lambda_i^2 + \sum_{j=1}^{\ell}\sum_{ a \in A} \gamma_{j, a}^2.
$$

Suppose that all the assumptions in Theorem \ref{math_thm_final_mh} hold, then  $(\lambda^k,\gamma^k)$ converges and hence remains bounded during the iterations in Algorithm \ref{math_alg_mh}. Additionally, we assume there exist two constants $M\ge 0, \bar{\Lambda}\ge 0$ such that $|g_i(a,c)|\le M,|h_{j}(a,c)|\le M$ and \(\|(\lambda^k,\gamma^k)\|_{\infty}+\Lambda(\lambda^1,\gamma^1)\le \bar{\Lambda} \) throughout the iterations. We use the symbol $x\succsim y$ to denote $x$ is greater than or equal to a constant multiple of $y$. The following proposition provides a rough estimate of the number of iterations required to obtain an $ \epsilon $-solution.
\begin{prop}\label{thm_complexity}
    We take all the assumptions in Theorem \ref{math_thm_final_mh}. Additionally, we assume that there exist two constants $M\ge 0, \bar{\Lambda}\ge 0$ such that $\|g_i\|_{\infty}\le M\,(i=1,\cdots,m),\,\|h_{j}\|_{\infty}\le M\,(j=1,\cdots,\ell),\,$ and \(\|(\lambda^k,\gamma^k)\|_{\infty}+\Lambda(\lambda^1,\gamma^1)\le \bar{\Lambda} \) throughout the iterations. We take $\mu^k\sim k^{-\frac{1}{2}(1+\rho)}$ for $0<\rho\le 1$. For $\epsilon>0$, if
    \begin{equation}\label{eq:n_itertime_mh}
n \succsim 
\begin{cases}
\left(\frac{M(\frac{1}{\rho}+\bar{\Lambda})}{\epsilon}\right)^{\frac{2}{1 - \rho}}(m+\ell)^{\frac{1}{1-\rho}}, & \rho < 1 \\
e^{\frac{M(\frac{1}{\rho}+\bar{\Lambda})\sqrt{m+\ell}}{\epsilon}}, & \rho = 1,
\end{cases}
\end{equation}
then the $x^n$ obtained from Algorithm \ref{math_alg_mh} is an $\epsilon$-optimal solution.
\end{prop}
\begin{proof}
    See Appendix \ref{app:thm_complexity}.
\end{proof}
From Equation (\ref{eq:n_itertime_mh}) it appears as if 
the learning rate parameter $ \rho $  should be set relatively close to zero. However,
by Proposition 2.8 in \cite{nedic2001convergence}, if we additionally assume that the dual problem is strongly convex in a neighborhood of the optimal Lagrangian multipliers, then for any \( \epsilon>0 \), we can select a learning rate satisfying \( \mu^k\sim O(1/k) \)  (i.e. with $ \rho=1$) such that when  
\begin{equation}\label{ncon:sub_gradient_conv}
n\succsim \frac{M^2(m+\ell)}{\epsilon^2},
\end{equation}
it follows that \( \text{dist}((\lambda^n,\gamma^n),\Lambda^*)<\epsilon \). 

In general, it is important to point out that sub-gradient descent is generally much less efficient than gradient descent applied to convex and smooth problems. In the latter case, the algorithm can be refined by using line-search or other methods and, in general, the number of iterations needed to achieve precision $ \epsilon$ is of the order $ \frac{1}{\epsilon}$ or even $ \frac{1}{\sqrt{\epsilon}} $ (see \cite{shor2012minimization}). If the optimal solution is deterministic, our problem becomes smooth (at least in a neighborhood of the optimal solution) and there are more efficient ways to find the optimal solution. In this paper, we focus mainly on the more challenging general case where the optimal solution involves lotteries.

\section{Computational Performance}\label{sec:comper}
In this section, we explain how the performance of our algorithm is superior to existing methods for solving economic problems with non-convexities. We first discuss worst-case complexity bounds for grid search, linear programming, and our method. We then explain how economic problems often contain additional structure that can only be exploited by our method. Finally, we provide a classic economic example to illustrate our claims.

\subsection{Computational Performance without Additional Structure}
It is well established at least since \cite{sahni1974computationally} that without any further assumptions, solving the maximization problem (\ref{math_pure_mh}) for a deterministic solution 
is not computationally tractable.
Linear programming methods are typically used to solve for a lottery solution to (\ref{math_pure_mh}), but there is little gain in terms of computational efficiency without any additional structure. To be precise, to use linear programming, we first need to discretize the set $C$ into a finite grid $\hat{C}$.  This transforms \eqref{math_lot_mh} into a finite-dimensional problem:
\begin{equation}\label{math_lot_mh_lp1}
\begin{aligned}
    &\max_{\hat{x}\in \mathcal{P}(A\times \hat{C})}\sum_{a\in A}\sum_{\,c\in \hat{C}}f(a,c)\hat{x}(a,c),\\
    \textbf{s.t. }&\sum_{a\in A}\sum_{c\in \hat{C}}g_i(a,c)\hat{x}(a,c)\le0\text{ }(i\in\{1,\cdots,m\}),\\
    &\sum_{c\in \hat{C}} h_j(a,c)\hat{x}(a,c)\le0 \text{ }(j\in\{1,\cdots,\ell\},\,a\in A).
\end{aligned}
\end{equation}
This formulation constitutes a standard linear programming problem, where $\hat{x}$ is a vector of dimension $|A||\hat{C}|$, and the number of inequality constraints is $\ell|A| + m$. 
The deterministic problem can be approximately solved by exhaustive search over $ A \times \hat{C} $, while using linear programming to obtain the lottery solution means solving a linear program with $ |A| |\hat{C}|$ variables -- this is significantly more computationally intensive than grid search. In applications where a very fine discretization of $ C$ is needed to obtain an economically meaningful solution, neither method is tractable. Furthermore, if $C$ is a high-dimensional set, and one wants to maintain the same number of grid points along each dimension of $C$, both methods' complexity is obviously exponential in the dimension of $C$.\footnote{This is true even though the complexity of linear programming is polynomial in the number of variables.}
 
For the abstract mathematical problem, our Lagrangian algorithm exhibits better worst-case complexity bounds than the linear programming approach. The number of iterations required to obtain a given accuracy $ \epsilon $ is bounded in Proposition \ref{thm_complexity} and does not depend on the number of variables, but increases polynomially in the number of constraints. The overall complexity of our algorithm can then be obtained by combining the required number of iterations (Proposition \ref{thm_complexity}), and a straightforward estimate of the complexity of the grid search in Step 1 -- see Appendix \ref{sect_Bcompare} for details.  However, without further structure, our algorithm does not perform significantly better in practical computations. For finite $ \hat{C}$, the grid search in Step 1 of
Algorithm \ref{math_alg_mh} suffers from the same curse of dimensionality as do linear programming and exhaustive search for a deterministic solution.

One of the key insights of this paper is that typical economic problems contain additional structure which stems from the standard assumption that the utility and production functions are concave and that helps overcome the curse of dimensionality for the Step 1 of our algorithm.

\subsection{Computational Performance with Additional Structure}
Two key properties of many economic problems that can significantly improve the efficiency of our algorithm are \textit{decomposability} and \textit{partial concavity}.

\textit{Decomposability} allows a high-dimensional consumption set $C$ to be decomposed into lower-dimensional components $C = \times_{k=1}^{d} C_k$, with the objective function and all constraint functions being additively separable across these components. This structure enables us to drastically reduce the scale of grid search in Step 1 of our algorithm. Although linear programming can also exploit decomposability to reduce dimensionality (see Appendix \ref{sec: decoLP}), the mechanism is less direct. Importantly, decomposability does not improve efficiency if one attempts to apply grid search to \eqref{math_pure_mh} to solve for a deterministic solution. The formal definition is as follows.
\begin{definition}\label{def:decomposability}
We call an economic problem \textit{decomposable}, if there exists $C_1,\cdots,C_d$, such that $C=\times_{k=1}^{d}C_k$, and for any $c=(c_1,\cdots, c_d)$, we have
$$
f(a,c)=\sum_{k=1}^{d}f_k(a,c_k),\,g_i(a,c)=\sum_{k=1}^{d}g_{i,k}(a,c_k)(\forall 1\le i\le m),\,\,h_{j}(a,c)=\sum_{k=1}^{d}h_{j,k}(a,c_k)(\forall 1\le j\le \ell).
$$
\end{definition}
Decomposability is a common structure in multi-agent models; we illustrate it with specific examples from principal-agent problems in Section \ref{sec:pri-ag} and optimal taxation problems in Section \ref{sec:Mirrleesian}.

\textit{Partial concavity} refers to the property that the Lagrangian of problem \eqref{math_pure_mh} is pseudo-concave in some dimensions of $C$. This permits a first-order approach for those dimensions in Step 1, avoiding the need for discretization. The formal definition is as follows.
\begin{definition}\label{def:partially_concavity}
    We call an economic problem partially concave, if there exists $C_0,\,C_1$, s.t. $C=C_0\times C_1$, and for any $\lambda\ge 0,\,\gamma\ge 0,\,a\in A,\,c_0\in C_0$, the Lagrangian $\mathcal{L}(a,(c_0,c_1);\lambda,\gamma)$ is continuously differentiable and pseudo-concave in $c_1$.\footnote{Following \cite{mangasarian1965pseudo} we say a differentiable function $f: X\subset {\mathbb R}^n \rightarrow {\mathbb R} $ is pseudo-concave if 
$$ x,y\in X:\quad D_x f(x)\cdot (x-y)\geq 0\Rightarrow f(x)\geq f(y).$$}
\end{definition}
A sufficient condition for this property is that for all $ a\in A$, $c_0\in C_0$, the objective function $ f$ and the constraints $ -g,-h $ of the original problem (\ref{math_pure_mh}) are concave in $ c_1$. As we will illustrate in a simple example in Section \ref{sec:pri-ag} this is not a necessary condition. In Appendix \ref{complex:pfoc} we  give a more general sufficient condition for a partially concave problem, and derive the worst case complexity of our method for this case. Intuitively, partial concavity is a natural feature of many economic problems since many economic problems are likely to be convex in some variables. More specific examples will be discussed in Section \ref{sec:pri-ag} and Section \ref{sec:Mirrleesian}.
Partial concavity cannot be effectively exploited by linear programming or grid‑search methods because they rely on a full discretization of all actions.
The applicability of different methods to the two structures is illustrated in Table \ref{tab:applicability}.

\begin{table}[htbp]
\centering
\begin{tabular}{lccc}
\toprule
 & \textbf{Lagrangian Iteration} & \textbf{LP} & \textbf{Deterministic Solution} \\
\midrule
\textbf{Decomposability}    & $\checkmark$ & $\checkmark$ & $\times$ \\
\textbf{Partial Concavity}   & $\checkmark$ & $\times$     & $\times$ \\
\bottomrule
\end{tabular}
\caption{Applicability of different methods to the two structures}
\label{tab:applicability}
\end{table}

\subsection{A classical example}
We consider a variation of the example in Section 6 of \cite{prescott1984general}, which introduces the production of a public good.  There are $n$ private commodities, one public good, and finitely many types of agents $\theta \in \Theta$. Each type $\theta$ has a continuous and increasing utility function $u^{\theta}:{\mathbb R}^{n+1}_+ \rightarrow {\mathbb R}$ over private and public consumption.

The utilitarian planner maximizes total welfare $\sum_{\theta\in \Theta} u^{\theta}(c^{\theta}, G)$,
and faces the following resource constraints:
\[
 \sum_{\theta \in \Theta}  c^{\theta} \le \omega - x, \mbox{ and }
F(x) -G \ge 0,
\]
where $ F:\mathbb{R}^{n+1}\rightarrow \mathbb{R} $ is assumed to be a differentiable and strictly concave  production function.
When an agent's type is unobservable,
the deterministic problem can be written as:
\begin{align}
&\max_{x, (c^{\theta})_{\theta \in \Theta},G} \quad \sum_{\theta\in \Theta} u^{\theta}(c^{\theta},G), \label{eq:objective_sect4} \\
\text{s.t.}\quad & \sum_{\theta \in \Theta}  c^{\theta} \le \omega-x \label{eq:resource1_sect4} \\
& F(x)-G \ge 0, \label{eq:resource_aggregate_sect4} \\
& u^{\theta}(c^{\theta},G) \ge u^{\theta}(c^{\theta'},G), \quad \forall \theta, \theta' \in \Theta. \label{eq:incentive_sect4}
\end{align}

We introduce Lagrangian multipliers: $\lambda$ for the resource constraints \eqref{eq:resource1_sect4}; $\mu$ for the aggregate resource constraint \eqref{eq:resource_aggregate_sect4};  and $(\gamma_{\theta,\theta'})_{\theta,\theta'\in \Theta}$ for the incentive constraints \eqref{eq:incentive_sect4}. The Lagrangian of this problem is:
\begin{align}
&\mathcal{L}\bigl(x,(c^{h})_{h\in H},G;\lambda,\gamma,\mu\bigr) \notag \\
&= 
 \sum_{\theta}\left[ \left(1 + \sum_{\theta'\in \Theta} \gamma_{\theta,\theta'}\right) u^{\theta}(c^{\theta},G) 
 - \lambda  c^{\theta} - \sum_{\theta'\in \Theta} \gamma_{\theta',\theta} u^{\theta'}(c^{\theta},G) \right] \label{eq:Lagrangian_part2} \\
&\quad -\lambda x + \mu F(x) -\mu G +\lambda \omega \label{eq:Lagrangian_part3} 
\end{align}

We observe that, given the multipliers \((\lambda,  \gamma, \mu)\), the Lagrangian is not directly decomposable. However, the Lagrangian is concave in \( x \); hence, the economy exhibits partial concavity.
For given multipliers, there is a unique $x$ that maximizes the Lagrangian, and its value can be directly determined from the equation $ \frac{\lambda}{\mu}=D_x F(x) $. This yields an upper bound for the one-dimensional variable $G$. For each type $ \theta $, the optimal $ c^{\theta} $ maximizes the following expression which is independent of all other types' consumption, given $G$ and the multipliers.
$$  f^{\theta}(c;G,\lambda,\gamma)=\left(1 + \sum_{\theta'\in \Theta} \gamma_{\theta,\theta'}\right) u^{\theta}(c^{\theta},G) 
 - \lambda  c^{\theta} - \sum_{\theta'\in \Theta} \gamma_{\theta',\theta} u^{\theta'}(c^{\theta},G) .$$ 
In Step 1, the algorithm then searches over $G$ in the admissible interval and for each $G$ maximizes $ f^{\theta}(c; G,\lambda,\gamma)$ for each type $ \theta $.
  Since linear programming cannot take advantage of partial concavity, this simplification only improves efficiency for the Lagrangian iteration algorithm. This algorithm can be used to compute the optimal solution to 
(\ref{eq:objective_sect4})--(\ref{eq:incentive_sect4}) even when the number of types becomes large.

\section{Application to a Textbook Principal-Agent Problem}
\label{sec:pri-ag}
In this section, we illustrate our method as well as its advantages over the conventional linear programming approach using a textbook principal-agent problem with moral hazard (e.g., in \citet[Section 5.2]{salanie2005economics}).
The agent can take unobserved actions $a \in A$, which influence the probability distribution of their output $q \in Q$ through conditional probabilities $p(q|a)$. The action set $A$ and the output set $Q$ are finite sets. The total amount of consumption goods available to the social planner is the sum of the outputs of all agents. An agent has utility $u(c, a)$ from action $a$ and consumption $c \in C$, where the consumption set $C$ is a closed interval. For a given output $q$ and a given consumption contract $c(q)$, the principal has utility $v(q-c(q))$. In the deterministic solution, the principal's goal is to allocate an action and a consumption contract $\{a,c(q)\}$ for different states of output to each agent to maximize her expected utility:
\begin{equation}\label{moral_hazard_pure_obj}
    \max_{a,c(q)}\sum_{q}p(q|a)v(q-c(q)),
\end{equation}
subject to the participation constraint and incentive compatibility constraint:
\begin{equation}\label{moral_hazard_pure_cons}
\begin{aligned}
    &\sum_{q}p(q|a)u(c(q),a)\ge \underline{U};\\
    &\sum_{q}p(q|a)u(c(q),a)\ge \sum_{q}p(q|\hat{a})u(c(q),\hat{a}),\,\forall \hat{a}\in A,
\end{aligned}
\end{equation}
where $\underline{U}$ is the value of the outside option.
 Compared to the general form \eqref{math_pure_mh}, the function $f(a, c)$ is defined as $f(a, c) = \sum_{q} v(q- c(q)) p(q | a)$. The participation constraint function is $g_1(a, c) = \sum_{q} u(c(q),a) p(q | a)-\underline{U}$. The incentive constraint functions are $h_{\hat{a}}(a, c) = \sum_{q} p(q | \hat{a}) u(c(q), \hat{a}) - \sum_{q} p(q | a) u(c(q),a),\,\forall \hat{a}\in A$.

\paragraph{Lottery Solutions.}
We now consider the lottery solution for the moral hazard problem above (see, e.g. \cite{arnott1988randomization}). Instead of choosing a deterministic action and outcome-contingent consumption contract $\{a, c(q)\}$, the planner chooses a probabilistic allocation over actions and  consumption contracts. Mathematically, the problem is formulated in the space $\mathcal{P}(A \times C^{|Q|})$. An element $x\in \mathcal{P}(A\times C^{|Q|})$ can be expressed as $x=x(a,c)=x(a,c(q_1),\cdots,c(q_{|Q|}))$. The objective function \eqref{moral_hazard_pure_obj} becomes
\begin{equation}\label{moral_hazard_mix_obj}
\sum_{a\in A}\int_{c\in C^{|Q|}}x(a,dc)\sum_{q\in Q}p(q|a)v(q-c(q)).
\end{equation}
The participation constraint in \eqref{moral_hazard_pure_cons} is assumed to hold only in expectation,
\begin{equation}\label{moral_hazard_mix_res}
\sum_{a\in A}\int_{c\in C^{|Q|}}x(a,dc)\sum_{q\in Q}p(q|a)u(c(q),a)\ge \underline{U},
\end{equation}
and the incentive constraints in \eqref{moral_hazard_pure_cons} become
\begin{equation}\label{moral_hazard_mix_icc}
\int_{c\in C^{|Q|}}x(a,dc)\sum_{q}p(q|a)u(c(q),a)\ge \int_{c\in C^{|Q|}}x(a,dc)\sum_{q}p(q|\hat{a})u(c(q),\hat{a}),\,\forall (a,\hat{a})\in A\times A.
\end{equation}

The interpretation is that the principal randomizes over actions and compensation schedules, commits to a possibly random compensation schedule to each agent, and announces a recommended action to the agent. The agent undertakes the recommended action if the expected payoff is weakly larger than for all other actions, taking the random compensation schedule as given. The agent commits to participating before  he observes the realization of the lottery over recommended actions. The incentive constraint must hold for each recommended action.

We note that the economy is decomposable as in Definition \ref{def:decomposability}; hence we can define $\pi \in \mathcal{P}(C \times Q \times A)$ such that $\pi(dc(q), q, a) = \int_{c_{-q}}x(a, dc) p(q|a)$, where $c_{-q}:=(c(q'))_{q'\in Q,\,q'\ne q}$. With a slight abuse of notation, we now let $c \in C$ (rather than $c \in C^{|Q|}$), so that the planner's objective function \eqref{moral_hazard_mix_obj} can be written as
\begin{equation}\label{moral_hazard_mixfinal_obj}
    \sum_{q,a}\int_{c}\pi(dc,q,a)v(q-c),
\end{equation}
while the participation constraint \eqref{moral_hazard_mix_res} and the incentive constraint \eqref{moral_hazard_mix_icc} become
\begin{equation}\label{moral_hazard_mixfinal_res}
    \sum_{q,a}\int_{c}\pi(dc,q,a)u(c,a)\ge \underline{U},
\end{equation}
\begin{equation}\label{moral_hazard_mixfinal_icc}
\sum_{q}\int_{c}\pi(dc,q,a)u(c,a)\ge \sum_{q}\int_{c}\pi(dc,q,a)\frac{p(q|\hat{a})}{p(q|a)}u(c,\hat{a}), \quad \forall (a, \, \hat{a}) \in A \times A.
\end{equation}
respectively.

To apply the Lagrangian iteration algorithm, note that the deterministic Lagrangian can be written as
\begin{align}
 {\mathcal L}(a,c;\lambda,\gamma)=&\sum_{q}p(q|a)v(q- c(q))-  \sum_{\hat a \in A} \gamma_{a,\hat{a}} \sum_{q}(p(q|\hat{a})u(c(q),\hat{a})-p(q|a)u(c(q),a))\\
 & -\lambda(\underline{U}-\sum_{q}p(q|a)u(c(q),a)). \end{align}

 It is easy to see that the problem is decomposable in outcomes $ q \in Q$ and that we can write
$$ {\mathcal L}(a,c;\lambda,\gamma)=\sum_{q \in Q}  {\mathcal L}_q(a,c(q);\lambda,\gamma) .$$
For the discretized version of the problem with finite $C$, decomposability implies that
for a given $ a,\lambda,\gamma $ one needs to consider $ |Q| |C| $ possible $c$ instead of $ |C|^{|Q|}$ points.
However, under mild assumptions on fundamentals, the problem becomes partially convex and one does not need to discretize the set $C$ at all.
The following lemma provides sufficient conditions for partial concavity in this setting.
\begin{lemma}
\label{lem:pcmh}
    Suppose $ v(.) $ is $C^2$, concave, and increasing; for each $ a \in A$, $ u(.,a) $ is $ C^2 $ concave and increasing. Also assume that
     for each $ a\in A $,  $ \frac{\partial u(c,a)}{\partial c} = C(a) \tilde{u}'(c)$ for a decreasing function $ \tilde{u}'(.)$. Then the economy is partially concave, i.e., the Lagrangian ${\mathcal L}(a,c;\lambda,\gamma)$ is continuously differentiable and pseudo-concave in $c$. 
\end{lemma} 
\begin{proof}
    See Appendix \ref{app:pcmh}.
\end{proof}

\subsection{A numerical example}
We adopt the parameterization following Example 1 in \cite{prescott1998computing}.\footnote{The formulation of the principal agent problem in \cite{prescott1998computing} is slightly non-standard in that he maximizes the agent's expected utility subject to the resource constraint.
However, it is easy to see that if we set the outside option $ \underline{U} $ to the maximizing value in \cite{prescott1998computing}, his problem is mathematically equivalent to our framework.} The consumption set is \(C = [0, 2]\), the output set is \(Q = \{0.5, 1.5\}\), and the action set is \(A = 0.05:\Delta a:1.95\), where \(\Delta a=0.025\). The agents' utility function is defined as \(u(a,c) = \sqrt{c} + 0.8\sqrt{2 - a}\). The principal is risk neutral, and her utility is $v(q-c)=q-c$. The relationship between output and action is given by:

\[
p(q = 1.5 \mid a) = \begin{cases}
\frac{1 - (1 - a)^{0.2}}{2}, & \text{if } a < 1, \\
\frac{1 + (a - 1)^{0.2}}{2}, & \text{if } a \ge 1.
\end{cases}
\]

Clearly, the assumptions in Lemma \ref{lem:pcmh} are satisfied, and the problem is partially concave.
Therefore, in each Step 1 of our algorithm, the optimal consumption \(c^k(a,q,\lambda^k,\gamma^k)\) is determined by the first-order condition (FOC):
$$
c^k(a,q,\lambda^k,\gamma^k) = \min \left\{ c_{max}, \left(\tilde{u}'\right)^{-1} \left(\frac{\mathcal{B}(a,q)}{\mathcal{A}(a,q,\lambda^k,\gamma^k)}\right) \right\},
$$
where
$$
\mathcal{A}(a,q,\lambda^k,\gamma^k)=\lambda^k p(q|a)-\sum_{\hat{a}\in A}\gamma_{\hat{a},a}^k\left(p(q|\hat{a})-p(q|a)\right),\,\mathcal{B}(a,q)= p(q|a),$$
when $\mathcal{A}(a,q,\lambda^k,\gamma^k)>0$; 
and the optimal consumption is determined by
$$
c^k(a,q,\lambda^k,\gamma^k) = c_{min},
$$
when $\mathcal{A}(a,q,\lambda^k,\gamma^k)\le 0$.
Thus, we can analytically determine the optimal consumption for each iteration using the FOC.

To solve this problem with linear programming, \cite{prescott1998computing} discretizes the set \(C\)  to \(\hat{C} = 0:0.01:2\), and obtains the following solution:
\[
\pi(a = 0.050) = 0.0924, \, \pi(a = 1.075) = 0.9076
\]
\[
\pi(c = 1.20 \mid q = 0.5, a = 0.05) = 1, \, \pi(c = 1.19 \mid q = 1.5, a = 0.05) = 1.
\]
\[
\pi(c = 0.54 \mid q = 0.5, a = 1.075) = 0.5311, \, \pi(c = 0.55 \mid q = 0.5, a = 1.075) = 0.4689.
\]
\[
\pi(c = 1.40 \mid q = 1.5, a = 1.075) = 1.
\]
In this solution, the expected utility of the agent is approximately 1.8950. 
We use our algorithm with $\underline{U}=1.8950$ and continuous $c$. We choose the initial Lagrangian multipliers as \(\lambda^1 = 0.5\) and \(\gamma^1 = 0\). The learning rate is chosen as 
$$
\mu^k=\frac{1}{\left(k+\frac{1}{|\Delta a|^2}\right)^{0.7}}.$$

 The total number of iterations, \( N \), is chosen from the relation \( N \propto \ell\sim \frac{1}{\Delta a} \), as derived from the condition \eqref{ncon:sub_gradient_conv} and we take $100/\Delta a=4000$ iterations.
The trajectories of \(\lambda^k,\,a^k,\,c^k,\,\text{and }V(\lambda^k,\gamma^k)\) are shown in Figure \ref{Fig_mhfoc}.
\begin{figure}[ht!]
\centering
    \includegraphics[width=0.95\linewidth]{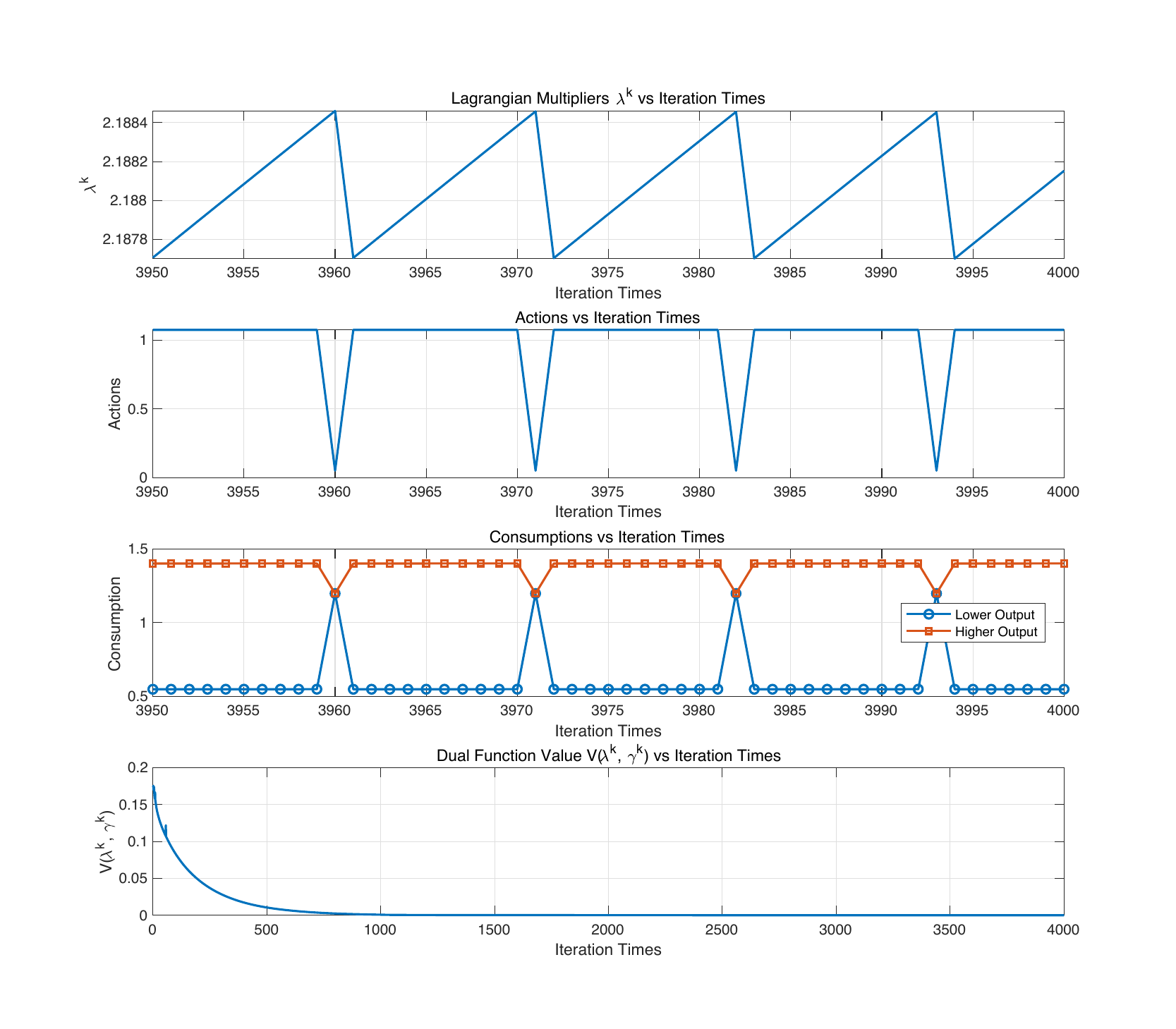}
\caption{Results from Lagrangian iteration when $\Delta a=0.025$. The four panels plot the Lagrangian multipliers $\lambda^k$, optimal actions $a^k$, optimal consumptions $c^k$, and the values of the dual function $V(\lambda^k,\gamma^k)$ over iteration.}\label{Fig_mhfoc}
\end{figure}
The action \(a^k\) oscillates between 0.050 and 1.075 during iterations. Notably, there exists a relationship between \( \lambda^k \) and \( a^k \): when \( \lambda^k \) is below a certain threshold, indicating that the principal places smaller emphasis on satisfying the participation constraint, the resulting action \( a^k \) stabilizes around 1.075. Conversely, when \( \lambda^k \) exceeds this threshold, reflecting an increased focus on the participation constraint, \( a^k \) stabilizes around 0.050. We use the results from rounds 3750 to 4000 to construct a lottery \(\hat{\pi}\). The resulting lottery is:
\[
\hat{\pi}(a = 0.050) = \frac{\sum_{k=3750}^{4000}\mu^k \mathbf{1}_{a^k=0.050}}{\sum_{k=3750}^{4000}\mu^k} \approx 0.0917.
\]
This value is close to the solution obtained from linear programming, where \(\pi(a = 0.050) = 0.0924\).  When \(a^k = 0.050\), the consumption contract approximates \(c^k(q=0.5) = c^k(q=1.5) = 1.20\); and when \(a^k = 1.075\), the consumption contract approximates \(c^k(q=0.5) = 0.55\) and \(c^k(q=1.5) = 1.40\). 

\subsection{Efficiency and comparison with linear programming}

In this textbook example, the Lagrangian iteration method exhibits highly efficient computational performance not only by exploiting the low dimensionality but also the decomposability and partial concavity of the problem.\footnote{As \cite{prescott1998computing} and \cite{su2007computation} point out, the decomposability in $c(q) $ also simplifies the linear programming approach and the number of variables can be taken to be roughly $ |Q| |C| |A|$ (instead of $  |C|^{|Q|} |A|$ in the general problem). 
However, the linear programming approach cannot exploit partial concavity, and the size of the problem still grows dramatically in $ |C|$, $|Q|$, and $|A|$. }
In our numerical example, our method takes only 0.05 seconds to run 4000 Lagrangian iterations to obtain the approximate lottery solution, while the linear programming method takes 1.69 seconds\footnote{This CPU time accounts solely for the LP solver runtime, excluding the matrix construction time.} using the optimization toolbox in MATLAB. As shown in the last sub-figure of Figure \ref{Fig_mhfoc}, the dual function \(V(\lambda^k, \gamma^k)\) (defined in \eqref{def:dualfunc}) rapidly converges to the minimum \(V(\lambda^*, \gamma^*)\approx 0\) within the first few hundred iterations.  
The subsequent iterations serve primarily to build lotteries and do not require large learning rates.  A careful design of learning rates can therefore improve computational efficiency in practical applications.

It is clear that a finer discretization of $C$ would dramatically increase the running time of linear programming, but this does not affect our method since we do not need to discretize $C$. Slightly more surprising is the fact that using a larger set of possible actions slows down the linear programming approach dramatically while having only modest effects on the run-time of our algorithm.
To demonstrate this, we vary \(\Delta a\) and perform $100/\Delta a$ iterations for each specification.
The computational time for each choice of \(\Delta a\)
is presented in Table \ref{Table_lag}. In addition, the table provides the sizes of the corresponding linear programming problems, highlighting the direct implications on memory requirements.

\begin{table}[ht!]
\centering
\begin{tabular}{cccccc}
\hline\hline
\(\Delta a\)&Iterations & CPU time & \multicolumn{3}{c}{Size of LP} \\
\cline{4-6}
& & & \#Vars & \#Equality Cons & \#Inequality Cons \\
\hline
0.2    & 500& 0.006 & 4020   & 21  & 91   \\
0.1    & 1000& 0.01 & 8040   & 41  & 381  \\
0.05   & 2000& 0.02 & 15678  & 79  & 1483 \\
0.025  & 4000& 0.05 & 30954  & 155 & 5853 \\
0.0125  & 8000& 0.16 & 61506  & 307 & 23257 \\
0.00625  & 16000& 0.98 & 122610 & 611 & 92721 \\
\hline\hline
\end{tabular}
\caption{Computational Performance for Different \(\Delta a\). ``\# Vars'' refers to the number of variables in the LP method, ``\# Equality Cons'' refers to the number of equality constraints, and ``\# Inequality Cons'' refers to the number of inequality constraints, excluding the non-negativity constraints.}
\label{Table_lag}
\end{table}

\begin{figure}[ht!]
\centering
   \includegraphics[width=0.9\linewidth]{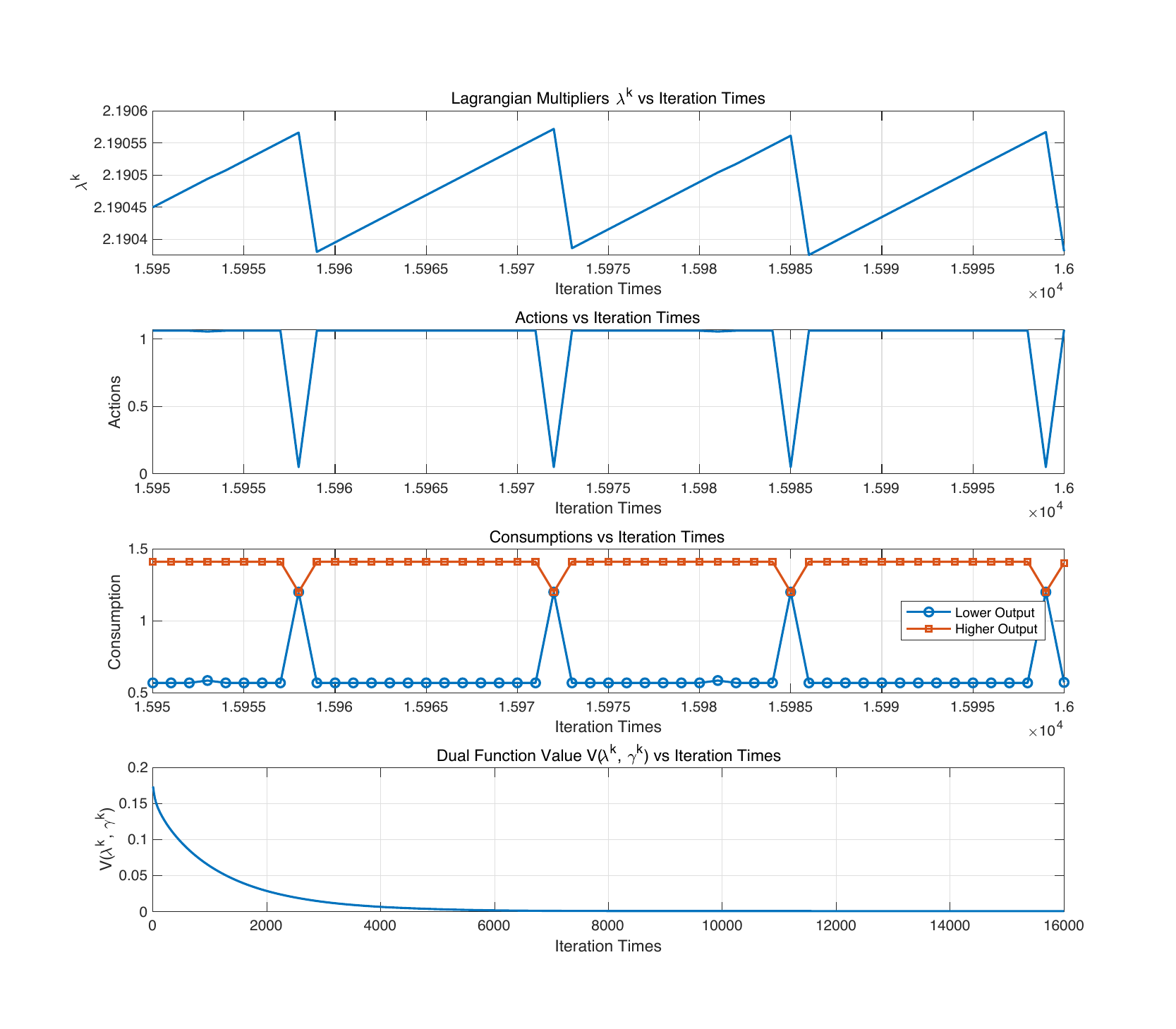}
\caption{Results from Lagrangian iteration when $\Delta a=0.00625$. The four panels plot the Lagrangian multipliers $\lambda^k$, optimal actions $a^k$, optimal consumptions $c^k$, and the values of the dual function $V(\lambda^k,\gamma^k)$ over iteration. 
}\label{Fig_mhfoc_largea}
\end{figure}

Although the linear programming solver can also handle the problem with $\Delta a=0.0125$ within 12.93 seconds, it cannot solve the finer grid case of $\Delta a=0.00625$ on a regular laptop due to exceeding memory limits. In contrast, our Lagrangian iteration method remains computationally efficient and requires significantly less memory. Furthermore, Figure \ref{Fig_mhfoc_largea} illustrates the trajectories of \( \lambda^k \), \( a^k \), \( c^k \), and \( V(\lambda^k, \gamma^k) \) for \( \Delta a = 0.00625 \). In this case, \( a^k \) oscillates between 1.0625 and 0.05, which differs from the case when \( \Delta a = 0.025 \), where \( a^k \) oscillates between 1.075 and 0.05. The discretization of the action set has a significant effect on the optimal solution.

\section{Mirrleesian optimal taxation with multidimensional types}\label{sec:Mirrleesian}
Following \cite{mirrlees1971exploration}, there is a large literature on optimal income taxation with information frictions. An important open question is how the classical results change when private information is multi-dimensional (see, e.g. \cite{judd2017optimal}, \cite{bergstrom2021optimal}). 
In such environments, the classical first-order approach, which uses single-crossing and monotonicity to reduce global incentive constraints to local ones,  generally breaks down, and computing the optimal deterministic mechanism is challenging
(see \cite{judd2017optimal} and \cite{boerma2022bunching}).
In this section, we show that our Lagrangian iteration approach makes the multidimensional problem computationally tractable and yields new economic predictions: heterogeneity in productivity and labor supply elasticity can generate nondegenerate optimal randomization, reduce bunching, and smooth consumption toward the full-information benchmark.

Following \cite{weiss1976desirability}, \cite{brito1995randomization}, and \cite{hellwig2007undesirability}, we study an economy in which the planner can offer random tax schedules. Agents must commit to a schedule before observing the random outcome and before choosing labor supply.
There are finitely many types of agents $ \theta \in \Theta $, a finite set, and a continuum of identical agents within each type.
A type is characterized by a preference parameter $ \eta $ and a productivity parameter $ \omega $. We use $ \eta_{\theta}$ and $ \omega_{\theta} $ to denote the parameters of type $ \theta $.
Each agent of type $\theta=(\eta,\omega) $ has preferences over consumption ($c$) and effort ($a$), which we represent by a continuous utility function $ \bar{u}_{\eta}(c,a) $. 
Given effort $a$, an agent of type $ \theta=(\eta,\omega) $ receives a wage $\omega$ and has pre-tax income $ y=\omega a$. While income is observable, the agent's effort and type parameters $(\omega, \eta)$ are not.
We assume that $ a \in [0,a_{\max}]$. Unlike above, here $a$ is a continuous variable. 
It is useful to write the indirect utility function over consumption and income as
\begin{equation} \label{indirectutil} 
u_\theta(c,y)=\bar{u}_{\eta_\theta}(c,\frac{y}{\omega_\theta}).
\end{equation}
A deterministic allocation is therefore  $ (c_\theta,y_\theta)_{\theta\in \Theta} \in C\times Y $, where $ C \subset {\mathbb R}^{|\Theta|}$ is compact and $ Y=\times_{\theta\in\Theta}[0, \omega_{\theta}a_{\max}] \subset {\mathbb R}^{|\Theta|}$ is a $|\Theta|$-dimensional box. 

The planner chooses a lottery $x$ over  $ (c_\theta,y_\theta)_{\theta\in \Theta} $ to solve 
\begin{align}\
&\max_{x \in \mathcal {P}(C\times Y)} \int_{(c, y) \in C \times Y} \left( \sum_{\theta\in \Theta}  u_\theta(c_\theta,y_\theta) \right) x(dc, dy),\label{eq:tax_obj2}\\
\text{s.t.} & \int_{(c, y) \in C\times Y} \left(u_\theta(c_\theta,y_\theta)-u_\theta(c_{\theta'},y_{\theta'}) \right) x(dc, dy)\ge 0 \mbox{ for all } \theta,\theta'\in \Theta.\label{eq:tax_ic2}\\
 & \int_{(c, y) \in C\times Y} \left( \sum_{\theta \in \Theta}  c_\theta -  \sum_{\theta \in \Theta} y_\theta\right) x(dc,dy) \le 0. \label{eq:tax_rc} 
\end{align}

The assumption of expected utility implies that the incentive constraint (\ref{eq:tax_ic2}) only has to hold in expectation. Because there is a continuum of agents within each type, the resource constraints for consumption and for labor supply only have to hold in expectation, and they are combined into a single constraint (\ref{eq:tax_rc}). 
This problem is equivalent to the optimal taxation problem where the planner offers a possibly stochastic income tax schedule $T_{\theta}(y)$ for each reported type $ \theta $. Each agent chooses between a deterministic tax schedule and a lottery that depends on his reported type. If the agent chooses the lottery, after reporting her type $\hat{\theta}$, the planner uses a public randomization device to generate, for each agent reporting $\hat{\theta}$, an i.i.d. public random variable $s$ with a known distribution. This then implies a tax schedule $ T_{s,\hat \theta}(.)$. For simplicity and without loss of generality, we think of the tax schedule as being characterized by an income level $ \bar y(s,\hat\theta) $ and being of the form
$ T_{s\hat \theta}(y) = y $ if $ y \ne  \bar y(s,\hat\theta) $ and $ T_{s\hat \theta}(y)=\bar T_{s \hat \theta} $ if $ y = \bar y (s,\hat \theta)$  (see, e.g., \cite{brito1995randomization} or \cite{hellwig2007undesirability}).
By reporting type $ \hat \theta $, an agent commits to earning income $\bar y (s,\hat \theta)$ and obtaining consumption $ c(s,\hat \theta)=\bar y (s,\hat \theta) - \bar T_{s,\hat \theta } $ when the realization of her random signal is $s$. We assume the law of large numbers applies, so that among agents reporting type $ \hat \theta$, the empirical distribution of realized tax schedules is equal to the distribution of the signals.

It is easy to verify that this problem fits directly into our general setup. Moreover, our algorithm can be applied even when the number of types becomes large. The reason for this is that the problem is decomposable as described in Section \ref{sec:comper}, and the Lagrangian to the deterministic problem can be written as
\begin{eqnarray*} {\cal L}(c,y,\lambda,\gamma)&=&
\sum_{\theta \in \Theta}  u_\theta(c_\theta,y_\theta) +\sum_{\theta,\theta'\in \Theta}\lambda_{\theta,\theta'}
(u_\theta(c_\theta,y_\theta)-u_\theta(c_{\theta'},y_{\theta'}))-\gamma(
\sum_{\theta \in \Theta}  c_\theta - \sum_{\theta \in \Theta}  y_\theta)\\
&=& \sum_{\theta \in \Theta} {\cal L}_\theta(c_\theta,y_\theta,\lambda,\gamma),
\end{eqnarray*}
where
$ \lambda=(\lambda_{\theta,\theta'})_{\theta, \theta' \in \Theta}$ and
$$ {\cal L}_\theta(c_\theta,y_\theta,\lambda,\gamma)=
(1+\sum_{\theta'\in \Theta, \theta'\ne \theta}\lambda_{\theta,\theta'})u_\theta(c_\theta,y_\theta) - \sum_{\theta'\in \Theta, \theta' \ne \theta} \lambda_{\theta',\theta} u_{\theta'}(c_{\theta},y_{\theta})-\gamma (c_\theta-y_\theta).
$$
In Step 1 of Algorithm \ref{math_alg_mh}, one therefore has to solve $|\Theta|$ maximization problems in two variables. The computational costs in Step 1 increase linearly in the number of types. The number of constraints grows polynomially, and therefore Proposition \ref{thm_complexity} implies that the number of iterations required also grows polynomially.

Note that decomposability can also be exploited when using linear programming to solve the problem. The probability that one needs to assign to $ (c,y) $ can be decomposed into independent probabilities for each $ (c_\theta,y_\theta)$. 
Suppose that for each $ \theta $, the set of feasible $ (c_\theta,y_\theta)  $ is discretized into $M$ points.
Instead of solving a linear program with $ M^{|\Theta|} $ variables that would quickly become infeasible, one can solve a linear program with
$ M |\Theta| $ variables that remains feasible for moderate $M$.
In contrast, a simple exhaustive search for the optimal deterministic solution is clearly infeasible, since one needs to search in $M^{|\Theta|}$ dimensional space.

We now assume that for each type $\theta=(\eta,\omega) \in \Theta $, the utility function is given by
\begin{equation}\label{eq:utility_func}
  u_{(\eta,\omega)}(c,y) =\log(c)-\frac{\left(\frac{y}{\omega}\right)^{\frac{1}{\eta}+1}}{\frac{1}{\eta}+1}, 
\end{equation}

Heterogeneity in the Frisch elasticity $ \eta $ naturally creates scope for lotteries in this setting. In calibrations where it is optimal for lower-$\eta$ agents to supply more labor, the planner can use lotteries to prevent them from misreporting their type.
To understand how lotteries can improve  upon the deterministic solution for this utility function, it is useful to consider a relaxed problem that imposes only the following subset of incentive constraints:
\begin{equation}\label{eq:tax_ic_new_eta}
   \int_{(c, y) \in C\times Y} \left(u_\theta(c_\theta,y_\theta)-u_\theta(c_{\theta'},y_{\theta'})\right) x(dc, dy)\ge 0 \mbox{ for all } \theta, \theta' \mbox{ with } \eta_\theta \ge \eta_{\theta'}.
\end{equation}

The key insight that we formalize in Proposition \ref{prop:etaconv} below is that, if $ a_{\max} $ is sufficiently large, none of the other incentive constraints beyond \eqref{eq:tax_ic_new_eta} will be binding at the optimum. We can therefore focus on the problem subject only to the subset of constraints in \eqref{eq:tax_ic_new_eta}. 
Intuitively, to deter a low-$\eta$ agent from mimicking a higher-$\eta$ type, the planner can offer the higher-$\eta$ type a contract that, with very small probability, requires extremely high earnings (equivalently, very high labor effort). As $\eta$ increases, this tail risk becomes more extreme: the required earnings rise while the probability falls. In this contract, a low-$\eta$ agent does not want to imitate because the higher curvature of disutility from labor supply makes these rare high-effort realizations disproportionately costly. At the same time, the planner can choose the tail probabilities sufficiently small so that the associated welfare loss is second order.
The following proposition formalizes this. Section \ref{ssec:judd-example} provides a concrete example of this reasoning under a specific calibration. 

\begin{prop}\label{prop:etaconv}
The maximum objective value for the planner problem \eqref{eq:tax_obj2} subject to \eqref{eq:tax_rc} and \eqref{eq:tax_ic2}, and the maximum objective value for the planner problem \eqref{eq:tax_obj2} subject to \eqref{eq:tax_rc} and \eqref{eq:tax_ic_new_eta} have the same limit as $a_{\max}\rightarrow\infty$.
\end{prop}
\begin{proof}
    See Appendix \ref{app:etaconv}.
\end{proof}

\subsection{Lottery solution in a calibrated example}
\label{ssec:judd-example}
Following Section 5.1 in \cite{judd2017optimal}, we choose five values of $\omega_\theta \in \{ 1,2,3,4,5 \}$ and five values of $\eta_\theta\in \{ \frac{1}{8}, \frac{1}{5}, \frac{1}{3}, \frac{1}{2}, 1 \}$, resulting in 25 distinct types indexed by $ (\omega_\theta,\eta_\theta) $ and 600 incentive compatibility constraints.\footnote{To the best of our knowledge, this is the only paper that solved the deterministic Mirrleesian problem for multi-dimensional types without additional simplifying assumptions. We use their deterministic optimal solutions as a comparison benchmark to show the possible welfare gains from lotteries.} 
It is useful to first consider the ``first best'' solution where the planner faces no incentive constraints. In this case, consumption is constant across all types and income increases with productivity $ \omega $. For effort-levels above one, it is optimal for the planner to require less effort for lower $ \eta $-types while for effort-levels below one, the planner requires more effort for lower $ \eta $ types.
Columns 2 and 3 of Table \ref{Table_Judd} report the optimal deterministic solution and the first best solution. It is interesting to note that in the deterministic solution, there is substantial dispersion in consumption and a lot of \lq\lq bunching\rq\rq where different types receive identical $(c,y)$ allocation.

\begin{table}[ht!]
\centering
{
\begin{tabular}{ccccc}\hline\hline
\makecell{Type\\ $(\omega,\eta)$} & \makecell{Deterministic solution\\ $(c, y)$}  & \makecell{First Best \\  $(c, y)$} & \makecell{$a_{\max}=\infty$\\ $(c, \mathbb{E}[y])$} & \makecell{$a_{\max}=1.2$\\ $(c, \mathbb{E}[y])$}\\ \hline
1, 1 & 1.68, 0.42 & 3.17, 0.32 & 1.93, 0.38 (L) & 1.70, 0.43 \\
1, $\frac{1}{2}$ & 1.77, 0.62 & 3.17, 0.56  & 2.00, 0.63 (L) & 1.79, 0.62 \\
1, $\frac{1}{3}$ & 1.79, 0.65 & 3.17, 0.68  & 2.03, 0.72 (L)& 1.80, 0.65 \\
1, $\frac{1}{5}$ & 1.83, 0.77 & 3.17, 0.79  & 2.04, 0.82 (L) & 1.84, 0.77 \\
1, $\frac{1}{8}$ & 1.86, 0.86 & 3.17, 0.87 & 2.05, 0.88 & 1.88, 0.86 \\
\hline
2, 1 & 1.86, 0.86 & 3.17, 1.26 & 2.21, 1.10 (L) & 1.88, 0.86 (L) \\
2, $\frac{1}{2}$ & 2.03, 1.39 & 3.17, 1.59  & 2.28, 1.48 (L) & 2.05, 1.39 (L) \\
2, $\frac{1}{3}$ & 2.07, 1.50 & 3.17, 1.72  & 2.29, 1.67 (L) & 2.08, 1.49 \\
2, $\frac{1}{5}$ & 2.16, 1.74 & 3.17, 1.82  & 2.30, 1.87 (L) & 2.18, 1.75 \\
2, $\frac{1}{8}$ & 2.2, 1.83 & 3.17, 1.89  & 2.30, 1.92  & 2.22, 1.84 \\
\hline
3, 1 & 2.2, 1.83 & 3.17, 2.84 & 2.77, 2.29 (L) & 2.46, 2.50 (L) \\
3, $\frac{1}{2}$ & 2.47, 2.49 & 3.17, 2.92  & 2.74, 2.62 (L) & 2.61, 2.66 (L) \\
3, $\frac{1}{3}$ & 2.47, 2.49 & 3.17, 2.95  & 2.68, 2.77 (L) & 2.62,  2.69 (L) \\
3, $\frac{1}{5}$ & 2.55, 2.68 & 3.17, 2.97  & 2.61, 2.91 (L) & 2.69, 2.85 \\
3, $\frac{1}{8}$ & 2.62, 2.85 & 3.17, 2.98 & 2.56, 2.99  & 2.71, 2.88 \\
\hline
4, 1 & 3.36, 4.00 & 3.17, 5.06 & 3.62, 3.72 (L) &  3.18, 3.52 (L) \\
4, $\frac{1}{2}$ & 3.36, 4.00 & 3.17, 4.5  & 3.42, 3.92 (L) & 3.36, 4.02 (L) \\
4, $\frac{1}{3}$ & 3.36, 4.00 & 3.17, 4.32   & 3.26, 4.01 (L) & 3.36, 4.03 \\
4, $\frac{1}{5}$ & 3.36, 4.00 & 3.17, 4.19 & 3.06, 4.06 (L) & 3.36, 4.03 \\
4, $\frac{1}{8}$ & 3.36, 4.00 & 3.17, 4.12 & 2.91, 4.09 & 3.36, 4.03 \\
\hline
5, 1 & 4.87, 5.87 & 3.17, 7.9 & 4.76, 5.25 (L) & 4.81, 5.85 \\
5, $\frac{1}{2}$ & 4.49, 5.57 & 3.17, 6.28 & 4.38, 5.34 (L) & 4.47, 5.57 \\
5, $\frac{1}{3}$ & 4.34, 5.44 & 3.17, 5.82 & 4.08, 5.35 (L) & 4.32, 5.44 \\
5, $\frac{1}{5}$ & 4.11, 5.25 & 3.17, 5.48 & 3.71, 5.31 (L) & 4.10, 5.25 \\
5, $\frac{1}{8}$ & 4.00, 5.14 & 3.17, 5.29 & 3.39, 5.25  & 3.98, 5.14 \\
\hline\hline
\end{tabular}
}
\caption{Deterministic and lottery allocations across types. Columns 2--3 report the deterministic optimum and the first-best allocation, both as $(c,y)$. Columns 4--5 report the optimal lottery allocation for different values of $a_{\max}$; entries are $(c,\mathbb{E}[y])$, where $\mathbb{E}[y]$ is expected pre-tax income under the lottery. (L) indicates that the allocation for that type is randomized (i.e., the lottery has nondegenerate support). Deterministic solutions are from \cite{judd2017optimal}.}
\label{Table_Judd}
\end{table}

To understand how lotteries can improve upon this deterministic solution, we compute the optimal value for the planner problem \eqref{eq:tax_obj2} subject to \eqref{eq:tax_rc} and \eqref{eq:tax_ic_new_eta}. Using Proposition \ref{prop:etaconv}, we establish the following properties of the problem.
\begin{remark}\label{prop:etatwomax}
There exists $ \bar a $, such that for all $a_{\max} \ge \bar a$, the optimal lottery solution requires lotteries for all agents $ (\omega,\eta) $ with $ \eta \ne \frac{1}{8} $. In these solutions, lotteries always require these agents to earn income $ a_{\max} \omega$ with some positive probability $\pi_{\theta}$.
\end{remark}
\begin{proof}
    See Appendix \ref{app:etatwomax}.
\end{proof}

The fourth column of Table \ref{Table_Judd} reports the lottery allocation as $a_{\max}\to\infty$. 
Three observations stand out. First, relative to the deterministic benchmark, bunching disappears: distinct types are no longer assigned identical allocations. Second, consumption is substantially smoother across types, moving toward the full-information benchmark in which consumption is constant. 
Third, the lottery primarily affects incentives through the income or labor supply margin. 
For types with productivity $ \omega \in \{ 1,2,3 \} $, the average required income is quite close to the first best solution. Lower $ \eta $ types have to work more. The planner uses lotteries to prevent low-$\eta$ types from mimicking high-$\eta$ types by introducing rare high-labor supply realizations that are especially costly for low-$\eta$ agents.
For high productivity types, the pattern differs. When $\omega=4$, income declines with $\eta$, and when $\omega=5$ it is nearly flat, whereas in the first best income increases with $\eta$.\footnote{This is due to the fact that high productivity agents work more than one unit and the objective function is maximized if high $ \eta $ (low curvature) agents work more. For $ a < 1 $ it is the other way around.} The high $ \eta $ high $ \omega $ types then have an incentive to mimic the high $ \eta $ low $ \omega $ types that have lower income (hence lower $a$). Although lotteries discourage low $ \eta $ types from doing so, the high $ \eta $ types' incentive constraint can only be satisfied if they are not required to earn too high an income.

\begin{figure}[ht!]
\centering
   \includegraphics[width=0.9\linewidth]{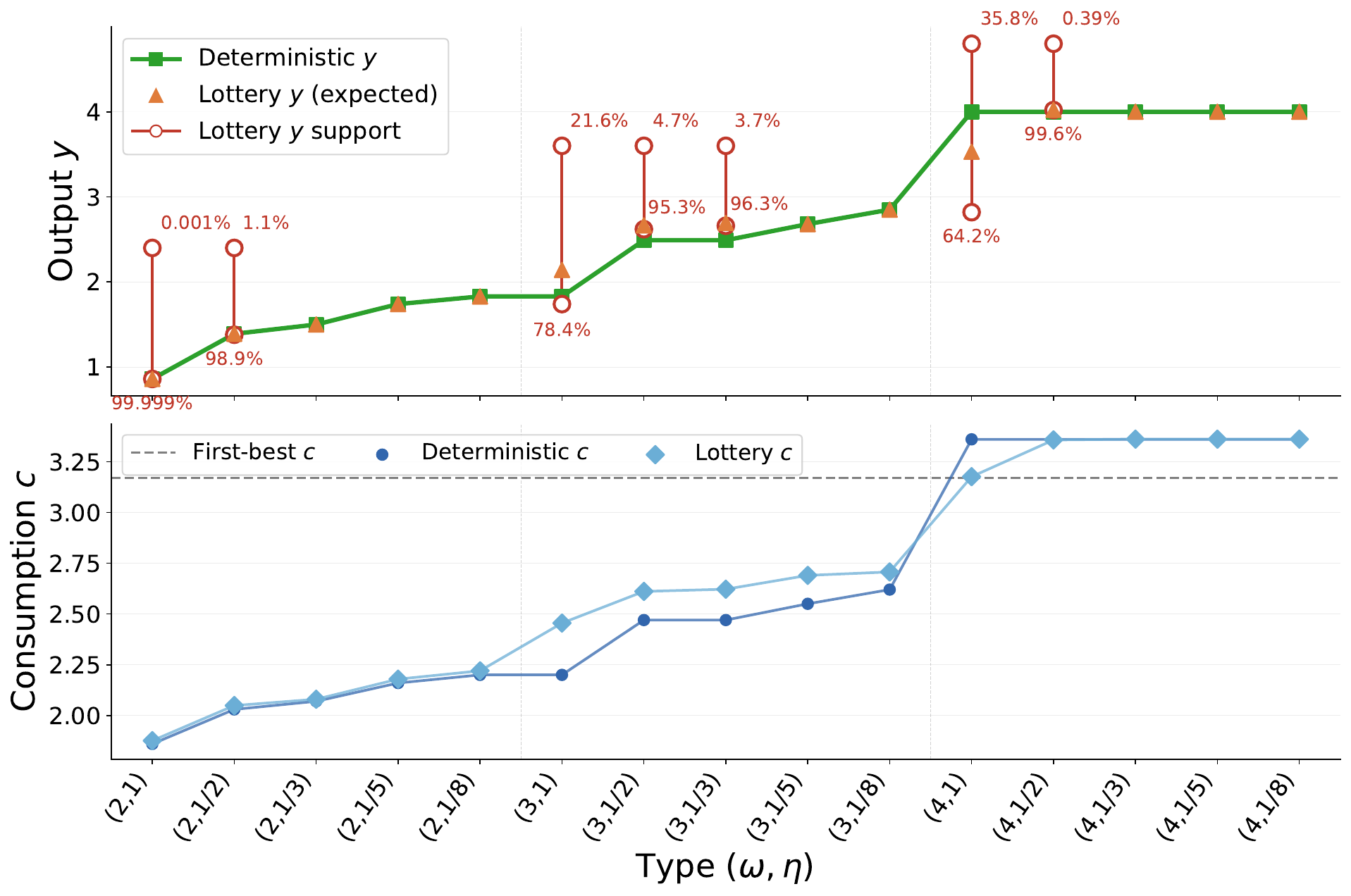}
\caption{Optimal consumption and output across types under various setups with $a_{\max} = 1.2$. Green squares and blue circles denote output and consumption in the optimal deterministic allocation. Orange triangles and light-blue diamonds denote expected output and expected consumption under the optimal lottery. Red open circles mark the support of the output lottery, with adjacent labels reporting probabilities. The gray dashed line denotes first-best consumption.}
\label{fig:omega_eta_lottery_sol_two_panel}
\end{figure}

With smaller values of $a_{\max}$, the planner’s ability to use rare large-income realizations is limited, and the structure of optimal lotteries becomes model-specific. 
We use our method to compute the optimal solutions for $a_{\max}=1.2$,\footnote{We set $a_{\max}=1.2$ to impose a relatively tight cap on labor supply while keeping the deterministic benchmark feasible. Under the deterministic solution, the highest-productivity types ($\omega=5$) earn income close to $5.9$, which corresponds to effort near the cap.} which is reported in the last column of Table \ref{Table_Judd} and visualized in Figure \ref{fig:omega_eta_lottery_sol_two_panel}.
Since the utility function over consumption is identical across all agents, each Lagrangian $ {\mathcal L}_{\theta}(.,y_\theta,\lambda,\gamma) $ is pseudo-concave in consumption, and consumption
can be determined via the first-order conditions. Step 1 of the Algorithm \ref{math_alg_mh} therefore reduces to $|\Theta|$ one-dimensional grid searches, and the example can be solved in about 3 minutes using $ 2 \times 10^6 $ Lagrangian iterations.\footnote{We take the optimal deterministic allocations from \cite{judd2017optimal} as a benchmark and compare them to our optimal lottery solutions, choosing the initial Lagrangian multipliers as $\gamma^1=0.5$ and $\lambda^1=0$. The learning rate is chosen as
$$
\mu^k=\begin{cases}
    \frac{1}{(k+1000)^{0.6}},&1\le k<10^6;\\
    \frac{1}{(k+1000)^{0.8}},&10^6\le k\le 2\times 10^{6},
\end{cases}
$$
and we use the last $10^5$ iteration rounds to construct the lottery solution.}

For $ a_{\max}=1.2$, no type with $ \omega=1,5$ faces a lottery. For $ \omega=3$ we still obtain three lotteries relaxing the incentive constraints across different $ \eta $. For $ \omega=4 $, similar to the deterministic solution we observe bunching. Many lotteries \lq\lq disappear\rq\rq \ as $ a_{\max} $ decreases. 
Figure \ref{fig:omega_eta_lottery_sol_two_panel} compares consumption and output lotteries for the optimal solution for $ a_{\max}=1.2 $ with the deterministic solution for $ \omega=2,3,4$ types. The figure shows how consumption becomes smoother across types (closer to the first best where consumption is constant across types) due to the fact that high $ \eta $ types now face lotteries. Overall only $ \eta=1 $ and $ \eta=1/2 $ types face significant lotteries (as opposed to the $ a_{\max}=\infty $  case).
 
\subsection{Welfare gain from randomization}
We now examine under which conditions lotteries can generate significant welfare gains.
To make welfare comparisons economically meaningful, we denote welfare differences in terms of the Hicksian ``compensating variation in resources'' relative to the first-best full-information problem. Formally, let \( \widetilde{W}(m) \) denote the highest level of social welfare that can be achieved in the full information setup when the economy has \( m \) fewer units of total resources, defined by:
\begin{align}
\widetilde{W} (m)=&\max_{x \in \mathcal {P}(C\times Y)} \int_{(c, y) \in C \times Y} \left( \sum_{\theta\in \Theta}  u_\theta(c_\theta,y_\theta) \right) x(dc, dy),\\
\text{s.t. }
&\int_{(c, y)\in C \times Y} \left( \sum_\theta  c_\theta - \sum_\theta  y_\theta\right) x(dc, dy) \le -m.
\end{align}
The function \( \widetilde{W}(m) \) is monotonic and decreases in \( m \). Let $ u^L $ denote the optimal welfare of the lottery problem, and let $ u^D $ denote the optimal welfare of the deterministic problem. For each of \( i \in \{L, D\} \), define \( m_i \) as the loss of resources in the full information problem that produces the same welfare level as \( u_i \); that is, \( \widetilde{W}(m_i) = u_i \). We then report welfare losses $\Delta_i$ as the relative compensating variation in resources compared to the total consumption in the first-best solution $C_{FB}$: 
\begin{equation}
    \Delta_i \equiv \frac{m_i}{C_{FB}}, \quad i=L, D.
\end{equation}

In our example, the deterministic mechanism yields a welfare loss of $\Delta_D = 7.16\%$ relative to the first best full-information benchmark.
As can be expected, potential welfare gains from lotteries depend crucially on $ a_{\max} $.
In the limit $a_{\max} \to \infty$, the welfare losses from asymmetric information shrink to $\Delta_L = 5.21\%$ under the optimal lottery mechanism. This implies that 27.23\% of the welfare loss from asymmetric information can be eliminated through lotteries and that the welfare gains from lotteries can be very large. These gains, however, are driven by lotteries that assign a very small probability to extremely high earnings requirements, which generate very large labor disutility in the tail and raise implementation concerns.

For $ a_{\max}=1.2 $, allowing for lotteries reduces this loss to $\Delta_L = 6.91\%$, implying a relative welfare gain of approximately 3.49\%. This rather small welfare gain is consistent with the pattern of output and consumption depicted in Figure \ref{fig:omega_eta_lottery_sol_two_panel}. Lotteries materially affect only a limited subset of types with $ \omega=3 $, so aggregate welfare improvements are correspondingly muted.

\section{Conclusion}
\label{sec:concl}
In this paper, we develop a new method to compute optimal randomization solutions in non-convex economies. Using a Lagrangian framework and iteratively decoupling optimization and complementary slackness conditions, our method provides a computationally efficient and memory-scalable approach to high-dimensional problems. Because non-convexities arise pervasively in economic theory, the method has broad potential applications. There is a large literature that examines problems with asymmetric information in a dynamic setting (see, e.g. \cite{fernandes2000recursive}, \cite{cole2001efficient}). These environments typically feature non-convex incentive constraints over continuation promises, making global computation particularly challenging. \citet{shen2025recursive} extends our approach to recursive contracts and proposes a general method for dynamic problems that directly addresses the long-standing critique that recursive contracts can mishandle non-convexities.

\singlespacing

\bibliographystyle{aer}
\bibliography{refe}

\onehalfspacing

\newpage
\appendix

\section*{Appendix}
\section{Proofs}
\label{app:proofs}

\subsection{Proof of Lemma \ref{lem:subgradient}}\label{app:lem:sub}
    Given $(\lambda,\gamma)\in\mathbb{R}_+^{m}\times\mathbb{R}_+^{\ell |A|}$. For any $(\lambda',\gamma')\in\mathbb{R}_+^{m}\times\mathbb{R}_+^{\ell |A|},$ we have
    \begin{equation}\label{sub_grad_prop}
    \begin{aligned}
        V(\lambda',\gamma')&=\max_{a\in A,\,c\in C}\mathcal{L}(a,c;\lambda',\gamma')\\
        &=\max_{a\in A,\,c\in C}f(a,c)-\sum_{i=1}^{m}\lambda'_i g_i(a,c)-\sum_{j=1}^{\ell}\gamma'_{j,a}h_j(a,c)\\
        &\ge f(a_{\lambda},c_{\lambda})-\sum_{i=1}^{m}\lambda'_i g_i(a_{\lambda},c_{\lambda})-\sum_{j=1}^{\ell}\gamma'_{j,a_{\lambda}}h_j(a_{\lambda},c_{\lambda})\\
        &=f(a_{\lambda},c_{\lambda})-\sum_{i=1}^{m}\lambda_ig_i(a_{\lambda},c_{\lambda})-\sum_{j=1}^{\ell}\gamma_{j,a_{\lambda}}h_j(a_{\lambda},c_{\lambda})+\left(\lambda'-\lambda,\gamma'-\gamma\right)\cdot\left(-(\Delta \lambda,\Delta \gamma)\right)\\
        &=V(\lambda,\gamma)+\left(\lambda'-\lambda,\gamma'-\gamma\right)\cdot\left(-(\Delta \lambda,\Delta \gamma)\right).
    \end{aligned}
    \end{equation}
    For any $(\lambda'_1,\gamma'_1),\,(\lambda_2',\gamma_2')\in \mathbb{R}_+^{m}\times\mathbb{R}_+^{\ell|A|}$ and $0\le \mu\le 1$, such that 
    $$
    \mu(\lambda'_1,\gamma'_1)+(1-\mu)(\lambda'_2,\gamma'_2)=(\lambda,\gamma),
    $$  \eqref{sub_grad_prop} implies that
    $$
    \begin{aligned}
    &\mu V(\lambda'_1,\gamma'_1)+(1-\mu)V(\lambda'_2,\gamma'_2)\\
    \ge &\mu \left(V(\lambda,\gamma)+\left((\lambda'_1,\gamma'_1)-(\lambda,\gamma)\right)\cdot\left(-(\Delta \lambda,\Delta \gamma)\right) \right)
+(1-\mu) \left(V(\lambda,\gamma)+\left((\lambda'_2,\gamma'_2)-(\lambda,\gamma)\right)\cdot\left(-(\Delta \lambda,\Delta \gamma)\right) \right)\\
    =&V(\lambda,\gamma),
    \end{aligned}
    $$
    implying that $V$ is convex. Furthermore, by the definition of a sub-gradient we know from \eqref{sub_grad_prop} that $(\Delta \lambda,\,\Delta \gamma)$ is a negative sub-gradient of $V$ at $(\lambda,\gamma)$.

\subsection{Proof of Theorem \ref{thm:relax}}\label{app:thm:relax}
For any $\epsilon>0$, we know that $x^{\epsilon}\in \mathcal{P}(A\times C)\subset\mathcal{M}(A\times C)$, and hence
$$
\sum_{a\in A}\int_{c\in C}|x^{\epsilon}(a,dc)|=\sum_{a\in A}\int_{c\in C}x^{\epsilon}(a,dc)=1,
$$
which is uniformly bounded with respect to $\epsilon$. According to Alaoglu's theorem, we can choose a sequence $\{\epsilon_n\}\rightarrow 0$, such that $x^{\epsilon_n}$ converges in the weak* topology to some $x^*\in \mathcal{M}(A\times C)$, i.e. for any $\varphi(a,c)\in C^{0}(A\times C)$, we have
\begin{equation}\label{eq:star_weak_conv}
\sum_{a\in A}\int_{c\in C}\varphi(a,c)x^{\epsilon_n}(a,dc)\rightarrow \sum_{a\in A}\int_{c\in C}\varphi(a,c)x^*(a,dc).
\end{equation}
Next, we prove that $x^*$ is the optimal solution to system \eqref{math_lot_mh}. 

We take $\varphi_1\equiv1$ in \eqref{eq:star_weak_conv}, and obtain
$$
\sum_{a\in A}\int_{c\in C}x^*(a,dc)=1,
$$
implying that $x^*\in \mathcal{P}(A\times C)$.

We take $\varphi_2\equiv g_i$ for some $i\in\{1,\cdots, m\}$ in \eqref{eq:star_weak_conv}, and obtain
$$
\begin{aligned}
&\sum_{a\in A}\int_{c\in C}g_i(a,c)x^*(a,dc)
=\lim_{\epsilon_{n}\rightarrow 0}\sum_{a\in A}\int_{c\in C}g_i(a,c)x^{\epsilon_n}(a,dc)
\le\lim_{\epsilon_{n}\rightarrow 0}\epsilon_n=0.
\end{aligned}
$$
Similarly, for $j\in\{1,\cdots, \ell\}, a'\in A$, we take
$$
\varphi(a,c)=\begin{cases}
    h_{j}(a',c),&a=a';\\
    0,&otherwise.
\end{cases}
$$
We can prove that
$$
\int_{c\in C}h_{j}(a',c)x^*(a',dc)\le 0.
$$
Therefore, $x^*$ satisfies all the constraints in \eqref{math_lot_mh} and is a feasible probability measure to \eqref{math_lot_mh}.

To show that $x^*$ is an optimal solution to system \eqref{math_lot_mh}, it then suffices to show that for any $x\in \mathcal{P}(A\times C)$ that is a feasible measure to \eqref{math_lot_mh}, we have
\begin{equation}\label{proof_relax_optcond}
\sum_{a\in A}\int_{c\in C}f(a,c)x^*(a,dc)\ge \sum_{a\in A}\int_{c\in C}f(a,c)x(a,dc).
\end{equation}
We prove \eqref{proof_relax_optcond} by contradiction. If \eqref{proof_relax_optcond} does not hold, then we can find $\tilde{x}\in\mathcal{P}(A\times C)$, such that $\tilde{x}$ satisfies all the constraints in \eqref{math_lot_mh} and 
\begin{equation}\label{proof_relax_mid}
\sum_{a\in A}\int_{c\in C}f(a,c)x^*(a,dc)< \sum_{a\in A}\int_{c\in C}f(a,c)\tilde{x}(a,dc).
\end{equation}
By the continuity of $f$ and the weak* convergence of $x^{\epsilon_n}$, we know that
$$
\sum_{a\in A}\int_{c\in C}f(a,c)x^*(a,dc)=\lim_{\epsilon_n\rightarrow 0}\sum_{a\in A}\int_{c\in C}f(a,c)x^{\epsilon_n}(a,dc),
$$
and hence we can find $\epsilon_{n_0}>0$, such that
$$
\sum_{a\in A}\int_{c\in C}f(a,c)x^{\epsilon_{n_0}}(a,dc)<\sum_{a\in A}\int_{c\in C}f(a,c)\tilde{x}(a,dc)
$$
according to \eqref{proof_relax_mid}. However, we know that $\tilde{x}$ is a feasible probability measure for \eqref{math_lot_mh} and hence a feasible probability measure for \eqref{math_lot_mh_relax} with $\epsilon=\epsilon_n$. This contradicts the fact that $x^{\epsilon_n}$ is the optimal solution to \eqref{math_lot_mh_relax} with $\epsilon=\epsilon_n$. Therefore, we finish the proof.

\subsection{Proof of Theorem \ref{thm:simp_L}}\label{app:thm:simp_L}

We first prove \eqref{eq:thm:sim_L_1}. By direct computation, we have
\begin{equation}
\begin{aligned}\label{eq:proof_thm_sim_L_1}
&L(x;\lambda,\gamma)\\
=&\sum_{a\in A}\int_{c\in C}f(a,c)x(a,dc)-\sum_{i=1}^{m}\lambda_i\sum_{a\in A}\int_{c\in C}g_i(a,c)x(a,dc)-\sum_{j=1}^{\ell}\sum_{a\in A}\gamma_{j,a}\int_{c\in C}h_{j}(a,c)x(a,dc)\\
=&\sum_{a\in A}\int_{c\in C}\left(f(a,c)-\sum_{i=1}^{m}\lambda_ig_i(a,c)-\sum_{j=1}^{\ell}\gamma_{j,a}h_j(a,c)\right)x(a,dc)\\
:=&\sum_{a\in A}\int_{c\in C}\mathcal{L}(a,c;\lambda,\gamma)x(a,dc)\\
\le& \max_{a\in A,c\in C}\mathcal{L}(a,c;\lambda,\gamma).
\end{aligned}
\end{equation}
Hence 
$$
\max_{x\in \mathcal{P}(A\times C)}L(x;\lambda,\gamma)\le \max_{a\in A,c\in C}\mathcal{L}(a,c;\lambda,\gamma).
$$
On the other hand, if $(a^*,c^*)\in\arg\max_{a\in A,c\in C}\mathcal{L}(a,c;\lambda,\gamma)$, then 
$$
L(\delta_{(a^*,c^*)};\lambda,\gamma)=\max_{a\in A,c\in C}\mathcal{L}(a,c;\lambda,\gamma),
$$
implying that
$$
\max_{x\in \mathcal{P}(A\times C)}L(x;\lambda,\gamma)\ge \max_{a\in A,c\in C}\mathcal{L}(a,c;\lambda,\gamma).
$$
Hence we can conclude that \eqref{eq:thm:sim_L_1} holds.

Now we prove \eqref{eq:thm:sim_L_2}. According to \eqref{eq:proof_thm_sim_L_1}, we know that
$$
L(x^*;\lambda,\gamma)=\sum_{a\in A}\int_{c\in C}\mathcal{L}(a,c;\lambda,\gamma)x^*(a,dc),
$$
and hence
$$
\begin{aligned}
& L(x^*;\lambda,\gamma)-\max_{x\in\mathcal{P}(A\times C)}L(x;\lambda,\gamma)\\
=&L(x^*;\lambda,\gamma)-\max_{a\in A,c\in C}\mathcal{L}(a,c;\lambda,\gamma)\\
=&\sum_{a\in A}\int_{c\in C}\mathcal{L}(a,c;\lambda,\gamma)x^*(a,dc)-\max_{a\in A,c\in C}\mathcal{L}(a,c;\lambda,\gamma)\\
=&\sum_{a\in A}\int_{c\in C}(\mathcal{L}(a,c;\lambda,\gamma)-\max_{a\in A,c\in C}\mathcal{L}(a,c;\lambda,\gamma))x^*(a,dc).
\end{aligned}
$$
Therefore, $x^*\in \arg\max_{x\in \mathcal{P}(A\times C)}L(x;\lambda,\gamma)$ if and only if $\mathcal{L}(a,c;\lambda,\gamma)=\max_{a\in A,c\in C}\mathcal{L}(a,c;\lambda,\gamma)$ a.s. with respect to probability $x^*$, which is equivalent to $(a,c)\in Z$ a.s. with respect to $x^*$.

\subsection{Proof of Theorem \ref{thm:dual=lot}}\label{app:dual=lot}

    We note that the equality \eqref{eq:thm:sim_L_1} in Theorem \ref{thm:simp_L} holds for any $(\lambda,\gamma)\in\mathbb{R}_+^{m}\times \mathbb{R}_+^{\ell |A|}$. Therefore, \eqref{eq:thm:sim_L_1} in Theorem  \ref{thm:simp_L} further yields that
     \begin{equation}\label{inf_sup_equiv}
    \inf_{(\lambda,\gamma)\in\mathbb{R}_+^{m}\times \mathbb{R}_+^{\ell |A|}}\max_{x\in \mathcal{P}(A\times C)}L(x;\lambda,\gamma)= \inf_{(\lambda,\gamma)\in\mathbb{R}_+^{m}\times \mathbb{R}_+^{\ell |A|}}\max_{a\in A,c\in C}\mathcal{L}(a,c;\lambda,\gamma),
    \end{equation}
    where $L$ is defined in \eqref{defL}. According to \eqref{eq:thm:sim_L_1} and \eqref{inf_sup_equiv}, $\mathcal{L}(a,c;\lambda,\gamma)$ in \eqref{dual} and \eqref{dual2} can be replaced by $L(x;\lambda,\gamma)$, i.e. to prove Theorem \ref{thm:dual=lot} it suffices to show that 
    \begin{equation}\label{dual_aim1}
        \max_{x\in\mathcal{P}(A\times C)}L(x;\lambda^*,\gamma^*)=\inf_{(\lambda,\gamma)\in\mathbb{R}_+^{m}\times \mathbb{R}_+^{\ell |A|}} \max_{x\in \mathcal{P}(A\times C)}L(x;\lambda,\gamma),
    \end{equation}
    and
    \begin{equation}\label{dual_aim2}
    \inf_{(\lambda,\gamma)\in\mathbb{R}_+^{m}\times \mathbb{R}_+^{\ell |A|}}\max_{x\in \mathcal{P}(A\times C)}L(x;\lambda,\gamma)=\sum_{a\in A}\int_{c\in C}f(a,c)x^*(a,dc).
    \end{equation}
    
    We first prove \eqref{dual_aim2}. Since $x^*$ satisfies all the constraints in \eqref{math_lot_mh}, for any $\lambda\ge 0,\,\gamma\ge0$, we know that
    $$
    \sum_{i=1}^{m}\lambda_i\sum_{a\in A}\int_{c\in C}g_i(a,c)x^*(a,dc)+\sum_{j=1}^{\ell}\sum_{a\in A}\gamma_{j,a}\int_{c\in C}h_{j}(a,c)x^*(a,dc)\le 0,
    $$
    and hence
    \begin{equation}\label{dual_mid1}
    \begin{aligned}
        &L(x^*;\lambda,\gamma)\\
        =&\sum_{a\in A}\int_{c\in C}f(a,c)x^*(a,dc)-\sum_{i=1}^{m}\lambda_i\sum_{a\in A}\int_{c\in C}g_i(a,c)x^*(a,dc)-\sum_{j=1}^{\ell}\sum_{a\in A}\gamma_{j,a}\int_{c\in C}h_{j}(a,c)x^*(a,dc) \\
        \ge &\sum_{a\in A}\int_{c\in C}f(a,c)x^*(a,dc).
    \end{aligned}
    \end{equation}
    Since $(\lambda,\,\gamma)$ is arbitrarily chosen in $\mathbb{R}_+^{m}\times \mathbb{R}^{\ell |A|}$, the inequality \eqref{dual_mid1} yields 
    $$
        \inf_{(\lambda,\gamma)\in\mathbb{R}_+^{m}\times \mathbb{R}_+^{\ell |A|}}L(x^*;\lambda,\gamma)\ge \sum_{a\in A}\int_{c\in C}f(a,c)x^*(a,dc),
    $$
    and further
    \begin{equation}\label{dual_ineq1}
         \max_{x\in\mathcal{P}(A\times C)}\inf_{(\lambda,\gamma)\in\mathbb{R}_+^{m}\times \mathbb{R}_+^{\ell |A|}}L(x;\lambda,\gamma)\ge  \inf_{(\lambda,\gamma)\in\mathbb{R}_+^{m}\times \mathbb{R}_+^{\ell |A|}}L(x^*;\lambda,\gamma)\ge \sum_{a\in A}\int_{c\in C}f(a,c)x^*(a,dc).
    \end{equation}
    Due to the general inf-sup inequality, \eqref{dual_ineq1} implies that
    \begin{equation}\label{dual_aim2_part1}
        \begin{aligned}
           &\inf_{(\lambda,\gamma)\in\mathbb{R}_+^{m}\times \mathbb{R}_+^{\ell |A|}}\max_{x\in\mathcal{P}(A\times C)}L(x;\lambda,\gamma)\\
           \text{ (Inf-sup Inequality)}\ge &  \max_{x\in\mathcal{P}(A\times C)}\inf_{(\lambda,\gamma)\in\mathbb{R}_+^{m}\times \mathbb{R}_+^{\ell |A|}}L(x;\lambda,\gamma)\\
           \text{ \eqref{dual_ineq1}}\ge& \sum_{a\in A}\int_{c\in C}f(a,c)x^*(a,dc).
        \end{aligned}
    \end{equation}
    On the other hand, since $(\lambda^*,\gamma^*)$ is the Lagrangian multipliers corresponding to $x^*$, we have
    \begin{equation}\label{dual_Lagmax}
        L(x^*;\lambda^*,\gamma^*)=\max_{x\in\mathcal{P}(A\times C)}L(x;\lambda^*,\gamma^*),
    \end{equation}
    implying that
     \begin{equation}\label{dual_aim2_part2_1}
        \inf_{(\lambda,\gamma)\in\mathbb{R}_+^{m}\times \mathbb{R}_+^{\ell |A|}}\max_{x\in\mathcal{P}(A\times C)}L(x;\lambda,\gamma)\\
    \le  \max_{x\in\mathcal{P}(A\times C)}L(x;\lambda^*,\gamma^*)
    =L(x^*;\lambda^*,\gamma^*).
    \end{equation}
    Due to the complementary conditions satisfied by the Lagrangian multipliers, we have
    \begin{equation}\label{dual_aim2_part2_2}
    \begin{aligned}
        &L(x^*;\lambda^*,\gamma^*)\\
        =&\sum_{a\in A}\int_{c\in C}f(a,c)x^*(a,dc)-\sum_{i=1}^{m}\lambda^*_i\sum_{a\in A}\int_{c\in C}g_i(a,c)x^*(a,dc)-\sum_{j=1}^{\ell}\sum_{a\in A}\gamma^*_{j,a}\int_{c\in C}h_{j}(a,c)x^*(a,dc) \\
        = &\sum_{a\in A}\int_{c\in C}f(a,c)x^*(a,dc).
    \end{aligned}
    \end{equation}
    We combine \eqref{dual_aim2_part2_1} and \eqref{dual_aim2_part2_2} to obtain
    \begin{equation}\label{dual_aim2_part2}
      \inf_{(\lambda,\gamma)\in\mathbb{R}_+^{m}\times \mathbb{R}_+^{\ell |A|}}\max_{x\in\mathcal{P}(A\times C)}L(x;\lambda,\gamma)\le   \sum_{a\in A}\int_{c\in C}f(a,c)x^*(a,dc).
    \end{equation}
    The equality \eqref{dual_aim2} is then deduced from \eqref{dual_aim2_part1} and \eqref{dual_aim2_part2}.
    Furthermore, the equality \eqref{dual_aim1} can be implied directly from \eqref{dual_aim2}, \eqref{dual_Lagmax}, and \eqref{dual_aim2_part2_2}. We can then combine \eqref{dual_aim1} and \eqref{dual_aim2} to finish the proof.

\subsection{Proof of Theorem \ref{thm:eps_optimal}}\label{app:eps_optimal}
Following the same steps in the proof for Theorem \ref{thm:relax}, we can choose a sequence $\{\epsilon_{n}\}_{n=1}^{\infty}\rightarrow 0$, such that $\tilde{x}^{\epsilon_n}$ converges in the weak* topology to some $\tilde{x}^*\in\mathcal{M}(A\times C)$, and $\tilde{x}^{*}$ satisfies all the constraints in the lottery system \eqref{math_lot_mh}. Suppose that $x^*$ is an optimal solution to \eqref{math_lot_mh}, it suffices to show that
\begin{equation}\label{proof:eps_optimal}
    \sum_{a\in A}\int_{c\in C}f(a,c)\tilde{x}^*(a,dc)\ge \sum_{a\in A}\int_{c\in C}f(a,c)x^*(a,dc).
\end{equation}
Indeed, by property 3 in the definition of an $\epsilon-$optimal solution, we know that
\begin{equation}\label{proof:eps_optimal_prop3}
\sum_{a\in A}\int_{c\in C}f(a,c)\tilde{x}^{\epsilon_n}(a,dc)\ge \sum_{a\in A}\int_{c\in C}f(a,c)x^*(a,dc)-\epsilon_n.
\end{equation}
The inequality \eqref{proof:eps_optimal} can then be directly obtained by taking limits on both sides of \eqref{proof:eps_optimal_prop3} .
\subsection{Proof of Theorem \ref{math_thm_final_mh}}\label{app: math_thm_final_mh}
The Theorem follows from the following two propositions
\begin{prop}\label{math_thm_heu_mh}
We assume that the sequence $(\mu^k)_{k=1}^\infty$ satisfies
$$
\sum_{k=1}^{\infty}\mu^k=\infty.
$$
We assume that $(\lambda^k,\gamma^k)$ converges to some $(\lambda^*,\gamma^*)$ when $k\rightarrow\infty$ and $x^*$ is the solution to system
\eqref{math_lot_mh}. Then for any $\epsilon>0$, there exists $N\in \mathbb{N}_+$, such that when $n>N$, $x^n$ is an $\epsilon$-optimal solution to system \eqref{math_lot_mh}.
\end{prop}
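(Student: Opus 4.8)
The plan is to verify directly the three defining inequalities of an $\epsilon$-optimal solution (Definition \ref{def:eps_opt_sol}) for the empirical measure $x^n$ when $n$ is large, using two elementary facts. First, since $x^n=\big(\sum_{k\le n}\mu^k\big)^{-1}\sum_{k\le n}\mu^k\delta_{(a^k,c^k)}$, integration against $x^n$ turns every constraint/objective integral into a $\mu$-weighted average along the iterates; I will write $G_i^n:=\sum_{a\in A}\int_{c\in C}g_i(a,c)x^n(a,dc)$ and $H_{j,a}^n:=\int_{c\in C}h_j(a,c)x^n(a,dc)$. Second, by construction $\mathcal{L}(a^k,c^k;\lambda^k,\gamma^k)=V(\lambda^k,\gamma^k)$ with $V$ as in \eqref{def:dualfunc}. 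Throughout I would use that $f,g_i,h_j$ are bounded on the compact set $A\times C$ (say $|g_i|\le M_g$, $|h_j|\le M_h$), that $V$ is continuous (a maximum over the compact set $A\times C$ of a jointly continuous function), and the weighted Ces\`aro (Toeplitz) fact that $v^k\to v$ together with $\mu^k\ge0$, $\sum_k\mu^k=\infty$ implies $\big(\sum_{k\le n}\mu^k\big)^{-1}\sum_{k\le n}\mu^k v^k\to v$.

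For the two feasibility conditions I would exploit the projection inequality $\max\{y,0\}\ge y$. The $\lambda$-update gives $\lambda_i^{k+1}\ge\lambda_i^k+\mu^k g_i(a^k,c^k)$, which telescopes to $\lambda_i^{n+1}-\lambda_i^1\ge\sum_{k\le n}\mu^k g_i(a^k,c^k)=\big(\sum_{k\le n}\mu^k\big)G_i^n$; dividing by $\sum_{k\le n}\mu^k$ and using that $\lambda^{n+1}$ is bounded (it converges) while $\sum_{k\le n}\mu^k\to\infty$ yields $G_i^n\le(\lambda_i^{n+1}-\lambda_i^1)/\sum_{k\le n}\mu^k\to0$, hence $\le\epsilon$ for $n$ large. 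The $h$-condition is identical, except that for a fixed action $a$ the telescoping runs only over the iterations with $a^k=a$ (the multiplier $\gamma_{\cdot,a}$ is frozen otherwise), giving $H_{j,a}^n\le(\gamma_{j,a}^{n+1}-\gamma_{j,a}^1)/\sum_{k\le n}\mu^k\to0$.

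For the near-optimality condition I would expand $f(a^k,c^k)=V(\lambda^k,\gamma^k)+\sum_i\lambda_i^k g_i(a^k,c^k)+\sum_j\gamma_{j,a^k}^k h_j(a^k,c^k)$ and average, writing $\sum_{a}\int f\,x^n(a,dc)=\overline V_n+T_2+T_3$, where $\overline V_n$ is the $\mu$-weighted average of $V(\lambda^k,\gamma^k)$ and $T_2,T_3$ collect the constraint terms. By continuity of $V$, convergence $(\lambda^k,\gamma^k)\to(\lambda^*,\gamma^*)$, and the Toeplitz fact, $\overline V_n\to V(\lambda^*,\gamma^*)$; weak duality together with Theorem \ref{thm:dual=lot} gives $V(\lambda^*,\gamma^*)\ge\inf V=\sum_a\int f\,x^*(a,dc)$. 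It then remains to show $T_2\to0$ and $T_3\to0$, after which condition 3 follows since $\sum_a\int f\,x^n(a,dc)\to V(\lambda^*,\gamma^*)\ge\sum_a\int f\,x^*(a,dc)$.

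The main obstacle is precisely this last step: the feasibility telescoping controls the averaged constraint values only from above, whereas condition 3 needs a lower bound on $T_2+T_3$. The idea I would use is asymptotic complementary slackness, drawn from the convergence hypothesis rather than from any step-size smallness assumption such as $\sum_k(\mu^k)^2<\infty$. Writing $\lambda_i^k=\lambda_i^*+(\lambda_i^k-\lambda_i^*)$, the remainder contributes at most $M_g\big(\sum_{k\le n}\mu^k\big)^{-1}\sum_{k\le n}\mu^k|\lambda_i^k-\lambda_i^*|\to0$ by Toeplitz, leaving the term $\sum_i\lambda_i^* G_i^n$. Here I split on the sign of $\lambda_i^*$: if $\lambda_i^*=0$ the term vanishes; if $\lambda_i^*>0$, then $\lambda_i^k\to\lambda_i^*>0$ forces $\lambda_i^k>0$ for all large $k$, so no clipping can occur there (clipping would reset the multiplier to $0$), whence $\mu^k g_i(a^k,c^k)=\lambda_i^{k+1}-\lambda_i^k$ telescopes to a bounded quantity and $G_i^n\to0$ \emph{two-sidedly}. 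The same argument applied to $\gamma_{j,a}$ and $h_j$, with telescoping restricted to $a^k=a$, gives $T_3\to0$. I expect the bookkeeping around the clipping/non-clipping dichotomy and the action-dependent freezing of $\gamma_{\cdot,a}$ to be the only genuinely delicate part of the argument.
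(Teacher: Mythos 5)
Your proof is correct, and its hard parts coincide with the paper's own argument: properties 1 and 2 of Definition \ref{def:eps_opt_sol} are obtained by exactly the same projection-inequality telescoping (with the $\gamma_{\cdot,a}$ multipliers frozen off the iterations where $a^k=a$), and your ``asymptotic complementary slackness'' --- the dichotomy $\lambda_i^*=0$ versus $\lambda_i^*>0$, with the observation that a strictly positive limit eventually rules out clipping, so the update becomes an exact identity and $G_i^n:=\sum_{a\in A}\int_{c\in C}g_i(a,c)x^n(a,dc)\to 0$ two-sidedly --- is precisely Step 1 of the paper's proof of property 3, as is the Toeplitz treatment of the $(\lambda^k-\lambda^*,\gamma^k-\gamma^*)$ remainder. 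Where you genuinely diverge is the final step. You use the maximality of $(a^k,c^k)$ as the identity $f(a^k,c^k)=V(\lambda^k,\gamma^k)+\sum_i\lambda_i^k g_i(a^k,c^k)+\sum_j\gamma_{j,a^k}^k h_j(a^k,c^k)$, pass to the limit via continuity of $V$, and invoke Theorem \ref{thm:dual=lot} to conclude $V(\lambda^*,\gamma^*)\ge \inf V=\sum_{a\in A}\int_{c\in C} f(a,c)\,x^*(a,dc)$. The paper instead keeps maximality as the pointwise inequality $\mathcal{L}(a^k,c^k;\lambda^k,\gamma^k)\ge\mathcal{L}(a,c;\lambda^k,\gamma^k)$ for every $(a,c)$, averages it along the iterations, and then integrates the resulting bound against $x^*(a,dc)$: since $x^*$ is feasible and $\lambda^k,\gamma^k\ge 0$, the constraint terms tested against $x^*$ have the right sign, which directly yields $\sum_{a}\int f\,x^n(a,dc)\ge\sum_{a}\int f\,x^*(a,dc)-\epsilon$. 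The trade-off is this: your route is shorter and, as a byproduct, identifies the limiting value of the iterates as $V(\lambda^*,\gamma^*)$ (so the limit multipliers must in fact be dual optimal); but it relies on Theorem \ref{thm:dual=lot}, whose hypotheses include existence of Lagrange multipliers for the lottery problem \eqref{math_lot_mh} --- legitimate under the paper's standing assumption announced after Theorem \ref{thm:relax}, yet not among the explicit hypotheses of Proposition \ref{math_thm_heu_mh}. The paper's test-against-$x^*$ argument uses only feasibility of $x^*$ and nonnegativity of the multipliers, so it is self-contained at the stated level of generality and never needs continuity of $V$ or any duality theorem.
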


\begin{proof}
To prove the proposition, we need to prove the following properties for $x^n$:
\begin{enumerate}
    \item The constraint \eqref{math_lot_mh_relax1}(with $x$ replaced by $x^n$) holds;
    \item The constraint \eqref{math_lot_mh_relax2}(with $x$ replaced by $x^n$) holds;
    \item The property \eqref{eq:property3}(with $\tilde{x}^{\epsilon}$ replaced by $x^n$) holds.
\end{enumerate}
We first prove property 1, i.e. 
$$
\sum_{a\in A}\int_{c\in C}g_i(a,c)x^{n}(a,dc)\le \epsilon,\quad \text{for } i\in\{1,\cdots,m\}.
$$ 
For $i\in\{1,...,m\}$, we have
    $$
    \sum_{a\in A}\int_{c\in C}g_i(a,c)x^n(a,dc)=\frac{1}{\sum_{k=1}^{n}\mu_k}\sum_{k=1}^{n}\mu_kg_i(a^k,c^k).
    $$
by the definition of $x^n$ that
$$
x^n=\frac{1}{\sum_{k=1}^n\mu^k}\sum_{k=1}^{n}\mu^k\delta_{(a^k,c^k)}.
$$
By the updating rule for $\lambda_i$, written $$\lambda_i^{k+1}=\max\{\lambda_i^k+\mu^kg_i(a^k,c^k),0\}\ge \lambda_i^k+\mu^kg_i(a^k,c^k), \quad k=1,\cdots, n,$$ we know that
$$
\sum_{k=1}^{n}\lambda_i^{k+1}\ge \sum_{k=1}^{n}\left[\lambda_i^k+\mu^kg_i(a^k,c^k)\right],
$$
which can be simplified as
$$
\lambda_i^{n+1}\ge \lambda_i^1+\sum_{k=1}^{n}\mu_kg_i(a^k,c^k)=\lambda_i^{1}+\left(\sum_{k=1}^{n}\mu^k\right)\sum_{a\in A}\int_{c\in C}g_i(a,c)x^n(a,dc).
$$
Hence
\begin{equation}\label{proof_prop1_mh}
\sum_{a\in A}\int_{c\in C}g_i(a,c)x^n(a,dc)\le\frac{\lambda_i^{n+1}-\lambda_i^1}{\sum_{k=1}^{n}\mu^k}.
\end{equation}
Property 1 is then implied by the fact that $\lambda_i^{n+1}\rightarrow\lambda_i^{*}$ (which leads to the fact that the numerator $\lambda_i^{n+1}-\lambda_i^1$ is bounded) and $\sum_{k=1}^{n}\mu^k\rightarrow\infty$.

We then prove property 2. The spirit of the proof is the same as that for property 1. For $j\in\{1,\cdots,l\},a\in A$, we have
$$
\int_{c\in C}h_j(a,c)x^n(a,dc)=\frac{1}{\sum_{k=1}^{n}\mu^k}\sum_{\{k:a^k=a\}}\mu^kh_{j}(a,c^k),
$$
by the definition of $x^n$. By the updating rule of $\gamma_{j,a}$, we have
$$
\gamma^{n+1}_{j,a}\ge \gamma^{1}_{j,a}+\sum_{\{k:a^k=a\}}\mu^kh_{j} (a,c^k)=\gamma_{j,a}^1+\left(\sum_{k=1}^{n}\mu^k\right)\int_{c\in C}h_j(a,c)x^n(a,dc).
$$
Hence 
\begin{equation}\label{proof_prop2_mh}
\int_{c\in C}h_j(a,c)x^n(a,dc)\le\frac{\gamma_{j,a}^{n+1}-\gamma_{j,a}^1}{\sum_{k=1}^{n}\mu^k}.
\end{equation}
Property 2 holds by the fact that $\gamma_{j,a}^{n}\rightarrow\gamma_{j,a}^*$ and $\sum_{k=1}^{n}\mu^k\rightarrow \infty$.

Now we prove property 3, i.e.
\begin{equation}\label{eq:proof_prop3_mh}
\sum_{a\in A}\int_{c\in C}f(a,c)x^n(a,dc)\ge \sum_{a\in A}\int_{c\in C}f(a,c)x^*(a,dc)-\epsilon,
\end{equation}
where $x^*$ is the optimal solution of \eqref{math_lot_mh}. The proof for property 3 can be divided into two steps. 

\textbf{Step 1. }We define the sets
$$
\Lambda^g_1:=\{i\in\{1,2...,m\},\lambda_i^*>0\},\,\Lambda^h_1:=\{  {(j,a)}\in \{1,\cdots,l\}\times A,  {\gamma_{j,a}^*}>0\}.
$$
$$
\Lambda^g_2:=\{i\in\{1,2...,m\},\lambda_i^*=0\},\,\Lambda^h_2:=\{  {(j,a)}\in \{1,\cdots,l\}\times A,  {\gamma_{j,a}^*}=0\}.
$$
In this step, we prove that when $i\in \Lambda^g_1$, we have 
\begin{equation}\label{proof_prop3_claim_mh}
\sum_{a\in A}\int_{c\in C}g_i(a,c)x^{n}(a,dc)\rightarrow0,
\end{equation}
and when $(j,a)\in \Lambda_1^h$, we have
\begin{equation}\label{proof_prop3_claim_mh2}
\int_{c\in C}  {h_j(a,c)}x^{n}(a,dc)\rightarrow0.
\end{equation}
Before we prove \eqref{proof_prop3_claim_mh} and \eqref{proof_prop3_claim_mh2}, we first interpret these two formulas. Since we have already proven the property 1 and property 2 of definition \ref{def:eps_opt_sol} hold, we now have
$$
\sum_{a\in A}\int_{c\in C}g_i(a,c)x^{n}(a,dc)\le \epsilon, \quad \forall i\in \{1,\cdots,m\},
$$
and
$$
\int_{c\in C}  {h_j(a,c)}x^{n}(a,dc)\le \epsilon,\quad \forall (j,a)\in \{1,\cdots,\ell\}\times A.
$$
These two formulas \eqref{proof_prop3_claim_mh} and \eqref{proof_prop3_claim_mh2} further state that the complementary conditions
$$
\lambda_i^*\sum_{a\in A}\int_{c\in C}g_i(a,c)x^n(a,dc)=0, \quad \forall i\in \{1,\cdots, m\},
$$
and
$$
  {\gamma_{j,a}^*}\int_{c\in C}  {h_j(a,c)}x^{n}(a,dc)=0, \quad \forall (j,a)\in \{1,\cdots, \ell\}\times A,
$$
are approximately satisfied when $n$ is sufficiently large, which in a way implies that when $i\in \Lambda_1^g$ or   {$(j,a)\in \Lambda_1^h$}, the corresponding constraints are active.

Now we prove \eqref{proof_prop3_claim_mh}. For every $i\in \Lambda^g_1$, according to the fact that $\lambda_i^n\rightarrow \lambda_i^*>0$, we know that there exists $N_i\in \mathbb{N}_+$, such that when $n>N_i$, we have $\lambda_i^n>0$. Then by the updating rule of $\lambda_i^n$, when $n\ge N_i$ we have
 $$0<\lambda_i^{n+1}=\max\{\lambda_i^n+\mu^ng_i(a^n,c^n),0\},$$
and thus
$
\lambda_i^{n+1}=\lambda_i^{n}+\mu^ng_i(a^n,c^n),\quad \forall n\ge N_i.
$
Therefore 
$$
\begin{aligned}
\lambda_i^{N_i+n+1}&=\lambda_i^{N_i}+\sum_{k=1}^{n}\mu^{N_i+k}g_i(a^{N_i+k},c^{N_i+k})\\
&=\lambda_i^{N_i}+\sum_{k=1}^{N_i+n}\mu^{k}g_i(a^k,c^{k})-\sum_{k=1}^{N_i}\mu^{k}g_i(a^k,c^{k})\\
&=\lambda_i^{N_i}+\left(\sum_{k=1}^{N_i+n}\mu^k\right)\sum_{a\in A}\int_{c\in C}g_i(a,c)x^{N_i+n}(a,dc)-\left(\sum_{k=1}^{N_i}\mu^k\right)\sum_{a\in A}\int_{c\in C}g_i(a,c)x^{N_i}(a,dc),\quad \forall n\ge 0.
\end{aligned}
$$
Hence
\begin{equation}\label{proof_prop3_comple_mh}
\sum_{a\in A}\int_{c\in C}g_i(a,c)x^{N_i+n}(a,dc)=\frac{\lambda_i^{N_i+n+1}-\lambda_i^{N_i}}{\sum_{k=1}^{N_i+n}\mu^k}+\frac{\sum_{k=1}^{N_i}\mu^k}{\sum_{k=1}^{N_i+n}\mu^k}\sum_{a\in A}\int_{c\in C}g_i(a,c)x^{N_i}(a,dc).
\end{equation}
Since $\lambda_i^{n}\rightarrow\lambda_i^*$ (hence $\lambda_i^{N_i+n+1}-\lambda_i^{N_i}$ is bounded) and $\sum_{k=1}^{N_i+n}\mu^k\rightarrow\infty$, the right hand side of (\ref{proof_prop3_comple_mh}) converges to 0 as $n \rightarrow \infty$, and we have finished the claim (\ref{proof_prop3_claim_mh}). The proof for \eqref{proof_prop3_claim_mh2} is the same as the one for \eqref{proof_prop3_claim_mh} so we omit the details here. 

\textbf{Step 2. } We  utilize claims (\ref{proof_prop3_claim_mh}) and (\ref{proof_prop3_claim_mh2}) to prove \eqref{eq:proof_prop3_mh}. By the definition of $x^n$, we have
\begin{equation}\label{proof_prop3_fformula_mh}
\sum_{a\in A}\int_{c\in C}f(a,c)x^n(a,dc)=\frac{1}{\sum_{k=1}^n\mu^k}\sum_{k=1}^{n}\mu^kf(a^k,c^k)
\end{equation}
By the fact that
$$
(a^k,c^k)\in\arg\max_{a\in A,c\in C}f(a,c)-\sum_{i=1}^{m}\lambda_i^kg_i(a,c)-\sum_{j=1}^{\ell}\gamma_{j,a}^k  {h_j(a,c)},
$$
we have
\begin{equation}\label{eq:proof_prop3_maxi}
  {\begin{aligned}
&f(a^k,c^k)-\sum_{i=1}^{m}\lambda_i^kg_i(a^k,c^k)-\sum_{j=1}^{\ell}\gamma_{j,a^k}^kh_{j}(a^k,c^k)\\
\ge &f(a,c)-\sum_{i=1}^{m}\lambda_i^kg_i(a,c)-\sum_{j=1}^{\ell}\gamma_{j,a}^kh_j(a,c),\,\text{for all }a\in A,c\in C.
\end{aligned}}
\end{equation}
The key to the proof is to utilize the inequality \eqref{eq:proof_prop3_maxi} for each $1\le k\le n$ and give a lower bound for the right-hand side of \eqref{proof_prop3_fformula_mh}. To be precise, for any $a\in A,\,c\in C,$ we have
\begin{equation}\label{eq:proof_prop3_fineq1}
  {\begin{aligned}
    &\frac{1}{\sum_{k=1}^n\mu^k}\sum_{k=1}^n\mu^kf(a^k,c^k)\\
    =&\frac{1}{\sum_{k=1}^n\mu^k}\sum_{k=1}^n\mu^k\left(f(a^k,c^k)-\sum_{i=1}^{m}\lambda_i^kg_i(a^k,c^k)-\sum_{j=1}^{\ell}\gamma_{j,a^k}^kh_{j}(a^k,c^k)\right.\\
&\left.+\sum_{i=1}^{m}\lambda_i^kg_i(a^k,c^k)+\sum_{j=1}^{\ell}\gamma_{j,a^k}^kh_{j}(a^k,c^k)\right)\\
    \ge&\frac{1}{\sum_{k=1}^n\mu^k}\sum_{k=1}^n\mu^k\left(f(a,c)-\sum_{i=1}^{m}\lambda_i^kg_i(a,c)-\sum_{j=1}^{\ell}\gamma_{j,a}^kh_j(a,c)\right.\\
&\left.+\sum_{i=1}^{m}\lambda_i^kg_i(a^k,c^k)+\sum_{j=1}^{\ell}\gamma_{j,a^k}^kh_{j}(a^k,c^k)\right)\\
\end{aligned}}
\end{equation}
We utilize the relations $\lambda_i^k=(\lambda_i^k-\lambda_i^*)+\lambda_i^*$ and   {$\gamma_{j,a}^k=(\gamma_{j,a}^k-\gamma_{j,a}^*)+\gamma_{j,a}^*$} to simplify the terms $\sum_{i}^{m}\lambda_i^kg_i(a^k,c^k)$ and   {$\sum_{j=1}^{\ell}\gamma_{j,a^k}^kh_{j}(a^k,c^k)$} in the brackets on the right hand side of \eqref{eq:proof_prop3_fineq1}, and have
\begin{equation}\label{eq:proof_prop3_fineq2}
  {\begin{aligned}
    &\frac{1}{\sum_{k=1}^n\mu^k}\sum_{k=1}^n\mu^kf(a^k,c^k)\\
    \ge &\frac{1}{\sum_{k=1}^n\mu^k}\sum_{k=1}^n\mu^k\left(f(a,c)-\sum_{i=1}^{m}\lambda_i^kg_i(a,c)-\sum_{j=1}^{\ell}\gamma_{j,a}^k  {h_j(a,c)}
    +\sum_{i=1}^{m}\lambda_i^*g_i(a^k,c^k)+\sum_{j=1}^{\ell}\gamma_{j,a^k}^*h_{j}(a^k,c^k) \right.\\
    &\left.+\sum_{i=1}^{m}(\lambda_i^k-\lambda_i^*)g_i(a^k,c^k)+\sum_{j=1}^{\ell}(\gamma_{j,a^k}^k-\gamma_{j,a^k}^*)h_{j}(a^k,c^k)\right)\\
    =&f(a,c)-\frac{1}{\sum_{k=1}^{n}\mu^k}\sum_{k=1}^{n}\mu^k\sum_{i=1}^{m}\lambda_i^kg_i(a,c)-\frac{1}{\sum_{k=1}^{n}\mu^k}\sum_{k=1}^{n}\mu^k\sum_{j=1}^{\ell}\gamma_{j,a}^k  {h_j(a,c)}\\
    &+\sum_{i=1}^{m}\lambda_i^*\sum_{a\in A}\int_{c\in C}g_i(a,c)x^n(a,dc)+\sum_{j=1}^{\ell}\sum_{a\in A}\gamma_{j,a}^*\int_{c\in C}   {h_j(a,c)}x^n(a,dc)\\
    &+\frac{1}{\sum_{k=1}^n\mu^k}\sum_{k=1}^n\mu^k\left( \sum_{i=1}^{m}(\lambda_i^k-\lambda_i^*)g_i(a^k,c^k)\right)+\frac{1}{\sum_{k=1}^n\mu^k}\sum_{k=1}^n\mu^k\left( \sum_{j=1}^{\ell}(\gamma_{j,a^k}^k-\gamma_{j,a^k}^*)h_{j}(a^k,c^k)\right).
\end{aligned}}
\end{equation}
We further define that
$$
I_1=f(a,c)-\frac{1}{\sum_{k=1}^{n}\mu^k}\sum_{k=1}^{n}\mu^k\sum_{i=1}^{m}\lambda_i^kg_i(a,c)-\frac{1}{\sum_{k=1}^{n}\mu^k}\sum_{k=1}^{n}\mu^k\sum_{j=1}^{\ell}\gamma_{j,a}^k  {h_j(a,c)},
$$
$$
I_2=\sum_{i=1}^{m}\lambda_i^*\sum_{a\in A}\int_{c\in C}g_i(a,c)x^n(a,dc)+\sum_{j=1}^{\ell}\sum_{a\in A}\gamma_{j,a}^*\int_{c\in C}   {h_j(a,c)}x^n(a,dc),
$$
and
$$
I_3=\frac{1}{\sum_{k=1}^n\mu^k}\sum_{k=1}^n\mu^k\left( \sum_{i=1}^{m}(\lambda_i^k-\lambda_i^*)g_i(a^k,c^k)\right)+\frac{1}{\sum_{k=1}^n\mu^k}\sum_{k=1}^n\mu^k\left(\sum_{j=1}^{\ell}(\gamma_{j,a^k}^k-\gamma_{j,a^k}^*)h_{j}(a^k,c^k)\right).
$$
The inequality \eqref{eq:proof_prop3_fineq2} can then be written as
\begin{equation}\label{eq:proof_prop3_fineqfinal}
\begin{aligned}
    &\frac{1}{\sum_{k=1}^n\mu^k}\sum_{k=1}^n\mu^kf(a^k,c^k)\ge I_1+I_2+I_3.
\end{aligned}
\end{equation}
Next, we give a lower bound estimation of $I_3, I_2$, and $I_1$ respectively. For $I_3$, By the facts that $\lambda_i^k\rightarrow\lambda_i^*$ and $\gamma_{j,a}^k\rightarrow \gamma_{j,a}^*$, together with the boundedness of $g_i$ and $h_j$, we know that
\begin{equation}\label{eq:proof_prop3_I3est1}
 \sum_{i=1}^{m}(\lambda_i^k-\lambda_i^*)g_i(a^k,c^k)+
 \sum_{j=1}^{\ell}(\gamma_{j,a^k}^k-\gamma_{j,a^k}^*)h_{j}(a^k,c^k)\rightarrow 0.
\end{equation}
Since $\sum_{k=1}^{\infty}\mu^k=\infty$, \eqref{eq:proof_prop3_I3est1} yields
$$
I_3=\frac{1}{\sum_{k=1}^n\mu^k}\sum_{k=1}^n\mu^k\left( \sum_{i=1}^{m}(\lambda_i^k-\lambda_i^*)g_i(a^k,c^k)\right)+\frac{1}{\sum_{k=1}^n\mu^k}\sum_{k=1}^n\mu^k\left( \sum_{j=1}^{\ell}(\gamma_{j,a^k}^k-\gamma_{j,a^k}^*)h_{j}(a^k,c^k)\right)\rightarrow 0.
$$
Hence, for any $\epsilon>0$, there exists $N_1\in \mathbb{N}_+$, such that when $n>N_1$, 
\begin{equation}\label{eq:proof_prop3_I3estfinal}
I_3>-\frac{\epsilon}{2}.
\end{equation}
Next, we consider the lower bound of $I_2$. By the formulas \eqref{proof_prop3_claim_mh} and \eqref{proof_prop3_claim_mh2} proven in step 1, we have
$$
\begin{aligned}
I_2&=\sum_{i=1}^{m}\lambda_i^*\sum_{a\in A}\int_{c\in C}g_i(a,c)x^n(a,dc)+\sum_{j=1}^{\ell}\sum_{a\in A}\gamma_{j,a}^*\int_{c\in C}   {h_j(a,c)}x^n(a,dc)\\
&=\sum_{i\in\Lambda_1^g}\lambda_i^*\sum_{a\in A}\int_{c\in C}g_i(a,c)x^n(a,dc)+\sum_{(j,a)\in\Lambda_1^h}\gamma_{j,a}^*\int_{c\in C}   {h_j(a,c)}x^n(a,dc)\\
&\rightarrow 0.
\end{aligned}
$$
Therefore there exists $N_2\in \mathbb{N}_+$, such that when $n>N_2$, we have
\begin{equation}\label{eq:proof_prop3_I2estfinal}
    I_2>-\frac{\epsilon}{2}.
\end{equation}
We combine \eqref{eq:proof_prop3_fineqfinal}, \eqref{eq:proof_prop3_I3estfinal} and \eqref{eq:proof_prop3_I2estfinal} and see that when $n>\max\{N_1,N_2\}$, we have
\begin{equation}\label{proof_prop3_mainine2_mh}
\begin{aligned}
&\frac{1}{\sum_{k=1}^n\mu^k}\sum_{k=1}^n\mu^kf(a^k,c^k)
\ge  I_1-\epsilon\\
=&f(a,c)-\frac{1}{\sum_{k=1}^{n}\mu^k}\sum_{k=1}^{n}\mu^k\sum_{i=1}^{m}\lambda_i^kg_i(a,c)-\frac{1}{\sum_{k=1}^{n}\mu^k}\sum_{k=1}^{n}\mu^k\sum_{j=1}^{\ell}\gamma_{j,a}^k  {h_j(a,c)}-\epsilon
\end{aligned}
\end{equation}
We multiply $x^*(a,dc)$ to both sides of (\ref{proof_prop3_mainine2_mh}) and integrate over $A\times C$. The inequality (\ref{proof_prop3_mainine2_mh}) then implies
$$
\begin{aligned}
&\frac{1}{\sum_{k=1}^n\mu^k}\sum_{k=1}^n\mu^kf(a^k,c^k)\\
\ge &\sum_{a\in A}\int_{c\in C}f(a,c)x^*(a,dc)-\frac{1}{\sum_{k=1}^{n}\mu^k}\sum_{k=1}^{n}\mu^k\sum_{i=1}^{m}\lambda_i^k\sum_{a\in A}\int_{c\in C}g_i(a,c)x^*(a,dc)\\
&-\frac{1}{\sum_{k=1}^{n}\mu^k}\sum_{k=1}^{n}\mu^k\sum_{j=1}^{\ell}\sum_{a\in A}\gamma_{j,a}^k\int_{c\in C}  {h_j(a,c)}x^*(a,dc)-\epsilon\\
\ge &\sum_{a\in A}\int_{c\in C}f(a,c)x^*(a,dc)-\epsilon,
\end{aligned}
$$
where the last inequality holds by the fact that $x^*$ satisfies all the constraints in \eqref{math_lot_mh}. Hence, we prove Property 3 and finish the proof for Proposition  \ref{math_thm_heu_mh}.
\end{proof}
\begin{prop}\label{math_thm_lagconv_mh}
We assume that the sequence $(\mu^k)_{k=1}^\infty$ satisfies
$$
\sum_{k=1}^{\infty}\mu^k=\infty, \text{\quad and\quad}\sum_{k=1}^{\infty}(\mu^k)^2<\infty.
$$
Let $x^*$ be the solution to system \eqref{math_lot_mh}, and corresponding Lagrangian multipliers to $x^*$ exist. Then $(\lambda^k,\gamma^k)$ generated by Algorithm \ref{math_alg_mh} converge to some $(\lambda^*,\gamma^*)$, where $(\lambda^*,\gamma^*)$ is a minimizer of the dual problem 
$$
\inf_{(\lambda,\gamma)\in \mathbb{R}_+^m\times\mathbb{R}_+^{\ell |A|}} V(\lambda,\gamma):=\max_{a\in A,\,c\in C} \mathcal{L}(a,c;\lambda,\gamma),
$$
where $\mathcal{L}$ is defined in \eqref{defcalL}.
\end{prop}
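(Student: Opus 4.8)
The plan is to recognize that Algorithm \ref{math_alg_mh} is precisely the projected subgradient descent method applied to the convex dual function $V$, and then to run the classical convergence argument for such schemes under the step-size conditions $\sum_k \mu^k = \infty$ and $\sum_k (\mu^k)^2 < \infty$ (cf. \cite{nedic2001convergence}). Write $z^k := (\lambda^k,\gamma^k)$ and let $z^* := (\lambda^*,\gamma^*)$ be a minimizer of $V$ over $\mathbb{R}_+^m \times \mathbb{R}_+^{\ell|A|}$; such a minimizer exists because, by Theorem \ref{thm:dual=lot}, the multipliers associated with $x^*$ attain the dual infimum. By Lemma \ref{lem:subgradient}, the vector $s^k := -(\Delta\lambda,\Delta\gamma)$ assembled from $g_i(a^k,c^k)$ and $h_j(a^k,c^k)$ satisfies $s^k \in \partial V(z^k)$, and Step 2 of the algorithm is exactly $z^{k+1} = \Pi(z^k - \mu^k s^k)$, where $\Pi$ is the Euclidean projection onto the nonnegative orthant. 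The fact that only the $\gamma_{\cdot,a^k}$ coordinates are updated is consistent with this, since the remaining subgradient components vanish by construction.

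Next I would derive the fundamental one-step recursion. Because $z^* \ge 0$ lies in the convex nonnegative orthant and $\Pi$ is nonexpansive, $\|z^{k+1}-z^*\|^2 \le \|z^k - \mu^k s^k - z^*\|^2$. Expanding the square and invoking the subgradient inequality $V(z^*) \ge V(z^k) + \langle s^k, z^* - z^k\rangle$ gives
\[
\|z^{k+1}-z^*\|^2 \le \|z^k-z^*\|^2 - 2\mu^k\bigl(V(z^k)-V(z^*)\bigr) + (\mu^k)^2\|s^k\|^2 .
\]
Since $A\times C$ is compact and the $g_i,h_j$ are continuous, there is a uniform bound $\|s^k\| \le G$, so the last term is at most $G^2(\mu^k)^2$. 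As $z^*$ is a global minimizer, $V(z^k)-V(z^*)\ge 0$, and dropping that nonnegative term yields the quasi-Fej\'er bound $\|z^{k+1}-z^*\|^2 \le \|z^k-z^*\|^2 + G^2(\mu^k)^2$.

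I would then pass to the limit in two stages. Summing the quasi-Fej\'er bound and using $\sum_k (\mu^k)^2 < \infty$, the standard summable-perturbation lemma shows that $\|z^k-z^*\|$ converges for \emph{every} minimizer $z^*$; in particular the sequence $(z^k)$ is bounded. Telescoping the sharper recursion gives $\sum_k 2\mu^k\bigl(V(z^k)-V(z^*)\bigr) \le \|z^1-z^*\|^2 + G^2\sum_k(\mu^k)^2 < \infty$, and combining this with $\sum_k \mu^k = \infty$ forces $\liminf_k\bigl(V(z^k)-V(z^*)\bigr)=0$. Hence some subsequence $z^{k_j}$ has $V(z^{k_j})\to V(z^*)$; by boundedness a further subsequence converges to a point $\bar z$, which is a minimizer by continuity of $V$. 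Applying the first-stage conclusion with the reference point $\bar z$, the full sequence $\|z^k-\bar z\|$ converges, while $\|z^{k_j}-\bar z\|\to 0$ along the subsequence; therefore the common limit must be zero, so $z^k\to\bar z =: (\lambda^*,\gamma^*)$, a minimizer of the dual.

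The main obstacle is this last step: upgrading convergence of the function values, which the step-size conditions deliver directly, to convergence of the iterates themselves. This is exactly where quasi-Fej\'er monotonicity is indispensable, as it simultaneously furnishes boundedness of $(z^k)$ and, once a subsequential limit is identified as a minimizer, pins down the entire sequence. A secondary point to verify carefully is the uniform bound $\|s^k\|\le G$, which rests on compactness of $C$ together with continuity of the constraint functions, and the observation that the projection step preserves the subgradient inequality precisely because $z^*\ge 0$.
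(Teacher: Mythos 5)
Your proposal is correct and takes essentially the same route as the paper: both verify that Algorithm \ref{math_alg_mh} is projected subgradient descent on the dual function $V$, with existence of a dual minimizer supplied by Theorem \ref{thm:dual=lot} and uniform boundedness of the subgradients from Lemma \ref{lem:subgradient} together with compactness of $A\times C$. The only difference is that the paper then invokes Proposition 2.7 of Nedic and Bertsekas (2001) as a black box, whereas you prove that classical result inline via the quasi-Fej\'er/summable-perturbation argument — including the correct observation that the paper's orthant-restricted subgradient inequality suffices because the minimizer lies in $\mathbb{R}_+^m\times\mathbb{R}_+^{\ell|A|}$.
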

\begin{proof}
We define the set of minimizers to the dual problem as
\begin{equation}\label{def:optimallambda}
\Lambda^*:=\left\{(\lambda^*,\gamma^*)\in\mathbb{R}_+^m\times\mathbb{R}_+^{\ell |A|},\,V(\lambda^*,\gamma^*)= \inf_{(\lambda,\gamma)\in\mathbb{R}_+^m\times\mathbb{R}_+^{\ell |A|}}V(\lambda,\gamma)\right\}.
\end{equation}
 \eqref{dual} in Theorem \ref{thm:dual=lot} implies that $\Lambda^*$ is a non-empty set. Furthermore, for any $(\lambda,\gamma)\in\mathbb{R}_+^m\times\mathbb{R}_+^{\ell |A|}$, according to Lemma \ref{lem:subgradient}, $(\Delta\lambda,\Delta \gamma)$ defined in \eqref{def_updating_direction} is a negative sub-gradient of $V(\lambda,\gamma)$, and 
$$
\|(\Delta \lambda,\Delta \gamma)\|_2^2=\sum_{i=1}^{m}|g_i(a_\lambda,c_{\lambda})|^2+\sum_{j=1}^{\ell}|h_j(a_{\lambda},c_{\lambda})|^2
$$
is uniformly bounded due to the continuity of the constraints $g_i,\,h_j$, and the compactness of the consumption set $C$ and the action set $A$. The theorem therefore follows from Proposition 2.7 in \cite{nedic2001convergence}. 
\end{proof}
\subsection{Proof of Proposition \ref{thm_complexity}}\label{app:thm_complexity}
The proof can be divided into two steps. At the first step, we show the properties of the lottery solution $x^n$ generated from Algorithm \ref{math_alg_mh} as follows:
\begin{equation}\label{proof_complexity_g}
\sum_{a \in A} \int_{c \in C} g_i(a, c) x^n(a, \mathrm{d}c) \le \frac{{\bar{\Lambda}}}{\sum_{k=1}^{n}\mu^k};
\end{equation}
\begin{equation}\label{proof_complexity_h}
\int_{c \in C} h_{j}(a, c) x^n(a, \mathrm{d}c) \le \frac{{\bar{\Lambda}}}{\sum_{k=1}^{n}\mu^k};
\end{equation}
and
\begin{equation}\label{proof_complexity_f}
\sum_{a \in A} \int_{c \in C} f(a, c) x^{n}(a, \mathrm{d}c)\ge \sum_{a \in A} \int_{c \in C} f(a, c) x^{*}(a, \mathrm{d}c)-\frac{M^2(m+\ell)\sum_{k=1}^{n}(\mu^k)^2+\bar{\Lambda}}{\sum_{k=1}^{n}\mu^k}.
\end{equation}
At the second step, we show that we can choose $\mu^k$ satisfying the condition $\mu^k\sim k^{-\frac{1}{2}(1+\rho)}$, such that when \eqref{eq:n_itertime_mh} holds, $x^n$ is an $\epsilon$-optimal solution.

\textbf{Step 1.} According to \eqref{proof_prop1_mh} and \eqref{proof_prop2_mh} in the proof of Proposition  \ref{math_thm_heu_mh}, we have:
$$
\sum_{a \in A} \int_{c \in C} g_i(a, c) x^n(a, \mathrm{d}c) \le \frac{\lambda_i^{n+1} - \lambda_i^1}{\sum_{k=1}^{n} \mu^k}\le \frac{{\lambda_i^{n+1}}}{\sum_{k=1}^{n}\mu^k}\le \frac{{\bar{\Lambda}}}{\sum_{k=1}^{n}\mu^k};
$$
and
$$
\int_{c \in C} h_{j}(a, c) x^n(a, \mathrm{d}c) \le \frac{\gamma_{ a,j}^{n+1} - \gamma_{a,j}^1}{\sum_{k=1}^{n} \mu^k}\le \frac{{\gamma_{a,j}^{n+1}}}{\sum_{k=1}^{n}\mu^k}\le \frac{{\bar{\Lambda}}}{\sum_{k=1}^{n}\mu^k}.
$$
Therefore, \eqref{proof_complexity_g} and \eqref{proof_complexity_h} hold.

It then remains to show that \eqref{proof_complexity_f} holds in this step. The proof for \eqref{proof_complexity_f} can be further divided into three sub-steps. First we show that \begin{equation}\label{thm_complexity_fdiff_final}
\begin{aligned}
    &\sum_{a \in A} \int_{c \in C} f(a, c) x^{n}(a, \mathrm{d}c)
    \ge\sum_{a \in A} \int_{c \in C} f(a, c) x^*(a, \mathrm{d}c) \\
    &+\frac{1}{\sum_{k=1}^{n}\mu^k}\sum_{i=1}^{m}\sum_{k=1}^{n}\mu^k\lambda_i^kg_i(a^k,c^k)+\frac{1}{\sum_{k=1}^n\mu^k}\sum_{j=1}^{\ell}\sum_{k=1}^{n}\mu^k\gamma_{j,a^k}^kh_{j}(a^k,c^k).
\end{aligned}
\end{equation}
Then, we give an estimate of the term
\begin{equation}\label{eq:complexity_estimate}
    \frac{1}{\sum_{k=1}^{n}\mu^k}\sum_{i=1}^{m}\sum_{k=1}^{n}\mu^k\lambda_i^kg_i(a^k,c^k)+\frac{1}{\sum_{k=1}^n\mu^k}\sum_{j=1}^{\ell}\sum_{k=1}^{n}\mu^k\gamma_{j,a^k}^kh_{j}(a^k,c^k)
\end{equation}
in the right hand side of \eqref{thm_complexity_fdiff_final}. Finally, we conclude \eqref{proof_complexity_f}.

\paragraph{1-1. Show that \eqref{thm_complexity_fdiff_final} holds.} According to \eqref{proof_prop3_fformula_mh} and \eqref{eq:proof_prop3_fineq1} in the proof of Proposition  \ref{math_thm_heu_mh}, for every $a\in A,\,c\in C$, we have\footnote{There will be a slightly abuse of notations here. The $(a,c)$ in the first line just means the variable of integration; and the $(a,c)$ in the rest refers to the arbitrary chosen $a\in A,\,c\in C$.}: 
\begin{equation}\label{thm_complexity_fdiff}
\begin{aligned}
    &\sum_{a \in A} \int_{c \in C} f(a, c) x^{n}(a, \mathrm{d}c)\\
    \ge&\frac{1}{\sum_{k=1}^n\mu^k}\sum_{k=1}^n\mu^k\left(f(a,c)-\sum_{i=1}^{m}\lambda_i^kg_i(a,c)-\sum_{j=1}^{\ell}\gamma_{j,a}^k  {h_j(a,c)}+\sum_{i=1}^{m}\lambda_i^kg_i(a^k,c^k)+\sum_{j=1}^{\ell}\gamma_{j,a^k}^kh_{j}(a^k,c^k)\right)\\
    =&f(a,c)-\frac{1}{\sum_{k=1}^{n}\mu^k}\sum_{i=1}^{m}\sum_{k=1}^{n}\mu^k\lambda_i^kg_i(a,c)-\frac{1}{\sum_{k=1}^n\mu^k}\sum_{j=1}^{\ell}\sum_{k=1}^{n}\mu^k\gamma_{j,a}^k  {h_j(a,c)}\\
    &+\frac{1}{\sum_{k=1}^{n}\mu^k}\sum_{i=1}^{m}\sum_{k=1}^{n}\mu^k\lambda_i^kg_i(a^k,c^k)+\frac{1}{\sum_{k=1}^n\mu^k}\sum_{j=1}^{\ell}\sum_{k=1}^{n}\mu^k\gamma_{j,a^k}^kh_{j}(a^k,c^k).
\end{aligned}
\end{equation}
\normalsize
We multiply $x^*(a,dc)$ to both sides in \eqref{thm_complexity_fdiff} and integral on $A\times C$ and have
\begin{equation}\label{thm_complexity_fdiff_midfinal}
\begin{aligned}
    &\sum_{a \in A} \int_{c \in C} f(a, c) x^{n}(a, \mathrm{d}c)\\
    \ge&\sum_{a \in A} \int_{c \in C} f(a, c) x^*(a, \mathrm{d}c) -\frac{1}{\sum_{k=1}^{n}\mu^k}\sum_{k=1}^{n}\mu^k\sum_{i=1}^{m}\lambda_i^k\sum_{a\in A}\int_{c\in C}g_i(a,c)x^*(a,dc)\\
    &-\frac{1}{\sum_{k=1}^{n}\mu^k}\sum_{k=1}^{n}\mu^k\sum_{j=1}^{\ell}\sum_{a\in A}\gamma_{j,a}^k\int_{c\in C}  {h_j(a,c)}x^*(a,dc)\\
    &+\frac{1}{\sum_{k=1}^{n}\mu^k}\sum_{i=1}^{m}\sum_{k=1}^{n}\mu^k\lambda_i^kg_i(a^k,c^k)+\frac{1}{\sum_{k=1}^n\mu^k}\sum_{j=1}^{\ell}\sum_{k=1}^{n}\mu^k\gamma_{j,a^k}^kh_{j}(a^k,c^k).
\end{aligned}
\end{equation}
Since $x^*(a,dc)$ satisfies all the constraints in \eqref{math_lot_mh}, i.e. 
$$
\sum_{a\in A}\int_{c\in C}g_i(a,c)x^*(a,dc)\le 0, \quad \int_{c\in C}  {h_j(a,c)}x^*(a,dc)\le 0,
$$
we thus obtain the estimate \eqref{thm_complexity_fdiff_final} from \eqref{thm_complexity_fdiff_midfinal}.

\paragraph{1-2. Estimate the term \eqref{eq:complexity_estimate}.} Recall that \eqref{eq:complexity_estimate} is
\[ 
    \frac{1}{\sum_{k=1}^{n}\mu^k}\sum_{i=1}^{m}\sum_{k=1}^{n}\mu^k\lambda_i^kg_i(a^k,c^k)+\frac{1}{\sum_{k=1}^n\mu^k}\sum_{j=1}^{\ell}\sum_{k=1}^{n}\mu^k\gamma_{j,a^k}^kh_{j}(a^k,c^k).
\]
By direct computation, we have
 \begin{equation}\label{thm_complexity_diff_mid1}
 \begin{aligned}
 &\Lambda(\lambda^{k+1},\gamma^{k+1})-\Lambda(\lambda^k,\gamma^k)\\
    = &\left(\sum_{i=1}^{m}2\lambda_i^k\left(\lambda_i^{k+1}-\lambda_i^{k}\right)+\sum_{j=1}^{\ell}2\gamma_{j,a^k}^k\left(\gamma^{k+1}_{j,a^k}-\gamma_{j,a^k}^k\right)\right)+\left(\sum_{i=1}^{m}\left(\lambda_i^{k+1}-\lambda_i^{k}\right)^2+\sum_{j=1}^{\ell}\left(\gamma^{k+1}_{j,a^k}-\gamma_{j,a^k}^k\right)^2\right).
 \end{aligned}
 \end{equation}
We recall that the updating rules for $\lambda_i$, written
$$
\lambda_i^{k+1}=\max\{\lambda_i^{k}+\mu^kg_i(a^k,c^k),0\},
$$
can be rewritten as
\begin{equation}\label{eq:lambda_update_newver}
\lambda_i^{k+1}-\lambda_i^{k}=\mu^k\tau_i^{k}g_i(a^k,c^k),
\end{equation}
for some $0\le \tau_i^{k}\le 1$, where $\tau_i^{k}<1$ if and only if $g_i(a^k,c^k)<0$ and $\lambda_i^{k+1}=0$. Similarly, there exists $0\le \tau_{j,a^k}^{k}\le 1$, such that 
\begin{equation}\label{eq:gamma_update_newver}
\gamma_{j,a^k}^{k+1}-\gamma_{j,a^k}^{k}=\mu^k\tau_{j,,a^k}^{k}h_j(a^k,c^k),
\end{equation}
where $\tau_{j,a^k}^{k}<1$ if and only if $h_j(a^k,c^k)<0$ and $\gamma_{j,a^k}^{k+1}=0$. Therefore we can utilize \eqref{eq:lambda_update_newver} and \eqref{eq:gamma_update_newver} to simplify the right hand side of \eqref{thm_complexity_diff_mid1} and obtain

\begin{equation}\label{thm_complexity_diff_mid2}
    \begin{aligned}
 &\Lambda(\lambda^{k+1},\gamma^{k+1})-\Lambda(\lambda^k,\gamma^k)
 =\left(\sum_{i=1}^{m}2\mu^k\tau^k_ig_i(a^k,c^k)\lambda_i^k+\sum_{j=1}^{\ell} 2\mu^{k}\tau^k_{j,a^k}h_{j}(a^k,c^k)\gamma_{j,a^k}^k\right)\\
&+\left(\sum_{i=1}^m(\mu^k)^2(\tau_i^k)^2g_i^2(a^k,c^k)+\sum_{j=1}^{\ell}(\mu^k)^2(\tau_{j,a^k}^k)^2h^2_{j}(a^k,c^k)\right)\\
=&2\mu^k\left(\sum_{i=1}^{m}g_i(a^k,c^k)\lambda_i^k+\sum_{j=1}^{\ell} h_{j}(a^k,c^k)\gamma_{j,a^k}^k\right)\\
 &-2\mu^k\left(\sum_{i=1}^{m}(1-\tau_i^k)g_i(a^k,c^k)\lambda_i^k+\sum_{j=1}^{\ell} (1-\tau_{j,a^k}^k)h_{j}(a^k,c^k)\gamma_{j,a^k}^k\right)\\
&+\left(\sum_{i=1}^m(\mu^k)^2(\tau_i^k)^2g_i^2(a^k,c^k)+\sum_{j=1}^{\ell}(\mu^k)^2(\tau_{j,a^k}^k)^2h^2_{j}(a^k,c^k)\right).
 \end{aligned}
\end{equation}
Furthermore, when $1-\tau_{i}^{k}>0$ , we know that $\lambda_i^{k+1}=0$, and according to \eqref{eq:lambda_update_newver}, we have
\begin{equation}\label{eq:lambdaeq_small}
    \lambda_i^k=-\mu^k\tau_i^kg_i(a^k,c^k).
\end{equation}
Similarly, when $1-\tau^k_{j,a^k}>0$, we have the following equality according to \eqref{eq:gamma_update_newver}
\begin{equation}\label{eq:gammaeq_small}
    \gamma_{j,a^k}^k=-\mu^k\tau_{j,a^k}^kh_{j}(a^k,c^k).
\end{equation}
We utilize the equality \eqref{eq:lambdaeq_small} and the equality \eqref{eq:gammaeq_small}  to obtain
\begin{equation}\label{ineq:esti_I2_step2}
\begin{aligned}
     &-2\mu^k\left(\sum_{i=1}^{m}(1-\tau_i^k)g_i(a^k,c^k)\lambda_i^{k}+\sum_{j=1}^{\ell}(1-\tau_{j,a^k}^k)h_{j}(a^k,c^k)\gamma_{j,a^k}^k\right)\\
    =&2(\mu^k)^2\left(\sum_{i=1}^m(1-\tau_i^k)\tau_i^kg_i^2(a^k,c^k)+\sum_{j=1}^{\ell}(1-\tau_{j,^k}^k)\tau_{j,a^k}^kh_{j}^2(a^k,c^k)\right).
\end{aligned}
\end{equation}
We combine \eqref{thm_complexity_diff_mid2} and \eqref{ineq:esti_I2_step2}, then we have
$$
\begin{aligned}
&\Lambda(\lambda^{k+1},\gamma^{k+1})-\Lambda(\lambda^{k},\gamma^{k})\\
=& 2\mu^k\left(\sum_{i=1}^{m}g_i(a^k,c^k)\lambda_i^k+\sum_{j=1}^{\ell} h_{j}(a^k,c^k)\gamma_{j,a^k}^k\right)\\
 &-2\mu^k\left(\sum_{i=1}^{m}(1-\tau_i^k)g_i(a^k,c^k)\lambda_i^k+\sum_{j=1}^{\ell} (1-\tau_{j,a^k}^k)h_{j}(a^k,c^k)\gamma_{j,a^k}^k\right)\\
& +\left(\sum_{i=1}^m(\mu^k)^2(\tau_i^k)^2g_i^2(a^k,c^k)+\sum_{j=1}^{\ell}(\mu^k)^2(\tau_{j,a^k}^k)^2h^2_{j}(a^k,c^k)\right)\\
=&2\mu^k\left(\sum_{i=1}^{m}g_i(a^k,c^k)\lambda_i^k+\sum_{j=1}^{\ell} h_{j}(a^k,c^k)\gamma_{j,a^k}^k\right)\\
 &+2(\mu^k)^2\left(\sum_{i=1}^m(1-\tau_i^k)\tau_i^kg_i^2(a^k,c^k)+\sum_{j=1}^{\ell}(1-\tau_{j,^k}^k)\tau_{j,a^k}^kh_{j}^2(a^k,c^k)\right)\\
&+\left(\sum_{i=1}^m(\mu^k)^2(\tau_i^k)^2g_i^2(a^k,c^k)+\sum_{j=1}^{\ell}(\mu^k)^2(\tau_{j,a^k}^k)^2h^2_{j}(a^k,c^k)\right).\\
=&2\mu^k\left(\sum_{i=1}^{m}g_i(a^k,c^k)\lambda_i^k+\sum_{j=1}^{\ell} h_{j}(a^k,c^k)\gamma_{j,a^k}^k\right)\\
    & +(\mu^k)^2\left(\sum_{i=1}^m(2-\tau_i^k)\tau_i^kg_i^2(a^k,c^k)+\sum_{j=1}^{\ell}(2-\tau_{j,a^k}^k)\tau_{j,a^k}^kh_{j}^2(a^k,c^k)\right),
\end{aligned}
$$
which implies that
\begin{equation}\label{improve_diff_final_complexity}
\begin{aligned}
&2\mu^k\left(\sum_{i=1}^{m}g_i(a^k,c^k)\lambda_i^k+\sum_{j=1}^{\ell} h_{j}(a^k,c^k)\gamma_{j,a^k}^k\right)\\
=&\left(\Lambda(\lambda^{k+1},\gamma^{k+1})-\Lambda(\lambda^{k},\gamma^{k})\right)\\
    & -(\mu^k)^2\left(\sum_{i=1}^m(2-\tau_i^k)\tau_i^kg_i^2(a^k,c^k)+\sum_{j=1}^{\ell}(2-\tau_{j,a^k}^k)\tau_{j,a^k}^kh_{j}^2(a^k,c^k)\right).
\end{aligned}
\end{equation}
 We add up equality \eqref{improve_diff_final_complexity} for $k=1,\cdots, n$. After some direct computation, we have an estimate to \eqref{eq:complexity_estimate} as
\begin{equation}\label{thm_complexity_fdiff_part1}
\begin{aligned}
&\frac{1}{\sum_{k=1}^{n}\mu^k}\sum_{i=1}^{m}\sum_{k=1}^{n}\mu^k\lambda_i^kg_i(a^k,c^k)+\frac{1}{\sum_{k=1}^n\mu^k}\sum_{j=1}^{\ell}\sum_{k=1}^{n}\mu^k\gamma_{j,a^k}^kh_{j}(a^k,c^k)\\
=& \frac{\Lambda(\lambda^{n+1}, \gamma^{n+1})- \Lambda(\lambda^{1}, \gamma^{1})}{2\sum_{k=1}^{n} \mu^k}\\
&-\frac{\sum_{k=1}^{n}(\mu^k)^2\left(\sum_{i=1}^m(2-\tau_i^k)\tau_i^kg_i^2(a^k,c^k)+\sum_{j=1}^{\ell}(2-\tau_{j,a^k}^k)\tau_{j,a^k}^kh_{j}^2(a^k,c^k)\right)}{2\sum_{k=1}^{n}\mu^k}.
\end{aligned}
\end{equation}
\normalsize
\paragraph{1-3. We then utilize the estimate \eqref{thm_complexity_fdiff_part1} to derive \eqref{proof_complexity_f}.} We combine \eqref{thm_complexity_fdiff_final} and \eqref{thm_complexity_fdiff_part1} and obtain
\begin{equation}\label{proof_property_f}
\begin{aligned}
&\sum_{a \in A} \int_{c \in C} f(a, c) x^{n}(a, \mathrm{d}c)\\
\ge &\sum_{a \in A} \int_{c \in C} f(a, c) x^*(a, \mathrm{d}c) + \frac{\Lambda(\lambda^{n+1}, \gamma^{n+1})- \Lambda(\lambda^{1}, \gamma^{1})}{2\sum_{k=1}^{n} \mu^k}\\
&-\frac{\sum_{k=1}^{n}(\mu^k)^2\left(\sum_{i=1}^m(2-\tau_i^k)\tau_i^kg_i^2(a^k,c^k)+\sum_{j=1}^{\ell}(2-\tau_{j,a^k}^k)\tau_{j,a^k}^kh_{j}^2(a^k,c^k)\right)}{2\sum_{k=1}^{n}\mu^k}.
\end{aligned}
\end{equation}
By the fact that $|g_i|\le M,\,|h_j|\le M,$ the estimate \eqref{proof_property_f} further implies
$$
\begin{aligned}
&\sum_{a \in A} \int_{c \in C} f(a, c) x^{n}(a, \mathrm{d}c)\\
\ge&  \sum_{a \in A} \int_{c \in C} f(a, c) x^{*
}(a, \mathrm{d}c)-\frac{M^2(  {m+\ell})\sum_{k=1}^{n}(\mu^k)^2+\Lambda(\lambda^1,\gamma^1)}{2\sum_{k=1}^{n}\mu^k}\\
\ge&  \sum_{a \in A} \int_{c \in C} f(a, c) x^{*
}(a, \mathrm{d}c)-\frac{M^2(  {m+\ell})\sum_{k=1}^{n}(\mu^k)^2+\bar{\Lambda}}{\sum_{k=1}^{n}\mu^k}.
\end{aligned}
$$
Therefore, we have shown that \eqref{proof_complexity_f} holds.

\textbf{Step 2.} Choosing \(\mu^k = \frac{1}{M\sqrt{m+\ell}k^{\frac{1}{2}(1 + \rho)}}\), where \(0 < \rho \le 1\), gives:
\begin{equation}\label{proof_complexity_firstord}
\sum_{k=1}^{n} \mu^k \sim
\begin{cases}
\frac{n^{\frac{1}{2}(1 - \rho)}}{M\sqrt{m+\ell}}, & \rho < 1 \\
\frac{\log n}{M\sqrt{m+\ell}}, & \rho = 1,
\end{cases}
\end{equation}
and
\begin{equation}\label{proof_complexity_secondord}
\sum_{k=1}^{n} (\mu^k)^2 \sim \frac{1}{M^2\rho (m+\ell)}.
\end{equation}
Thus, for any \(\epsilon > 0\), when \eqref{eq:n_itertime_mh} holds, i.e.
$$
n \succsim 
\begin{cases}
\left(\frac{M(\frac{1}{\rho}+\bar{\Lambda})}{\epsilon}\right)^{\frac{2}{1 - \rho}}(m+\ell)^{\frac{1}{1-\rho}}, & \rho < 1 \\
e^{\frac{M(\frac{1}{\rho}+\bar{\Lambda})\sqrt{m+\ell}}{\epsilon}}, & \rho = 1,
\end{cases}
$$
we can implement \eqref{proof_complexity_firstord} and \eqref{proof_complexity_secondord} to \eqref{proof_complexity_g},\eqref{proof_complexity_h} and \eqref{proof_complexity_f}, and hence obtain:
$$
\sum_{a \in A} \int_{c \in C} g_i(a, c) x^n(a, \mathrm{d}c) \le \epsilon; \quad
\int_{c \in C} h_{j}(a, c) x^n(a, \mathrm{d}c) \le \epsilon;
$$
and
$$
\begin{aligned}
&\sum_{a \in A} \int_{c \in C} f(a, c) x^{n}(a, \mathrm{d}c)
\ge \sum_{a \in A} \int_{c \in C} f(a, c) x^*(a, \mathrm{d}c) - \epsilon.
\end{aligned}
$$
\subsection{Proof of Lemma \ref{lem:pcmh}}\label{app:pcmh}
For all $ q \in Q $ the partial derivative of the Lagrangian with respect to $c(q) $ can be written as
$$ \frac{\partial {\mathcal L}(a,c;\lambda,\gamma)}{\partial c(q)}=
  -p(q|a) v'(q-c(q)) -\tilde{u}'(c(q)) \left(\sum_{\hat a \in A} \gamma_{a,\hat{a}} (p(q|\hat{a}) C(\hat{a}) -p(q|a) C(a))-\lambda p(q|a) C(a) \right) .$$
 If for a given $a\in A$ the term in the brackets is positive, the derivative is negative and the function is strictly differentiably monotone and hence (since one-dimensional) pseudo-concave.
    If the term in the brackets is negative the partial derivative is decreasing in $ c(q) $ and hence the Lagrangian is concave.

\subsection{Proof of Proposition \ref{prop:etaconv}}\label{app:etaconv}

To prove the proposition, we first need the following result which shows that if the original IC constraint \ref{eq:tax_ic2} is replaced by 
\ref{eq:tax_ic_new_eta}, the optimal solution never involves lotteries.
\begin{prop}\label{prop:noloteta}
         We consider the planner problem \eqref{eq:tax_obj2} subject to \eqref{eq:tax_rc} and \eqref{eq:tax_ic_new_eta}  with the consumption set $C=\times_{h\in H}[\epsilon,c_{\max}]$, the output set $Y=\times_{h\in H} [0,a_{\max}\omega_h]$, and the utility function defined by \eqref{eq:utility_func}. Assume that the Slater's condition holds, then the optimal solution to the planner problem does not involve lotteries.
\end{prop}
To proceed in the proofs of the propositions, we will need the following technical lemma.

\begin{lemma}\label{lem:technical2}
    Given $i\in\{1,\cdots, N\}$, and $1< p_1<\cdots<p_N$. We consider the function
$$
f_i(y)=\sum_{j=i+1}^{N}a_jy^{p_j}-a_iy^{p_i}+\sum_{j=1}^{i-1}a_jy^{p_j}+a_0y,
$$
where $a_j\ge0,\,\forall j\ne i$. and $a_0>0$. Then one of the following holds:
\begin{enumerate}
    \item $f_i$ admits a unique maximal point in the interval $[0,\omega_ha_{\max}]$.
    \item $f_i$ admits two maximal points in the interval $[0,\omega_ha_{\max}]$, and one of them is $w_ha_{\max}$.
\end{enumerate}
Furthermore, when $a_{i+1}=\cdots=a_{N}=0$, then $f_i$ admits a unique maximal point in the interval $[0,\omega_ha_{\max}]$.
\end{lemma} 
\begin{proof}
 If $a_i\le 0$, then $f_i(y)$ is strictly increasing in the interval $[0,\omega_h a_{\max}]$ and has a unique maximal point $y^*=\omega_h a_{\max}$.

      If $a_i >0$, we define $p_0=1$, and it is straightforward to have
    $$
    f_i'(y)=\sum_{j=i+1}^{N}a_{j}p_jy^{p_j-1}-a_ip_iy^{p_i-1}+\sum_{j=0}^{i-1}a_jp_jy^{p_j-1},
    $$
    and
    $$
    \begin{aligned}
    f_i''(y)=&\sum_{j=i+1}^{N}a_jp_j(p_j-1)y^{p_j-2}-a_ip_i(p_i-1)y^{p_i-2}+\sum_{j=1}^{i-1}a_jp_j(p_j-1)y^{p_j-2}\\
    =&\sum_{j=i+1}^{N}a_jp_j(p_j-1)y^{p_j-2}+\frac{p_i-1}{y}(f_i'(y)-\sum_{j=0}^{i-1}a_jp_jy^{p_j-1}-\sum_{j=i+1}^{N}a_jp_jy^{p_j-1})+\sum_{j=1}^{i-1}a_jp_j(p_j-1)y^{p_j-2}\\
    =&\frac{(p_i-1)f'_i(y)}{y}-\sum_{j=0}^{i-1}a_jp_j(p_i-p_j)y^{p_j-2}+\sum_{j=i+1}^{N}a_jp_j(p_j-p_i)y^{p_j-2}.
    \end{aligned}
    $$
Therefore, 
\begin{equation}\label{eq:fpibehavior}
\begin{aligned}
\left[\frac{f'_i(y)}{y^{p_i-1}}\right]'&=\frac{f''_i(y)}{y^{p_i-1}}-\frac{(p_i-1)f_i'(y)}{y^{p_i}}\\
&= y^{-1}\left[\sum_{j=i+1}^{N}a_jp_j(p_j-p_i)y^{p_j-p_i}-\sum_{j=0}^{i-1}a_jp_j(p_i-p_j)y^{p_j-p_i}\right]\\
&:=y^{-1} g_i(y) .
\end{aligned}
\end{equation}
Since $a_0>0$, $g_i(y)$ is strictly increasing in $y$, and $g_i(0)=-\infty$. Therefore, $f_i'(y)/y^{p_i-1}$ is either
\begin{enumerate}
    \item strictly decreasing, or
    \item strictly decreasing in $[0,\bar{y}]$ for some $\bar{y}\in[0,\omega_ha_{\max}]$ and then strictly increasing in $[\bar{y},\omega_ha_{\max}]$.
\end{enumerate}
We take $y=0$ in $f_i'(y)$ and obtain $f_i'(0)=a_0>0$. By the continuity of $f_i'(y)$, there exists $\delta>0$, such that when $y\in [0,\delta]$, $f'(y)>0$.  

For case 1, if for any $y\in [\delta,\omega_ha_{\max}]$, $\frac{f'_i(y)}{y^{p_i-1}}> 0$, then $f'_i(y)>0$, $f_i(y)$ is strictly increasing and has a unique maximal point $y^*=\omega_ha_{\max}$; if there exists $y_0\in [\delta,\omega_ha_{\max}]$, such that $\frac{f'_i(y_0)}{y_0^{p_i-1}}=0$, then $f_i'(y_0)=0$, $f_i'(y)>0$ for $y<y_0$, and $f_i'(y)<0$ for $y>y_0$, hence $f_i(y)$ has a unique maximal point $y^*=y_0$. 

For case 2, since $f_i'(y)/y^{p_i-1}$ first decreases and then increases, $f_i'(y)$ may have 0, 1 or 2 zeros in $[0,\omega_ha_{\max}]$. 
The case that $f_i'(y)$ has 0 or 1 zeros is discussed in case 1. When $f_i'(y)$ has two zeros $0<y_1<y_2<w_ha_{\max}$, $f_i(y)$ is increasing in $[0,y_1]$, decreasing in $[y_1,y_2]$, and increasing again in $[y_2,\omega_ha_{\max}]$, and hence has two local maximum $y_1$ and $\omega_ha_{\max}$. When $f_i(y_1)=f_i(\omega_ha_{\max})$, $f_i$ has two maximal points with one equal to $\omega_ha_{\max}$; otherwise $f_i(y)$ has a unique maximal point.

When $a_{i+1}=\cdots=a_{N}=0$, \eqref{eq:fpibehavior} yields that $f_i'(y)/y^{p_i-1}$ is strictly decreasing, and hence $f_i(y)$ has a unique maximal point. Therefore, we finish the proof.
\end{proof}

\begin{proof} ({\it Proof of Proposition \ref{prop:noloteta}})
To simplify notation, we denote types slightly differently and let
     $ \Theta=H=\cup_{i=1}^{N}\{h=(i,h_{-\eta})|h_{-\eta}\in H_i\}$, and for any $h=(i,h_{-\eta})\in \Theta$, $\eta_h=\eta_i$. Suppose that $ 0< \eta_N<\cdots<\eta_1$.
    We define
    $$
    \tilde{\psi}_h=\frac{\psi_h}{\omega_h^{\frac{1}{\eta_h}+1}\left(\frac{1}{\eta_h}+1\right)},\quad\forall h\in \Theta.
    $$
    Then the deterministic problem for \eqref{eq:tax_obj2} subject to \eqref{eq:tax_rc} and \eqref{eq:tax_ic_new_eta} can be written 
    \begin{equation}\label{problem_eta_simplified}
    \small
   \begin{aligned} &\max_{c\in C,\,y\in Y}\sum_{h\in H}\left(\log(c_{h})-\tilde{\psi}_{h}y_h^{\frac{1}{\eta_h}+1}\right),\\
   \text{s.t.} &\log(c_h)-\tilde{\psi}_hy_h^{\frac{1}{\eta_h}+1}\ge \log(c_{h'})-\tilde{\psi}_{h}y_{h'}^{\frac{1}{\eta_h}+1}, \quad\forall h=(i,h_{-\eta}),h'=(j,h_{-\eta}')\in H,\text{ s.t. }i\le j,\\
   &\sum_{h\in H} y_h\ge \sum_{h\in H} c_h;
   \end{aligned}
\end{equation}
and the Lagrangian function of problem \eqref{problem_eta_simplified} has the structure
\begin{equation}\label{equ:lag_alpha2}
    \begin{aligned}
    \mathcal{L}(c,y;\lambda,\mu)=&\sum_{(i,h_{-\eta})\in H} \left[-\left(\tilde{\psi}_h+\sum_{j\ge i}\sum_{h'_{-\eta}\in H_j}\tilde{\psi}_{h}\lambda_{h,h'}-\sum_{j=i}\sum_{h'_{-\eta}\in H_{j}}\tilde{\psi}_{h'}\lambda_{h',h}\right) y_h^{\frac{1}{\eta_i}+1}+\right.\\
   &\left.\sum_{j=1}^{i-1}\left(\sum_{h'_{-\eta}\in H_{j}}\tilde{\psi}_{h'}\lambda_{h',h}\right) y_h^{\frac{1}{\eta_j}+1}+\mu y_{h}\right]+\mathcal{L}^c(c;\lambda,\mu)\\
   :=&\sum_{h\in H}\left[-\mathcal{B}_{h,i}(\lambda)y_h^{\frac{1}{\eta_i}+1}+\sum_{j=1}^{i-1}\mathcal{B}_{h,j}(\lambda)y_h^{\frac{1}{\eta_j}+1}+\mu y_h\right]+\mathcal{L}^c(c;\lambda,\mu),
    \end{aligned}
\end{equation}
where $h'=(j,h_{-\eta}')$, $\mathcal{B}_{h,j}(\lambda)\ge 0,\,\forall 1\le j\le i-1$, and $\mathcal{L}^c(c;\lambda,\mu)$ is independent of $y\in Y$. It is then straightforward to verify that the minimizer of 
$$
V(\lambda,\mu):=\max_{c\in C,y\in Y}\mathcal{L}(c,y;\lambda,\mu)
$$
satisfies $\mu^*>0$ (when $\mu=0$, we can take $\bar{c}_h\equiv c_{\max}$ and $\underline{y}_h\equiv0$ to attain $\mathcal{L}(\bar{c},\underline{y};\lambda,0)$ higher than the objective value of the optimal lottery solution). 
For any $h\in H$, according to Lemma \ref{lem:technical2} with $p_i=\frac{1}{\eta_i}+1,\,p_j=\frac{1}{\eta_j}+1(\forall 1\le j\le i-1),\,a_i=\mathcal{B}_{h,i}(\lambda^*),\,a_j=\mathcal{B}_{h,j}(\lambda^*)(\forall 1\le j\le i-1),a_j=0(\forall i+1\le j\le N) \,$and $a_0=\mu^*$, we know that the maximal $y_h^*$ is unique. The uniqueness of maximal $c_h^*$ is straightforward since $\mathcal{L}^c(c;\lambda^*,\mu^*)$ has the form $\sum_{h}X_h\log c_h-Y_hc_h$, where $Y_h>0$ for any $h$. Hence according to Result 1 of Corollary \ref{cor:nondelot}, the optimal solution to the planner problem \eqref{eq:tax_obj2} subject to  \eqref{eq:tax_rc} and \eqref{eq:tax_ic_new_eta} does not involve lotteries.
\end{proof}

We introduce an additional Lemma needed for the proof of Proposition \ref{prop:etaconv}.
\begin{lemma}\label{lem:eta_largel_samedeter}
    There exists $\bar{a}>0$, s.t. when $a_{\max}\ge \bar{a}$, the solution to the planner problem \eqref{eq:tax_obj2} subject to \eqref{eq:tax_rc} and \eqref{eq:tax_ic_new_eta} remains invariant with respect to $a_{\max}$.
\end{lemma}

\begin{proof}
   We denote $v_2(a_{\max})$ as the maximal objective value of the planner problem \eqref{eq:tax_obj2} subject to \eqref{eq:tax_rc} and \eqref{eq:tax_ic_new_eta}. According to Proposition \ref{prop:noloteta}, for any $a_{\max}>0$, the solution to the planner problem \eqref{eq:tax_obj2} subject to \eqref{eq:tax_rc} and \eqref{eq:tax_ic_new_eta} does not involve lottery in labor supply. We arbitrarily take a $\underline{a}>0$. Since
    $$
    u_h(c_h,y_h)\le u_h(c_{\max},0),\quad\forall h\in H, c_h\in[\epsilon,c_{\max}],\,y_h\in[0,\omega_ha_{\max}],
    $$
    and
    $$
    \lim_{y_h\rightarrow\infty}u_h(c_h,y_h)\rightarrow-\infty,\quad\forall h\in H, c_h\in[\epsilon,c_{\max}],\,y_h \ge 0.
    $$ 
    there exists $\bar{a}(h)>0$,  s.t. when $y_h>\bar{a}(h)w_h$, 
    $$
    u_h(c_h,y_h)+\sum_{h'\ne h}u_{h'}(c_{h'},y_{h'})<\left(v_2(\underline{a})-\sum_{h'\ne h}u_{h'}(c_{\max},0)\right)+\sum_{h'\ne h}u_{h'}(c_{\max},0)=v_2(\underline{a}).
    $$ 
    Therefore, for any $a_{\max}>\underline{a}$, the solution $(c(a_{\max}),y(a_{\max}))$ to the planner problem \eqref{eq:tax_obj2} subject to \eqref{eq:tax_rc} and \eqref{eq:tax_ic_new_eta} satisfies $y_h(a_{\max})\le \bar{a}(h)w_h$. We take $\bar{a}=\max_{h\in H}\bar{a}(h)$. Then for any $a_{\max}>\bar{a}$, the solution to the planner problem \eqref{eq:tax_obj2} subject to \eqref{eq:tax_rc} and \eqref{eq:tax_ic2}, remains invariant to the solution to the problem with $a_{\max}=\bar{a}$. Hence we finish the proof. 
\end{proof}

    \begin{proof} {\bf of Proposition \ref{prop:etaconv}}\\
    We denote $v_1(a_{\max})$ as the maximal objective value of the planner problem \eqref{eq:tax_obj2} subject to  \eqref{eq:tax_rc} and \eqref{eq:tax_ic2}; and $v_2(a_{\max})$ as the maximal objective value of the planner problem \eqref{eq:tax_obj2} subject to \eqref{eq:tax_rc} and \eqref{eq:tax_ic_new_eta}. Since \eqref{eq:tax_ic_new_eta} involves less constraints than \eqref{eq:tax_ic2}, we know $v_1(a_{\max})\le v_2(a_{\max})$ for any $a_{\max}>0$. According to Lemma \ref{lem:eta_largel_samedeter}, we have $v_2(a_{\max})=v_2(\bar{a})$ for any $a_{\max}\ge \bar{a}$. Hence, to prove this lemma, it then suffices to show that 
    $$
    \begin{aligned}
    &\lim_{a_{\max}\rightarrow+\infty}v_2(a_{\max})-v_1(a_{\max})= \lim_{a_{\max}\rightarrow+\infty}(v_2(a_{\max})-v_1(a_{\max}))_+\\
    =&\lim_{a_{\max}\rightarrow
    +\infty}(v_2(\bar{a})-v_1(a_{\max}))_+=0.
    \end{aligned}$$
    
    \textbf{Step 1.} The Slater's condition implies that, there exists $M>0$, s.t. for any $a_{\max}>\bar{a}$, the Lagrangian multipliers for the planner problem \eqref{eq:tax_obj2} subject to \eqref{eq:tax_rc} and  \eqref{eq:tax_ic2} satisfies $\max\{\max_{h,\,h'\in H}\lambda^*_{h,h'},\,\mu^*\}<M$.
    
    To be precise, for $a_{\max}=\bar{a}$, according to the Slater's condition, there exists a probability distribution $\bar{p}\in\mathcal{P}(C\times Y)$, s.t. $\bar{p}$ satisfies all the constraints in \eqref{eq:tax_rc} and \eqref{eq:tax_ic2} strictly.  Hence, there exists $M>0$, such that when $\max\{\max_{h,h'\in H}\lambda_{h,h'},\mu\}\ge M$, we have $L(\bar{p};\lambda,\mu)>v_2(\bar{a})$, where $L$ denotes the Lagrangian function of the planner problem \eqref{eq:tax_obj2} subject to  \eqref{eq:tax_rc} and  \eqref{eq:tax_ic2}. Since $\max_{p}L(p;\lambda,v)\ge L(\bar{p};\lambda,v)>v_2(\bar{a})\ge v_1(a_{\max})=\max_pL(p;\lambda^*,\mu^*)$, we know that $(\lambda,\mu)\not= (\lambda^*,\mu^*)$. Therefore, $\max\{\max_{h,\,h'\in H}\lambda^*_{h,h'},\,\mu^*\}<M$.
    
    \textbf{Step 2.} We show that, for any $\epsilon>0$, there exists $a(\epsilon)>\bar{a}$, such that when $a_{\max}>a(\epsilon)$, the Lagrangian multipliers of the planner problem  \eqref{eq:tax_obj2} subject to \eqref{eq:tax_rc} and \eqref{eq:tax_ic2} satisfies $\lambda_{h,h'}^*<\epsilon,\,\forall h,h'\in H$, s.t. $\eta_{h}<\eta_{h'}$. 
    
     Suppose $H=\cup_{i=1}^{N}\{h=(i,h_{-\eta})|h_{-\eta}\in H_i\}$, and for any $h=(i,h_{-\eta})\in H$, $\eta_h=\eta_i$. Suppose that $ 0< \eta_N<\cdots<\eta_1$. We first define the set
    $$\mathcal{G}=\{(h,h')\in H^2,\text{ s.t. } \eta_{h}<\eta_{h'}\}=\{(h=(i,h_{-\eta}),h'=(j,h'_{-\eta}))\in H^2,\text{ s.t. } i>j\}.$$
    Assume that for a planner problem \eqref{eq:tax_obj2} subject to \eqref{eq:tax_rc} and \eqref{eq:tax_ic2} with $a_{\max}>0$, there exists some Lagrangian multipliers $\lambda_{h',h}^*\ge \epsilon,\, (h'=(j,h_{-\eta}'),h=(i,h_{-\eta}))\in\mathcal{G}$.  We consider the Lagrangian function of the deterministic problem for \eqref{eq:tax_obj2} subject to \eqref{eq:tax_rc} and \eqref{eq:tax_ic2},
    $$
    \begin{aligned}
    \mathcal{L}(c,y;\lambda^*,\mu^*)= &\left[-\left(\tilde{\psi}_h+\sum_{h'\in H}\tilde{\psi}_{h}\lambda^*_{h,h'}-\sum_{k=i}\sum_{h_{-\eta}'\in H_k}\tilde{\psi}_{h'}\lambda^*_{h',h}\right) y_h^{\frac{1}{\eta_i}+1}+\right.\\
   &\left.\sum_{k\ne i}\left(\sum_{h'_{-\eta}\in H_{k}}\tilde{\psi}_{h'}\lambda^*_{h',h}\right) y_h^{\frac{1}{\eta_k}+1}+\mu^*  y_{h}\right]+\mathcal{L}^{c,y_{-h}}(c,y_{-h};\lambda^*,\mu^*)\\
   :=&\left[\sum_{k=1}^{N}\mathcal{B}_{h,k}(\lambda^*)y_h^{\frac{1}{\eta_k}+1}+\mu^*y_h\right]+\mathcal{L}^{c,y_{-h}}(c,y_{-h};\lambda^*,\mu^*),
    \end{aligned}
    $$
    where $h'=(k,h_{-\eta}')$, $\mathcal{L}^{c,y_{-h}}(c,y_{-h};\lambda^*,\mu^*)$ is independent of $y_h$, $\mathcal{B}_{h,j}(\lambda^*)\ge \epsilon \tilde{\psi}_{h'^{*}}$ for some $h'^{*}=(j,h_{-\eta}'^{*})$, and $\mathcal{B}_{h,k}(\lambda^*)\ge 0,\,\forall k\ne i$. According to Step 1, the Lagrangian multipliers $\lambda^*,\mu^*$ are bounded, hence there exists $M>0$ that is independent of $\lambda^*,\mu^*$, s.t. $|\mathcal{B}_{h,k}(\lambda^*)|\le M$, $\mu^*\le M$ and $\mathcal{L}^{c,y_{-h}}(c,y_{-h};\lambda^*,\mu^*)|_{c\equiv c_{\max},y_{-h}\equiv 0}\ge -M$. Therefore
    \begin{equation}\label{tax_mid1}
    \begin{aligned}
    &\max_{c\in C, y\in Y}\mathcal{L}(c,y;\lambda^*,\mu^*)\\
    \ge &\max_{y_h\in [0,\omega_ha_{\max}]}\left[\sum_{k=1}^{N}\mathcal{B}_{h,k}(\lambda^*)y_h^{\frac{1}{\eta_k}+1}+\mu^*y_h\right]-M\\
    \ge &\max_{y_h\in [0,\omega_ha_{\max}]}\left[\epsilon\tilde{\psi}_{h'^*}y_h^{\frac{1}{\eta_j}+1}-My_h^{\frac{1}{\eta_i}+1}\right]-M\rightarrow\infty \text{ as }a_{\max}\rightarrow\infty.
    \end{aligned}
    \end{equation}   
    On the other hand,
    \begin{equation}\label{tax_mid2}
    \max_{c\in C,y\in Y}\mathcal{L}(c,y;\lambda^*,\mu^*)=v_1(a_{\max})\le v_2(\bar{a}).
    \end{equation}
    Hence there exists some $\bar{a}(M,\epsilon,h,h')>0$, s.t. $a_{\max}<\bar{a}(M,\epsilon,h,h')$. Since the set $\mathcal{G}$ is finite, we can define 
    $$
    \bar{a}(\epsilon)=\max_{(h,h')\in\mathcal{G}}\bar{a}(M,\epsilon,h,h').
    $$
    Therefore, when $a_{\max}>\bar{a}(\epsilon)$, the Lagrangian multipliers should satisfy $\lambda_{h,h'}^*<\epsilon$, $\forall (h,h')\in \mathcal{G}$.
    \textbf{Step 3.} We denote $(\bar{c},\bar{y})$ the solution to the planner problem \eqref{eq:tax_obj2} subject to \eqref{eq:tax_rc} and \eqref{eq:tax_ic_new_eta} with $a_{\max}\ge\bar{a}$. It is straightforward to check that, for any $\epsilon>0$, there exists $\delta >0$, such that when $\max_{(h,h')\in\mathcal{G}}\lambda_{h,h'}^*\le \delta$,
    $$
    \mathcal{L}(\bar{c},\bar{y};\lambda^*,\mu^*)\ge L(\bar{c},\bar{y};\lambda^{**},\mu^{**})-\epsilon\ge  v_2(\bar{a})-\epsilon,
    $$
    where 
    $$
    \lambda_{h,h'}^{**}=\begin{cases}
        \lambda_{h,h'}^*,&(h,h')\not \in \mathcal{G};\\
        0,&(h,h')\in \mathcal{G}.
    \end{cases}
    $$
    According to Step 2, there exists $\bar{a}(\delta)>0$, such that when $a_{\max}\ge \bar{a}(\delta)$, the Lagrangian multipliers of the planner problem \eqref{eq:tax_obj2} subject to \eqref{eq:tax_rc} and \eqref{eq:tax_ic2} satisfy $\max_{(h,h')\in\mathcal{G}}\lambda_{h,h'}^*\le \delta$. Hence 
    $$
    v_1(a_{\max})=\max_{c\in C,y\in Y}\mathcal{L}(c,y;\lambda^*,\mu^*)\ge \mathcal{L}(\bar{c},\bar{y};\lambda^*,\mu^*)\ge v_2(\bar{a})-\epsilon,
    $$
    implying that
    $$
    (v_2(\bar{a})-v_1(a_{\max}))_{+}\le \epsilon.
    $$
    Since $\epsilon$ is arbitrary, we can conclude that $\lim_{a_{\max}\rightarrow
    \infty}(v_2(\bar{a})-v_1(a_{\max}))_+=0$ and finish the proof. \end{proof}

\subsection{Proof of Remark \ref{prop:etatwomax}}\label{app:etatwomax}
To establish the remark, we first need the following result. It shows that for every $a_{\max}>0$, whenever the optimal solution of some agent $h$ involves lottery, the lottery requires the agent to earn income $a_{\max}\omega_{h}$ with some positive probability $\pi_{h}$.
\begin{prop}\label{prop:lotshape}
         We consider the planner problem \eqref{eq:tax_obj2} subject to \eqref{eq:tax_rc} and \eqref{eq:tax_ic2}  with the consumption set $C=\times_{h\in H}[\epsilon,c_{\max}]$, the output set $Y=\times_{h\in H} [0,a_{\max}w_h]$, and the utility function defined by \eqref{eq:utility_func}. Assume that the Slater's condition holds, then the optimal solution to the planner problem does not involve lotteries in consumption. If the optimal solution requires a lottery for agent $\theta$, then the lottery requires this agent to earn income $a_{\max}\omega_{h}$ with some positive probability $\pi_{h}$.
\end{prop}
\begin{proof}
    Similarly as in \eqref{equ:lag_alpha2}, we have the Lagrangian as
    $$
    \mathcal{L}(c,y;\lambda^*,\mu^*)=\sum_{h\in H}\left[\underbrace{\sum_{j=i+1}^{N}\mathcal{B}_{h,j}(\lambda^*)y_h^{\frac{1}{\eta_j}+1}-\mathcal{B}_{h,i}(\lambda^*)y_{h}^{\frac{1}{\eta_i}+1}+\sum_{j=1}^{i-1}\mathcal{B}_{h,j}(\lambda^*)y_h^{\frac{1}{\eta_j}+1}+\mu^*y_h}_{I}\right]+\mathcal{L}^{c}(c;\lambda^*,\mu^*),
    $$
    where $\mathcal{B}_{h,j}(\lambda^*)\ge 0(\forall j\ne i)$ and $a_0=\mu^*>0$. According to Lemma \ref{lem:technical2}, if $I$ has multiple maximizers $y_h^{*,1}<y_h^{*,2}$, then $y_h^{*,2}=a_{\max}\omega_h$. Hence according to Result 1 of Corollary \ref{cor:nondelot}, we can conclude this proposition.
\end{proof}

To prove the proposition, it then suffices to show that the optimal solution for agents with $\eta\ne1/8$ requires lotteries. To show this, we will need the following lemma.
\begin{lemma}\label{lem:asymptotic_lag}
    Assume that $(\bar{c},\bar{y})$ is the solution to the planner problem \eqref{eq:tax_obj2} subject to \eqref{eq:tax_rc} and \eqref{eq:tax_ic_new_eta} as $a_{\max}\rightarrow\infty$, as defined in Lemma \ref{lem:eta_largel_samedeter}.\footnote{By Proposition \ref{prop:noloteta}, the optimal solution to the problem with partial incentive constraints is always deterministic, and hence unique, implying the uniqueness of $(\bar{c},\bar{y})$.} When there exists $h,h'$ satisfying $\eta_h<\eta_{h'}$ such that 
    $$
    u_h(\bar{c}_h,\bar{y}_h)-u_h(\bar{c}_{h'},\bar{y}_{h'})< 0,
    $$
    then there exists $\bar{a}$, s.t. when $a_{\max}\ge \bar{a}$, the optimal solution for agent $h'$ in the planner problem \eqref{eq:tax_obj2} subject to \eqref{eq:tax_rc} and \eqref{eq:tax_ic2}  requires lottery.
\end{lemma}
\begin{proof}
    We prove by contradiction. If not, then there exists $h,h'$ satisfying $\eta_h<\eta_{h'}$, and $a_{\max}^n\rightarrow \infty$, s.t.
      $$
    u_h(\bar{c}_h,\bar{y}_h)-u_h(\bar{c}_{h'},\bar{y}_{h'})< 0,
    $$
    and the optimal solution with $a_{\max}=a_{\max}^n$ for agent $h'$ does not involve lottery. Assume that the Lagrangian multipliers for $a_{\max}=a_{\max}^{n}$ is $(\lambda^n,\mu^n)$. According to Step1 and Step2 of the proof for Proposition \ref{prop:etaconv}, $(\lambda^n,\mu^n)$ is bounded and $\lambda^n_{h,h'}\rightarrow 0$ for $h,h'\in H$ s.t. $\eta_h<\eta_{h'}$.

    Therefore, up to a subsequence, $(\lambda^n,\mu^n)\rightarrow (\lambda^*,\mu^*)$ s.t. $\lambda^*_{h,h'}=0$ for $h,h'\in H$ satisfying $\eta_h\le \eta_{h'}$. Then for any $a_{\max}>0$, we have
    $$
    \begin{aligned}
    \max_{c\in C,y\in \times_{h\in H}[0,a_{\max}w_h]}\mathcal{L}(c,y;\lambda^*,\mu^*)&=\lim_{n\rightarrow\infty}\max_{c\in C,y\in \times_{h\in H}[0,a_{\max}w_h]}\mathcal{L}(c,y;\lambda^n,\mu^n)\\
    &\le \lim_{n\rightarrow\infty}\max_{c\in C,y\in \times_{h\in H}[0,a^n_{\max}w_h]}\mathcal{L}(c,y;\lambda^n,\mu^n)\\
    &=\lim_{n\rightarrow\infty}v_2(a_{\max}^n),
    \end{aligned}
    $$
    where $v_2(a_{\max}^n)$ is defined as in Lemma \ref{lem:eta_largel_samedeter}. On the other hand, we have
    $$
   \begin{aligned}
       \max_{c\in C,y\in \times_{h\in H}[0,a_{\max}w_h]}\mathcal{L}(c,y;\lambda^*,\mu^*)\ge&\inf_{\lambda,\mu}\max_{c\in C,y\in \times_{h\in H}[0,a_{\max}w_h]}\mathcal{L}(c,y;\lambda^*,\mu^*)\\
       =&v_2(a_{\max}),
   \end{aligned}
   $$
   where the infimum is taken over multipliers such that $\lambda_{h,h'}=0$ for $h,h'\in H$ satisfying $\eta_h\le \eta_{h'}$.
   Therefore, according to Lemma \ref{lem:eta_largel_samedeter} there exists $\bar{a}_1>0$, s.t. when $a_{\max}>\bar{a}_1$, $$
   \max_{c\in C,y\in \times_{h\in H}[0,a_{\max}w_h]}\mathcal{L}(c,y;\lambda^*,\mu^*)=v_2(\bar{a}_1)=\lim_{n\rightarrow\infty}v_2(a_{\max}^n).
   $$
   Hence $\lambda^*,\mu^*$ is the Lagrangian multipliers for the planner problem \eqref{eq:tax_obj2} subject to \eqref{eq:tax_rc} and \eqref{eq:tax_ic2}, and
   $$
   \arg \max_{c\in C,y\in \mathbb{R}_+^{|H|}}\mathcal{L}(c,y;\lambda^*,\mu^*)=(\bar{c},\bar{y}),
   $$
   where $\bar{y}_h$ is the only zero of
   $$
   \begin{aligned}
   &\frac{d}{dy_h}\left[-\mathcal{B}_{h,i}(\lambda^*)y_h^{\frac{1}{\eta_i}+1}+\sum_{j=1}^{i-1}\mathcal{B}_{h,j}(\lambda^*)y_h^{\frac{1}{\eta_j}+1}+\mu^*y_h\right]\\
   =&-\mathcal{B}_{h,i}(\lambda^*)(\frac{1}{\eta_i}+1)y_h^{\frac{1}{\eta_i}}+\sum_{j=1}^{i-1}\mathcal{B}_{h,j}(\lambda^*)(\frac{1}{\eta_j}+1)y_h^{\frac{1}{\eta_j}}+\mu^*.
    \end{aligned}
   $$
   We define $y_{h'}^n$ as the optimal income for agent $h'$ with $a_{\max}=a_{\max}^n$. Since total welfare is $v_2(\bar{a}_1)$ when $n$ is sufficiently large, $y_{h'}^n$ are bounded by some $\bar{y}_1$. According to Lemma \ref{lem:technical2}, $y_{h'}^n$ is the smallest zero of
   $$
   \sum_{j=i+1}^{N}\mathcal{B}_{h',j}(\lambda^n)(\frac{1}{\eta_j}+1)y_{h'}^{\frac{1}{\eta_j}}-\mathcal{B}_{h',i}(\lambda^n)(\frac{1}{\eta_i}+1)y_{h'}^{\frac{1}{\eta_i}}+\sum_{j=1}^{i-1}\mathcal{B}_{h',j}(\lambda^n)(\frac{1}{\eta_j}+1)y_{h'}^{\frac{1}{\eta_j}}+\mu^{n}.
   $$
   By continuity, $y_{h'}^n\rightarrow\bar{y}_{h'}$. Assume that $c_h^n$ is the optimal consumption, $y_h^n$ is in the  support of the optimal solution, then by continuity we have $\lim_{n\rightarrow\infty}c^h_n=\bar{c}^h$, $\liminf_{n\rightarrow\infty} y_h^n\ge \bar{y}_h$\footnote{According to Lemma \ref{lem:technical2}, $y_h^n$ is either the smallest zero of$$
   \sum_{j=i+1}^{N}\mathcal{B}_{h,i}(\lambda^n)(\frac{1}{\eta_j}+1)-\mathcal{B}_{h,i}(\lambda^n)(\frac{1}{\eta_i}+1)y_{h}^{\frac{1}{\eta_i}}+\sum_{j=1}^{i-1}\mathcal{B}_{h,j}(\lambda^n)(\frac{1}{\eta_j}+1)y_{h}^{\frac{1}{\eta_j}}+\mu^{n}\omega_{h}
   $$ or $y_h^{n}=a_{\max}^n$.}. We can then verify that the incentive constraint for $h$ and $h'$ does not hold when $a_{\max}=a_{\max}^n$ as $n\rightarrow\infty$, which contradicts the assumption, and we finish the proof.
\end{proof}

The stated remark then follows from the fact that the solution to the planner problem \eqref{eq:tax_obj2} subject to \eqref{eq:tax_rc} and \eqref{eq:tax_ic_new_eta} satisfies: for any \((\omega',\eta')\) with \(\eta' \neq 1/8\), there exists \((\omega,\eta)\) such that \(\eta < \eta'\) and
\[
u_{\omega,\eta}(\bar{c}_{\omega,\eta},\bar{y}_{\omega,\eta}) - u_{\omega,\eta}(\bar{c}_{\omega',\eta'},\bar{y}_{\omega',\eta'}) < 0.
\]

\section{Worst case complexity bounds}
\label{app:complex}

\subsection{Lagrangian iteration}

Proposition \ref{thm_complexity} provides bounds on the total number of iterations needed to obtain a desired accuracy.
Next, we consider the computational complexity of each iteration. We discretize the set \(C\) into the same finite grid \(\hat{C}\) as in the linear programming approach. For each iteration, we first solve the problem
\begin{equation}\label{Lag_sub_problem}
(a^k, c^k) \in \arg \max_{a \in A, c \in \hat{C}} \left[ f(a, c) - \sum_{i=1}^{m} \lambda_i^k g_i(a, c) - \sum_{j=1}^{\ell} \gamma^k_{j, a} h_{j}(a, c) \right].
\end{equation}
The computation of this optimization problem involves evaluating the Lagrangian function at \( |A||\hat{C}| \) points, with each evaluation requiring \( O(m+\ell) \) computations. Thus, the complexity of finding the maximum point of the Lagrangian function is \( O\left(|A||\hat{C}|(m+\ell)\right) \). The complexity of updating the Lagrange multipliers is \( O\left(m+\ell\right) \). Therefore, the complexity for each iteration is:
\begin{equation}\label{iter_complexity}
O\left(|A||\hat{C}|(m+\ell)\right).
\end{equation}

Combining \eqref{eq:n_itertime_mh} and \eqref{iter_complexity}, we have the theorem for the overall computational complexity for finding an \(\epsilon\)-optimal lottery solution as follows:
\begin{theorem}\label{thm:overall_complexity}
 We take all the assumptions in Theorem \ref{math_thm_final_mh}. Additionally, we assume there exist two constants $M\ge 0, \bar{\Lambda}\ge 0$ such that $\|g_i\|_{\infty}\le M\,(i=1,\cdots,m),\,\|h_{j}\|_{\infty}\le M\,(j=1,\cdots,\ell),\,$ and   \(\|(\lambda^k,\gamma^k)\|_{\infty}+\Lambda(\lambda^1,\gamma^1)\le \bar{\Lambda} \) throughout the iterations. We take $\mu^k\sim k^{-\frac{1}{2}(1+\rho)}$ for $0<\rho\le1$. Then for $\epsilon>0$, the overall computational complexity for  finding an \(\epsilon\)-optimal lottery solution is
 \begin{equation}\label{overall_complexity}
\begin{cases}
O\left(\left(\frac{M(\frac{1}{\rho}+\bar{\Lambda})}{\epsilon}\right)^{\frac{2}{1 - \rho}} (|A||\hat{C}|(m+\ell)^{1+\frac{1}{1-\rho}})\right), & \rho < 1; \\
O\left(e^{\frac{M(\frac{1}{\rho}+\bar{\Lambda})\sqrt{m+\ell}}{\epsilon}} (|A||\hat{C}|(m+\ell))\right), & \rho = 1.
\end{cases}
\end{equation}
\end{theorem}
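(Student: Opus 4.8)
The plan is to obtain the overall complexity simply as the product of two quantities already established: the number of iterations needed to reach an $\epsilon$-optimal solution, and the arithmetic cost of a single iteration. The key structural observation is that the total work of Algorithm \ref{math_alg_mh} is $\sum_{k=1}^{n} (\text{cost of iteration } k)$, and since each iteration performs the same operations regardless of the index $k$, this sum is exactly $n$ times the uniform per-iteration cost. Hence the overall complexity factorizes, and no new analysis beyond assembling the pieces is required.

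First I would invoke Proposition \ref{thm_complexity}. Under the hypotheses of Theorem \ref{math_thm_final_mh} together with the boundedness assumptions $\|g_i\|_\infty\le M$, $\|h_j\|_\infty\le M$, and $\|(\lambda^k,\gamma^k)\|_\infty+\Lambda(\lambda^1,\gamma^1)\le\bar\Lambda$, and with the step sizes chosen as $\mu^k\sim k^{-\frac{1}{2}(1+\rho)}$ for $0<\rho\le 1$, the iterate $x^n$ is $\epsilon$-optimal as soon as $n$ satisfies \eqref{eq:n_itertime_mh}. This pins down the required iteration count as the first factor, with the two regimes $\rho<1$ and $\rho=1$ inherited directly from that proposition.

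Next I would recall the per-iteration cost established in \eqref{iter_complexity}. In each iteration, Step 1 solves the discretized problem \eqref{Lag_sub_problem} by evaluating $\mathcal{L}(a,c;\lambda^k,\gamma^k)$ at all $|A||\hat{C}|$ grid points, each evaluation costing $O(m+\ell)$ arithmetic operations, while the multiplier update in Step 2 costs only $O(m+\ell)$. Thus one iteration costs $O\bigl(|A||\hat{C}|(m+\ell)\bigr)$, and crucially this bound is independent of the iteration index $k$, so summing over the $n$ iterations multiplies it by $n$.

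Finally I would carry out the multiplication. For $\rho<1$, multiplying the iteration count $\bigl(\frac{(M/\rho)+\bar\Lambda}{\epsilon}\bigr)^{2/(1-\rho)}(m+\ell)^{1/(1-\rho)}$ by the per-iteration cost $|A||\hat{C}|(m+\ell)$ and combining the two powers of $(m+\ell)$ via $(m+\ell)^{1/(1-\rho)}\cdot(m+\ell)=(m+\ell)^{1+1/(1-\rho)}$ produces the first branch of \eqref{overall_complexity}; for $\rho=1$, multiplying $e^{((M/\rho)+\bar\Lambda)\sqrt{m+\ell}/\epsilon}$ by $|A||\hat{C}|(m+\ell)$ produces the second branch. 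Since the argument is a direct product, there is essentially no obstacle; the only points warranting care are that the per-iteration cost does not depend on $k$ (so the total is genuinely $n$ times that cost rather than a nontrivial sum) and that the exponents of $(m+\ell)$ are combined correctly. I would state these explicitly and then conclude by citing Proposition \ref{thm_complexity} and \eqref{iter_complexity} together.
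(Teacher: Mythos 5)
Your proposal is correct and follows exactly the paper's own argument: the paper likewise proves Theorem \ref{thm:overall_complexity} by multiplying the iteration bound \eqref{eq:n_itertime_mh} from Proposition \ref{thm_complexity} by the per-iteration cost \eqref{iter_complexity}, combining the powers of $(m+\ell)$ just as you do. Your explicit remark that the per-iteration cost is independent of $k$, so the total work is genuinely $n$ times that cost, is a small but welcome point of care that the paper leaves implicit.
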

\begin{proof}
 The overall complexity \eqref{overall_complexity} can be directly obtained by multiplying \eqref{eq:n_itertime_mh} and \eqref{iter_complexity}. 
\end{proof}
\subsubsection{Estimate for Partial First-order Approach}
\label{complex:pfoc}
As indicated above, a crucial part of our algorithm consists of solving the maximization problem in Step 1. So far, our complexity analysis assumed that this is done by discretizing $ C $ and simple grid search.
However, as explained in Section \ref{sec:comper}, in many economic applications, the subproblem \eqref{Lag_sub_problem} exhibits specific structural properties that allow for more efficient solution methods. For example, in the principal-agent problem  with utility functions for agents that are separable in action and consumption, the subproblem\eqref{Lag_sub_problem} satisfies the following properties
\begin{equation}\label{FOC_property}
\begin{aligned}
    g_i(a,c)&=u_{i,0}(a)+\sum_{r=1}^{d}u_{i,r}(a)w_r(c_r),\quad \forall i\in\{1,\cdots, m\},\, a\in A,\,c=(c_1,\cdots,c_d)\in C\subset \mathbb{R}^d\\
    h_j(a,c)&= v_{j,0}(a)+\sum_{r=1}^{d} v_{j,r}(a) w_r(c_r),\quad \forall j\in\{1,\cdots,\ell\},\,a\in A,\,c=(c_1,\cdots,c_d)\in C\subset \mathbb{R}^d,
\end{aligned}
\end{equation}
where $C=\times_{r=1}^{d}[c_{r,\text{min}},c_{r,\text{max}}]$, for some strictly increasing and strictly concave functions $w_r(c_r)$\footnote{This property holds because each consumption $c_r$ appears in $g$ and $h$ in a single, specifically concave form. See Section \ref{sec:pri-ag} for more details. The case there is more specific, as $w_1=\cdots=\omega_{|Q|}$.}.
The following Lemma implies that, with the separable property \eqref{FOC_property}, and the assumption that the principal's utility function $f(a,c)=v(a,c)$ is strictly decreasing and weakly concave with respect to each $c_r$, the first-order approach for \(c\) can be implemented without the need to grid the set \(C\) for solving the subproblem \eqref{Lag_sub_problem}, further enhancing the algorithm's efficiency.
\begin{lemma}\label{FOC_lemma} 
Given $\lambda> 0 \,(\lambda\ge 0 \text{ and there exists $i\in\{1,\cdots,m\}$ such that $\lambda_i>0$}),\gamma\ge 0$. We assume that the separable property \eqref{FOC_property} holds and that the principal's utility $f(a,c)=v(a,c)$ is strictly decreasing and weakly concave with respect to each $c_r$. For any $a\in A$, we consider the following system:  
\begin{equation}\label{Lag_FOC_cond}
\begin{cases}
    \mathcal{A}(a,\lambda,\gamma,r)\frac{\partial}{\partial c_r}w_r(c_r)=\frac{\partial}{\partial c_r}v(a,c),&\mathcal{A}(a,\lambda,\gamma,r)<0;\\
    c_r=c_{r,\text{min}},&\mathcal{A}(a,\lambda,\gamma,r)\ge 0,
\end{cases}
\end{equation}
where
$$
\mathcal{A}(a,\lambda,\gamma,r)=\sum_{i=1}^{m}\lambda_iu_{i,r}(a)+\sum_{j=1}^{\ell}\gamma_{j,a}v_{j,r}(a).
$$
If \eqref{Lag_FOC_cond} admits a solution $c(a)\in C$ for any $a\in A$, then the solution $(a^*,c^*)$ to the subproblem \eqref{Lag_sub_problem} should satisfy $c^*=c(a^*)$. 
\end{lemma}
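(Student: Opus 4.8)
The plan is to reduce the joint maximization in \eqref{Lag_sub_problem} to the inner maximization over $c$ for a fixed action, and then characterize that inner maximizer coordinate by coordinate. First I would substitute the separable structure \eqref{FOC_property} into the Lagrangian \eqref{defcalL}. Collecting the terms that multiply each $w_r(c_r)$ and writing $\mathcal{A}(\bar a,\gamma,r)=u_r(\bar a)-\sum_{j=1}^{\ell}\gamma_{j,\bar a}v_{j,r}(\bar a)$, the objective for a fixed action $\bar a$ becomes, up to an additive constant independent of $c$,
$$
\Phi_{\bar a}(c):=\sum_{r=1}^{d}\mathcal{A}(\bar a,\gamma,r)\,w_r(c_r)-\sum_{i=1}^{m}\lambda_i\,g_i(\bar a,c).
$$
Since the dropped terms depend only on $\bar a$, the maximizer of $\mathcal{L}(\bar a,\cdot;\lambda,\gamma)$ over $C$ coincides with the maximizer of $\Phi_{\bar a}$, and it suffices to show that this maximizer is the solution $c(\bar a)$ of \eqref{Lag_FOC_cond}; applying the result at the optimal action $\bar a=a^*$ then yields $c^*=c(a^*)$.

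Next I would split the coordinates according to the sign of $\mathcal{A}(\bar a,\gamma,r)$. For any $r$ with $\mathcal{A}(\bar a,\gamma,r)\le 0$, I would examine the partial derivative $\partial_{c_r}\Phi_{\bar a}=\mathcal{A}(\bar a,\gamma,r)\,w_r'(c_r)-\sum_{i=1}^{m}\lambda_i\,\partial_{c_r}g_i(\bar a,c)$. Because $w_r$ is strictly increasing ($w_r'>0$) and each $g_i$ is strictly increasing in $c_r$ (so $\partial_{c_r}g_i>0$), while $\lambda\ge 0$ with at least one $\lambda_i>0$, the subtracted term is strictly positive and the first term is nonpositive; hence $\partial_{c_r}\Phi_{\bar a}<0$ at every $c$. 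Thus $\Phi_{\bar a}$ is strictly decreasing in $c_r$ irrespective of the other coordinates, forcing $c_r=c_{r,\text{min}}$ at the maximizer, which matches the second branch of \eqref{Lag_FOC_cond}. This monotonicity step is exactly where the hypothesis $\lambda>0$ enters, and it is what lets us dispense with any concavity assumption on these coordinates.

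Finally, on the coordinates with $\mathcal{A}(\bar a,\gamma,r)>0$ I would argue by concavity. There $\mathcal{A}(\bar a,\gamma,r)\,w_r(c_r)$ is strictly concave (since $w_r$ is strictly concave), and $-\lambda_i g_i(\bar a,c)$ is concave because $\lambda_i\ge 0$ and $g_i$ is weakly convex in each $c_r$; summing, $\Phi_{\bar a}$—with the coordinates of the previous paragraph frozen at $c_{r,\text{min}}$—is strictly concave in the remaining variables. Its first-order condition is therefore both necessary and sufficient for a global maximum, and the maximizer is unique; the interior stationarity condition $\mathcal{A}(\bar a,\gamma,r)\,w_r'(c_r)=\sum_{i=1}^{m}\lambda_i\,\partial_{c_r}g_i(\bar a,c)$ is precisely the first branch of \eqref{Lag_FOC_cond}. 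Under the hypothesis that \eqref{Lag_FOC_cond} admits a solution $c(\bar a)\in C$, that solution is the unique maximizer of $\Phi_{\bar a}$, so the global maximizer $(a^*,c^*)$ of \eqref{Lag_sub_problem} must satisfy $c^*=c(a^*)$. I expect the main obstacle to be the genuinely non-concave coordinates (those with $\mathcal{A}\le 0$), since the objective need not be jointly concave there; the resolution is to notice that the strict monotonicity delivered by $\lambda>0$ together with the strict increase of the $g_i$ pins those coordinates to the lower boundary directly, so concavity need only be invoked on the complementary block where it genuinely holds.
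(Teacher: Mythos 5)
Your proposal is correct and follows essentially the same route as the paper's proof: fix the action, reduce to the inner maximization of the separable objective over $c$, pin the coordinates with $\mathcal{A}(a,\gamma,r)\le 0$ to $c_{r,\text{min}}$ by strict monotonicity (using $\lambda_i>0$ for some $i$ and the strict increase of the $g_i$), and then use strict concavity on the remaining block so that the first-order condition identifies the unique maximizer, which by hypothesis lies in $C$. The block decomposition you describe is exactly the paper's splitting of $c$ into $\hat{c}$ and $\tilde{c}$, so there is nothing further to reconcile.
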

\begin{proof}
With the property \eqref{FOC_property},  the subproblem \eqref{Lag_sub_problem} in the continuous consumption space $C$ can be written as
    \begin{equation}\label{Lag_sub_problem_FOC}
        (a^*, c^*) \in \arg \max_{a \in A, c \in C} \mathcal{L}(a,c;\lambda,\gamma):=v(a,c)-\mathcal{A}(a,\lambda,\gamma,0)-\sum_{r=1}^{d}\mathcal{A}(a,\lambda,\gamma,r)w_r(c_r).
    \end{equation}
    We define the function on $c$ as
    $$
    \mathcal{F}(c;a^*,\lambda,\gamma):=\mathcal{L}(a^*,c;\lambda,\gamma).
    $$
    Then \eqref{Lag_sub_problem_FOC} yields that $c^*$ is the maximizer of $\mathcal{F}$, given $a^*,\,\lambda,\,$ and $\gamma$. It is direct to show that
    \begin{equation}\label{Lag_sub_problem_FOCond}
        \frac{\partial \mathcal{F}(c;a^*,\lambda,\gamma)}{\partial c_r}=\frac{\partial\mathcal{L}(a^*,c;\lambda,\gamma)}{\partial c_r}=  \frac{\partial}{\partial c_r}v(a^*,c)-\mathcal{A}(a^*,\lambda,\gamma,r)\frac{\partial}{\partial c_r}w_r(c_r).
    \end{equation}
    When $\mathcal{A}(a^*,\lambda,\gamma,r)\ge 0$, the right-hand side of \eqref{Lag_sub_problem_FOCond} is negative, implying that $\mathcal{F}$ is strictly decreasing with respect to $c_r$, and hence $c_r^*=c_r(a^*)=c_{r,\text{min}}$. 
    
    We then define $\hat{c}:=(c_r)_{r\in\{\mathcal{A}(a^*,\lambda,\gamma,r)<0\}}$ and $\tilde{c}:=(c_r)_{r\in\{\mathcal{A}(a^*,\lambda,\gamma,r)\ge0\}}$, and hence $c=(\hat{c},\tilde{c}).$ According to the discussion above, we know that $\tilde{c}^*=\tilde{c}(a^*)=(c_{r,\text{min}})_{r\in\{A(a^*,\lambda,\gamma,r)\ge0\}}$. By the fact that $c^*$ maximizes $\mathcal{F}(c;a^*,\lambda,\gamma)$, it is then straightforward to show that $\hat{c}^*$ should maximize $\mathcal{F}(\hat{c},\tilde{c}^*;a^*,\lambda,\gamma)$, or equivalently, $\hat{c}^*$ should maximize
    \begin{equation}\label{equiv_F}
    v(a^*,\hat{c},\tilde{c}^*)-\sum_{r\in\{\mathcal{A}(a^*,\lambda,\gamma,r)<0\}}\mathcal A(a^*,\lambda,\gamma,r)w_r(c_r),
    \end{equation}
    which is a strictly concave function in $\hat{c}$. Therefore, the following first order condition to \eqref{equiv_F},
    $$
    \mathcal{A}(a^*,\lambda,\gamma,r)\frac{\partial}{\partial c_r}w_r(c_r)=\frac{\partial}{\partial c_r}v(a^*,\hat{c},\tilde{c}^*),\quad r\in\{\mathcal{A}(a^*,\lambda,\gamma,r)<0\},
    $$
    admits at most one solution $\hat{c}(a^*)$. From the assumption we know that $(\hat{c}(a^*),\tilde{c}^*)\in C$, then we have $c^*=(\hat{c}(a^*),\tilde{c}^*)$. Hence, we have shown that $(a^*,c^*)$ satisfies the system \eqref{Lag_sub_problem_FOCond}.
\end{proof}
 If all assumptions in Lemma \ref{FOC_lemma} hold, then \eqref{Lag_FOC_cond} can be applied for solving the subproblem \eqref{Lag_sub_problem}. To be precise, the subproblem \eqref{Lag_sub_problem} can be solved by the following three-step procedure:
\begin{enumerate}
    \item Computing $\mathcal{A}(a,\lambda,\gamma,r)$ for all $a\in A$ and $r\in\{1,\cdots,d\}$;
    \item Determining $c(a)$ for all $a\in A$ by
    \begin{equation}\label{FOC_equation}
    \begin{cases}
     \mathcal{A}(a,\lambda,\gamma,r)\frac{\partial}{\partial c_r}w_r(c_r(a))=\frac{\partial}{\partial c_r}v(a,c(a)), &\mathcal{A}(a,\lambda,\gamma,r)<0;\\
    c_r=c_{r,\text{min}},&\mathcal{A}(a,\lambda,\gamma,r)\ge 0.
     \end{cases}
    \end{equation}
    \item Determining optimal $a\in A$ for the subproblem \eqref{Lag_sub_problem} with $c=c(a)$.
\end{enumerate}
If additionally the principal's utility function $v$ is linear with respect to $c$\footnote{This assumption holds for a broad class of problems, including the principal-agent problem discussed in the section \ref{sec:pri-ag}, where the principal's utility is defined as the difference between the total output and the total consumption amount, see Section \ref{sec:pri-ag} for details. In general, equation \eqref{FOC_equation} can be solved using standard methods, such as Newton's method. In this context, the complexity of solving the nonlinear equation \eqref{FOC_equation} for a given \(a \in A\) is polynomial in \(d\), which is still significantly more efficient than discretizing the consumption set. The resulting complexity analysis would follow a structure similar to Theorem \ref{thm:overall_complexity_FOC}, where \( |\hat{C}| \) is replaced by an expression in terms of \(d\). For simplicity, we omit a rigorous analysis of these cases.}, then  $\frac{\partial}{\partial c_r}v(a,c)=\mathcal{B}(a,r)$ is independent of $c$ and \eqref{FOC_equation} can be solved by
$$
\begin{cases}
c_r=\left(\frac{\partial w_r}{\partial c_r}\right)^{-1}\left(\frac{\mathcal{B}(a,r)}{\mathcal{A}(a,\lambda,\gamma,r)}\right), &\mathcal{A}(a,\lambda,\gamma,r)<0;\\
c_r=c_{r,\text{min}},&\mathcal{A}(a,\lambda,\gamma,r)\ge 0.
\end{cases}
$$
It is directly to check that the complexity for this three-step procedure is
\begin{equation}\label{iter_complexity_FOC}
O\left(\underbrace{d(m+\ell) |A|}_{\text{STEP1}}+\underbrace{d|A|}_{\text{STEP2}}+\underbrace{d(m+\ell )|A|}_{\text{STEP3}}\right)\sim O\left(d (m+\ell )|A|\right), 
\end{equation}
where $d$ is the dimension of the consumption space $C$. 

We combine \eqref{iter_complexity_FOC} and proposition \ref{thm_complexity} to obtain the complexity estimate for the first-order approach as following:
\begin{theorem}\label{thm:overall_complexity_FOC}
We take all the assumptions in Theorem \ref{math_thm_final_mh}. Additionally, we assume there exist two constants $M\ge 0, \bar{\Lambda}\ge 0$ such that $\|g_i\|_{\infty}\le M\,(i=1,\cdots,m),\,\|h_{j}\|_{\infty}\le M\,(j=1,\cdots,\ell),\,$ and   \(\|(\lambda^k,\gamma^k)\|_{\infty}+\Lambda(\lambda^1,\gamma^1)\le \bar{\Lambda} \) throughout the iterations. Also, we assume that each $f(a,c)=v(a,c)$ is linear in $c$. We take $\mu^k\sim k^{-\frac{1}{2}(1+\rho)}$ for $0<\rho\le1$. Then for $\epsilon>0$, the overall computational complexity for  finding an \(\epsilon\)-optimal lottery solution is
 \begin{equation}\label{overall_complexity_FOC}
\begin{cases}
O\left(\left(\frac{M(\frac{1}{\rho}+\bar{\Lambda})}{\epsilon}\right)^{\frac{2}{1 - \rho}} d |A|(m+\ell )^{1+\frac{1}{1-\rho}}\right), & \rho < 1; \\
O\left(e^{\frac{M(\frac{1}{\rho}+\bar{\Lambda})\sqrt{m+\ell}}{\epsilon}} d |A|(m+\ell )\right), & \rho = 1.
\end{cases}
\end{equation}
where $d$ is dimension of the consumption space $C$.
\end{theorem}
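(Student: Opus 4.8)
The plan is to obtain the overall complexity as the product of two quantities that are already available: the number of iterations required for $\epsilon$-optimality, and the per-iteration cost of the first-order subroutine. Since Theorem~\ref{thm:overall_complexity_FOC} differs from Theorem~\ref{thm:overall_complexity} only in how the maximization in Step~1 of Algorithm~\ref{math_alg_mh} is solved, the iteration count is inherited verbatim from Proposition~\ref{thm_complexity}, while the per-iteration cost is replaced by the cheaper first-order estimate \eqref{iter_complexity_FOC}.

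First I would invoke Proposition~\ref{thm_complexity}. Its hypotheses are exactly the boundedness conditions $\|g_i\|_\infty \le M$, $\|h_j\|_\infty \le M$, and $\|(\lambda^k,\gamma^k)\|_\infty + \Lambda(\lambda^1,\gamma^1) \le \bar{\Lambda}$, together with the step-size choice $\mu^k \sim k^{-\frac{1}{2}(1+\rho)}$, all of which are assumed in the present statement. Proposition~\ref{thm_complexity} then guarantees that $x^n$ is $\epsilon$-optimal once $n$ satisfies the lower bound \eqref{eq:n_itertime_mh}, in both regimes $\rho < 1$ and $\rho = 1$.

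Second I would establish the per-iteration complexity of the first-order approach, which is the only genuinely new ingredient relative to Theorem~\ref{thm:overall_complexity}. Under the separability \eqref{FOC_property} and the additional assumption that each $g_i$ is linear in $c$, Lemma~\ref{FOC_lemma} applies and reduces the subproblem \eqref{Lag_sub_problem} to the three-step procedure culminating in the closed-form expression $c_r = \left(\frac{\partial w_r}{\partial c_r}\right)^{-1}\!\left(\frac{\mathcal{B}(a,r,\lambda)}{\mathcal{A}(a,r,\gamma)}\right)$. Counting the arithmetic in each of the three steps gives the per-iteration cost \eqref{iter_complexity_FOC}, namely $O(d(m+\ell)|A|)$. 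The point requiring the most care, which I would lean on Lemma~\ref{FOC_lemma} to supply, is the justification that solving the first-order conditions \eqref{Lag_FOC_cond} for each $a \in A$ and then comparing across $A$ genuinely identifies the global maximizer of $\mathcal{L}(\cdot;\lambda^k,\gamma^k)$ over $A \times C$; this is where the strict concavity of $w_r$ and the monotonicity and weak convexity of the $g_i$ enter, and it is exactly what lets us replace the $|\hat{C}|$ factor of the grid-search estimate \eqref{iter_complexity} by the dimension $d$.

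Finally I would multiply the iteration bound \eqref{eq:n_itertime_mh} by the per-iteration cost \eqref{iter_complexity_FOC}. For $\rho < 1$ this gives $\left(\frac{M/\rho + \bar{\Lambda}}{\epsilon}\right)^{\frac{2}{1-\rho}}(m+\ell)^{\frac{1}{1-\rho}} \cdot O\!\left(d(m+\ell)|A|\right)$, and combining the two powers of $(m+\ell)$ via $(m+\ell)^{\frac{1}{1-\rho}}\cdot(m+\ell) = (m+\ell)^{1+\frac{1}{1-\rho}}$ yields the claimed $O\!\left(\left(\frac{M/\rho + \bar{\Lambda}}{\epsilon}\right)^{\frac{2}{1-\rho}} d|A|(m+\ell)^{1+\frac{1}{1-\rho}}\right)$; for $\rho = 1$ the product gives $O\!\left(e^{(M/\rho + \bar{\Lambda})\sqrt{m+\ell}/\epsilon}\, d|A|(m+\ell)\right)$. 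Both match \eqref{overall_complexity_FOC}. The theorem is thus essentially immediate once \eqref{iter_complexity_FOC} is in hand; the only remaining subtlety is the routine bookkeeping of the powers of $(m+\ell)$.
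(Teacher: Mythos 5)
Your proposal is correct and follows exactly the paper's own argument: inherit the iteration count \eqref{eq:n_itertime_mh} from Proposition \ref{thm_complexity}, justify the per-iteration cost \eqref{iter_complexity_FOC} via Lemma \ref{FOC_lemma} and the three-step first-order procedure, and multiply the two bounds. The paper's proof is precisely this one-line multiplication (with the per-iteration analysis given in the surrounding text), so there is nothing to add.
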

\begin{proof}
The overall complexity \eqref{overall_complexity_FOC} can be directly obtained by multiplying \eqref{eq:n_itertime_mh} and \eqref{iter_complexity_FOC}.   
\end{proof}
\begin{remark}
    By using the first-order approach, we can replace $|\hat{C}|$ in Theorem \ref{thm:overall_complexity} by $d$ in the complexity estimate.
\end{remark}

\subsection{Comparison to Existing Methods}\label{sect_Bcompare}
In the literature, the most commonly used method for solving the lottery problem \eqref{math_lot_mh} is to formulate it as a linear program and to use standard algorithms for linear programming \citep{prescott1984general,prescott2004computing}. To use linear programming, we first need to discretize the set $C$ into a finite grid $\hat{C}$. This transforms \eqref{math_lot_mh} into a finite-dimensional problem:
\begin{equation}\label{math_lot_mh_lp}
\begin{aligned}
    &\max_{\hat{x}\in \mathcal{P}(A\times \hat{C})}\sum_{a\in A}\sum_{\,c\in \hat{C}}f(a,c)\hat{x}(a,c),\\
    \textbf{s.t. }&\sum_{a\in A}\sum_{c\in \hat{C}}g_i(a,c)\hat{x}(a,c)\le0\text{ }(i\in\{1,\cdots,m\}),\\
    &\sum_{c\in \hat{C}} h_j(a,c)\hat{x}(a,c)\le0 \text{ }(j\in\{1,\cdots,l\},\,a\in A).
\end{aligned}
\end{equation}

This formulation constitutes a standard linear programming problem, where $\hat{x}$ is a vector of dimension $|A||\hat{C}|$, and the number of inequality constraints is $\ell|A| + m$. According to classical theory on interior-point algorithms for linear programming \citep{nocedal2006numerical}, the computational complexity of solving this problem is $O\left(\log(1/\delta) \cdot (|A||\hat{C}| + \ell|A| + m)^{3.5}\right)$, where $\delta$ denotes the error for complementary conditions in interior-point algorithms.\footnote{In \cite{prescott1984general,prescott2004computing}, the Dantzig-Wolfe decomposition is employed to enhance the linear programming algorithm, addressing both efficiency and memory concerns. However, this algorithm requires the computation of extreme points of the linear constraints and the construction of iterations based on these extreme points. This suggests that the method effectively relies on an extended version of the simplex algorithm, which lacks well-established theoretical complexity results. Consequently, to provide a theoretical comparison of computational complexity, we focus on traditional interior-point methods.}

Comparing the complexity of our approach in Theorem \ref{thm:overall_complexity} to that of the linear programming method, we see that 
in the case that $|\hat{C}|\sim |A|\sim \ell\gg m$, 
$$
|A||\hat{C}|(m + \ell)^{1+\frac{1}{1-\rho}}\sim |A|^{3+\frac{1}{1-\rho}}\ll |A|^7\sim (|A||\hat{C}| + \ell|A| + m)^{3.5} ,
$$
for $\rho$ close to 0. Considering the fact that Algorithm \ref{math_alg_mh} actually needs less iterations than the estimation in Proposition \ref{thm_complexity}, the Lagrangian iteration algorithm can perform significantly better than using linear programming directly when facing problems with large $|A|$ and $|\hat{C}|$ and the required accuracy $\epsilon$ is not very stringent. 
We note that for a problem that is discretized from a continuous problem, the scales \( |A| \) and \( |\hat{C}| \) are chosen in relation to \( \epsilon \), such that the exact solution to the discretized problem is an \( \epsilon \)-optimal solution to the continuous problem. Therefore, the complexity term involving \( \epsilon \), specifically \( \left(\frac{\frac{M}{\rho} + \bar{\Lambda} + 1}{\epsilon}\right)^{\frac{2}{1 - \rho}} \), cannot be neglected when analyzing the overall complexity of our methods. However, for problems with high-dimensional action and consumption sets, this complexity term is small compared to the term we analyzed earlier, namely $$ |A| |\hat{C}| (m + \ell)^{1 + \frac{1}{1 - \rho}} .$$

For instance, consider a problem with a \( d \)-dimensional action set and a \( d \)-dimensional consumption set, i.e., \( A = \times_{i=1}^{d} A_i \) and \( C = \times_{i=1}^{d} C_i \). A typical choice would be \( |A| = \prod_{i=1}^{d} |A_i| \sim \left( \frac{1}{\epsilon} \right)^{\frac{d}{2}} \) and \( |\hat{C}| = \prod_{i=1}^{d} |\hat{C}_i| \sim \left( \frac{1}{\epsilon} \right)^{\frac{d}{2}} \). \footnote{Heuristically, we define the function \( f(x_1, \dots, x_d) = -\frac{1}{d} \sum_{k=1}^{d} x_k^2 \) over the domain \( [-1, 1]^d \). The maximum value of \( f \) should be 0. If we choose \( h = \frac{1}{2N+1} \) and evaluate \( f \) on the discrete set \( \{ (2k+1)h : -N-1 \le k \le N \}^d \), the maximum value of \( f \) should be \( -h^2 \). Thus, to achieve an \( \epsilon \)-optimal solution, we select \( h \sim \sqrt{\epsilon} \), leading to the number of points \( |\{ (2k+1)h : -N-1 \le k \le N \}^d| \sim \left( \frac{1}{\epsilon} \right)^{\frac{d}{2}} \).} Therefore, when $d$ is large, 
$$
 \left(\frac{\frac{M}{\rho} + \bar{\Lambda} + 1}{\epsilon}\right)^{\frac{2}{1 - \rho}} \ll \left(\frac{1}{\epsilon}\right)^{\frac{d}{2}} =|A|,
$$
and the complexity of our methods is smaller than the complexity of the linear programming method. Furthermore, for problems satisfying assumptions in Theorem \eqref{thm:overall_complexity_FOC}, such as the moral hazard problems, the first-order approach for \(c\) can be implemented without the need to grid the set \(C\), further improving the algorithm's efficiency by a factor of approximately $\frac{|\hat{C}|}{d}$, where $d$ is the dimension of $C$.

\subsubsection{Decomposability in linear programming}
\label{sec: decoLP}
 For a decomposable problem, the linear programming formulation \eqref{math_lot_mh_lp} can be equivalently rewritten as
\begin{equation}\label{math_lot_mh_lp_decom}
    \begin{aligned}
    &\max_{\left(\hat{x}_k\in \mathcal{P}(A\times \hat{C}_k)\right)_{1\le k\le d}}\sum_{a\in A}\sum_{k=1}^d\sum_{c_k\in \hat{C}_k}f_k(a,c_k)\hat{x}_k(a,c_k),\\
    \textbf{s.t. }&\sum_{a\in A}\sum_{k=1}^{d}\sum_{c_k\in \hat{C}_k}g_{i,k}(a,c_k)\hat{x}_k(a,c_k)\le0\text{ }(i\in\{1,\cdots,m\}),\\
    &\sum_{k=1}^d\sum_{c_k\in \hat{C}_k} h_{j,k}(a,c_k)\hat{x}_k(a,c_k)\le0 \text{ }(j\in\{1,\cdots,\ell\},\,a\in A),\\
    &\sum_{c_1\in\hat{C}_1}\hat{x}_1(a,c_1)=\cdots=\sum_{c_d\in\hat{C}_d}\hat{x}_d(a,c_d):=\pi(a),(a\in A).
\end{aligned}
\end{equation}
where the joint distribution $\hat{x}(a,c)$ is recovered as $\hat{x}(a,c) = \pi(a)\prod_{k=1}^{d} \hat{x}_k(c_k|a)$. A   concrete illustration of a similar reformulation is given in Section \ref{sec:pri-ag}.
\end{document}